\documentclass[%
 r eprint,superscriptaddress,
 amsmath,amssymb,
 aps,
pra,twocolumn
]{revtex4-2}

\usepackage{graphicx}
\usepackage{dcolumn}
\usepackage{bm}
\usepackage{amsthm} 
\usepackage{hyperref}
\hypersetup{
    colorlinks=true,
    linkcolor=blue,
    filecolor=magenta,      
    urlcolor=cyan,
    pdftitle={Overleaf Example},
    pdfpagemode=FullScreen,
    }

\usepackage{algorithm}
\usepackage{algpseudocode}
 \makeatletter
  \renewcommand{\ALG@name}{Protocol}
  \makeatother

\usepackage{bm}
\usepackage{xcolor}
\usepackage{amsmath,amssymb}
\usepackage{caption}
\usepackage{subcaption}
\usepackage[shortlabels]{enumitem}
\usepackage{mdframed}

\usepackage{placeins} 

\usepackage{tikz}

\usetikzlibrary{calc, positioning, arrows.meta}
\definecolor{boxBorder}{HTML}{9E9E9E}
\definecolor{measGreenLight}{HTML}{E8F5E9}
\definecolor{measGreen}{HTML}{2E7D32}
\definecolor{nodeDark}{HTML}{212121}
\definecolor{edgeGray}{HTML}{757575}
\definecolor{crimsonRed}{HTML}{C62828}

\tikzset{
  stepBox/.style={
    rounded corners=8pt,
    draw=boxBorder,
    line width=1.1pt,
    fill=white,
    minimum width=4.6cm,
    minimum height=3.8cm
  },
  stepNum/.style={
    font=\bfseries\normalsize,
    text=nodeDark,
    inner sep=0pt,
    anchor=north east
  },
  captionText/.style={
    font=\bfseries\footnotesize,
    text=nodeDark,
    align=center,
    text width=4.2cm
  },
  pics/meter/.style={
    code={
      \draw[rounded corners=1.5pt, fill=measGreenLight, draw=measGreen, line width=0.8pt]
        (-0.25,-0.18) rectangle (0.25,0.18);
      \draw[line width=0.6pt, draw=measGreen!80!black]
        (-0.17, 0.01) arc (150:30:0.2);
      \draw[line width=0.8pt, draw=measGreen!90!black, -{Latex[scale=0.6]}]
        (0,-0.1) -- (0.09, 0.12);
      \fill[measGreen!90!black] (0,-0.1) circle (0.7pt);
    }
  },
  gnode/.style={circle, fill=nodeDark, inner sep=0pt, minimum size=6pt},
  baseEdge/.style={line width=1.1pt, draw=edgeGray},
  treeEdge/.style={line width=2.8pt, draw=measGreen},
  flowArrow/.style={-{Latex[length=3.8mm, width=2.5mm]}, line width=1.6pt, draw=nodeDark!80}
}

\usetikzlibrary{quantikz2}
\usetikzlibrary{graphs}
\usetikzlibrary{fit}

\usepackage{bbm}
\usepackage{physics}
\usepackage{cleveref}

\newtheorem{theorem}{Theorem}
\newtheorem{remark}{Remark}
\newtheorem{lemma}{Lemma}
\newtheorem{proposition}{Proposition}

\newtheorem{corollary}{Corollary}

\DeclareMathOperator{\perm}{perm}

\newcommand{\U}[1]{\widehat{\mathcal{#1}}}

\renewcommand{\H}{\mathcal{H}}

\newcommand{\id}{\mathbbm{1}}

\newcommand{\E}{\mathbb{E}}
\newcommand{\hext}{\H_{\mathrm{ext}}}
\newcommand{\hint}{\H_{\mathrm{int}}}

\newcommand{\Omegaext}{\Omega_{\mathrm{ext}}} 
\newcommand{\TVD}{\mathrm{TVD}} 

\newcommand{\bias}{\mathrm{Bias}}
\newcommand{\variance}{\mathrm{Var}}
\newcommand{\equivclass}{\mathcal{G}}

\usepackage{xcolor}
\definecolor{myDarkBlue}{RGB}{0, 0, 139}

\definecolor{darkgreen}{HTML}{006400}

\preprint{APS/123-QED}

\begin{document}

\begin{abstract}
Partial distinguishability affects multiphoton interference, which is central to large-scale photonic experiments such as Boson Sampling and to applications of multiphoton states across quantum technologies. Characterizing the internal degrees of freedom responsible for this behavior is challenging: in many scenarios they cannot be accessed directly and even their dimension or physical nature may be unknown \emph{a priori}. Here we introduce Gram-matrix tomography for independently prepared photons with pure or nearly pure internal states. For pure states, the Gram matrix contains all information required to predict the statistics of experiments that are insensitive to the internal degrees of freedom. We reconstruct this matrix, up to physically irrelevant phase choices, from a quadratic number of measurable Bargmann invariants (quantities encoding overlap magnitudes and collective phases). We access these quantities using two complementary approaches: targeted Fourier interferometers and two- and three-body photon-number correlations measured in a single random interferometer. We derive explicit error bounds that control the total-variation distance between predicted and actual outcome distributions for any subsequent experiment that is insensitive to the internal states. These guarantees account for finite sampling and deviations from purity. Crucially, the total sample complexity of the protocol is independent of the internal Hilbert-space dimension and polynomial under mild assumptions on overlaps of internal states. Our protocol uses standard interferometric  measurements and turns them into a robust and operationally grounded characterization of partially distinguishable photons.
\end{abstract}

\title{Characterizing Photonic Partial Distinguishability\\ Through Gram-Matrix Tomography} 

\author{Marcin Kotowski}
\affiliation{Center for Quantum Enabled-Computing, Center for Theoretical Physics of the Polish Academy of Sciences, Al. Lotników 32/46, 02-668 Warsaw
, Poland}

\author{Marco Robbio}
\affiliation{International Iberian Nanotechnology Laboratory (INL), Av. Mestre Jos\'e Veiga, 4715-330 Braga, Portugal}
\affiliation{Centre for Quantum Information and Communication, \'Ecole polytechnique de Bruxelles, CP 165/59, Universit\'e libre de Bruxelles, 1050 Brussels, Belgium}

\author{Oliver Reardon-Smith}
\affiliation{Center for Quantum Enabled-Computing, Center for Theoretical Physics of the Polish Academy of Sciences, Al. Lotników 32/46, 02-668 Warsaw
, Poland}

\author{Benoit Seron}
\affiliation{International Iberian Nanotechnology Laboratory (INL), Av. Mestre Jos\'e Veiga, 4715-330 Braga, Portugal}

\author{Leonardo Novo}
\affiliation{International Iberian Nanotechnology Laboratory (INL), Av. Mestre Jos\'e Veiga, 4715-330 Braga, Portugal}

\author{Ernesto F. Galvão}
\affiliation{International Iberian Nanotechnology Laboratory (INL), Av. Mestre Jos\'e Veiga, 4715-330 Braga, Portugal}
\affiliation{Instituto de F\'isica, Universidade Federal Fluminense, Av. Gal. Milton Tavares de Souza s/n, Niter\'oi, RJ, 24210-340, Brazil}

\author{Micha{\l} Oszmaniec}
\affiliation{Center for Quantum Enabled-Computing, Center for Theoretical Physics of the Polish Academy of Sciences, Al. Lotników 32/46, 02-668 Warsaw
, Poland}

\maketitle


\section{Introduction}
\label{sec:introduction}

Many photonic quantum technologies rely on interference between photons whose internal states cannot be perfectly controlled. Unwanted partial distinguishability can introduce errors in linear-optical quantum gates~\cite{rohdeModeMismatch2006}, reduce the sensitivity of quantum-enhanced interferometry~\cite{jachuraModeEngineering2016} and reduce quantum computational advantage schemes such as Boson Sampling to efficient classical computations ~\cite{renemaPartialDistinguishability2018}. Characterizing these effects is therefore essential for assessing photonic experiments and improving photonic devices. Yet the internal degrees of freedom responsible for them are often inaccessible: photons may differ in polarization, frequency, or arrival time, while the available interferometers and detectors manipulate and resolve only spatial modes. The observed interference patterns carry information about the internal states, but extracting that information requires measurements and reconstruction methods adapted to this restricted access.

The theoretical description of partial distinguishability was developed in seminal works ~\cite{tichySamplingPartiallyDistinguishable2015,shchesnovichPartialIndistinguishabilityTheory2015a}. In these works it was observed that, although Hong--Ou--Mandel visibilities characterize pairwise interference, multiphoton experiments also depend on relational information involving larger groups of photons. For independent preparations, this information is encoded in \emph{Bargmann invariants}, the multivariate traces $\operatorname{Tr}(\rho_1\cdots\rho_k)$ of the individual internal states $\{\rho_i\}$ ~\cite{oszmaniecMeasuringRelationalInformation2021}. For pure states, these quantities are cyclic products of inner products; their arguments are the collective phases that remain unchanged when individual state vectors are rephased~\cite{shchesnovichCollectivePhasesIdentical2018}. Experiments have demonstrated how such phases alter multiphoton interference without changing pairwise visibilities~\cite{menssenDistinguishabilityManyParticleInterference2017,jonesInterferingDistinguishablePhotons2020}, and how mixedness introduces further features beyond the two-photon description~\cite{jonesDistinguishabilityMixednessQuantum2023}.

These findings have motivated the development of experimentally accessible ways to characterize multiphoton sources without resolving their internal states. Indicators of genuine multiphoton indistinguishability and their experimental implementations quantify the collective character of interference within specified incoherent models for the photon sources~\cite{Brod_GI,giordaniExperimentalQuantification2020,pontQuantifyingPhotonIndistinguishability2022}, while bunching and photon-number fluctuations provide further tests that tolerate imperfect control of the interferometer~\cite{rodariSemideviceIndependentCharacterization2024}. More recently, protocols using Fourier interferometers have reduced the measurement cost of such benchmarks~\cite{sanzExponentialImprovementBenchmarking2026,schadow2026certification} for restricted classes of input states,  and finally  operational certification methods have established guarantees on the proximity to perfectly indistinguishable preparations~\cite{OptimalInterferometricCertification}. These methods can certify whether multiple photons from a source are sufficiently indistinguishable for a given purpose. A complementary task, which has so far attracted little attention, is to fully recover the information needed to describe partially distinguishable photons themselves, including their collective phases, and to quantify the accuracy of the resulting predictions.

Rigorous quantum-state tomography provides a model for this task: a reconstruction of a quantum state should come with a quantitative relation between the number of collected samples and the estimation error. For finite-dimensional systems, including registers of distinguishable qubits and qudits, sample-complexity bounds quantify the resources needed to achieve a prescribed trace-distance accuracy~\cite{haahSampleOptimalTomography2017,odonnellEfficientQuantumTomography2016}. Low-rank tomography and compressed sensing further exploit structure to reduce measurement requirements while controlling statistical error~\cite{grossQuantumStateTomography2010,flammiaQuantumTomographyCompressed2012}. Similar guarantees have been developed for energy-constrained continuous-variable systems and, in particular, bosonic Gaussian states~\cite{meleLearningQuantumStates2025,bittelOptimalEstimatesTrace2025}. 

\begin{figure*}[htbp!]
    \centering
    \includegraphics[width=\textwidth]{figures/main_idea.pdf}
    \caption{A pictorial description of the main problem studied  together with the key results of the paper. \textbf{Left:} For $n$ single photons with pure internal states $|\psi_1\rangle, \dots, |\psi_n\rangle$, all relational information governing multiphoton interference is encoded in the $n \times n$ Gram matrix $G_{ij} = \langle \psi_i | \psi_j \rangle$. The tomographic objective is to reconstruct $G$ without resolving the internal degrees of freedom. \textbf{Right:} Two complementary experimental protocols to extract the needed Bargmann invariants: (a)~\textit{Fourier method}: Uses $k$-mode Fourier interferometers $F_k$ to directly measure cyclic invariants $\mathrm{Tr}[\psi_{k_1}\dots\psi_{k_l}]$, see Protocol~\ref{Protocol: Fourier}.  (b)~\textit{One-shot method}: Collects output samples from a single Haar-random interferometer $U \sim \mathrm{Haar}(m)$ to estimate photon-number 2-body and 3-body correlators, obtaining the invariants via linear inversion, see Protocol~\ref{Protocol: One-shot}. Both protocols reconstruct an estimate $\widehat{G}$ through classical postprocessing. }
\label{fig:overview}

\end{figure*}

For partially distinguishable photons, the corresponding target is the \emph{external state}~\cite{adamsonMultiparticleStateTomography2007,adamsonDetectingHiddenDifferences2008,englbrechtIndistinguishabilityIdenticalBosons2024,OptimalInterferometricCertification}. It is obtained by tracing out the internal degrees of freedom from the physical multiphoton state, leaving a density operator on the external, or path, degrees of freedom. This state, also called the visible state or accessible state, determines the statistics of every experiment whose operations and measurements are insensitive to the internal degrees of freedom. Consequently, reconstructing it in trace distance controls the total-variation error of all such measurement statistics. Existing work established that reconstruction of internal degrees of freedom can be carried out in principle~\cite{oszmaniecMeasuringRelationalInformation2021} from estimates of Bargman invariants. What has been missing in this setting is a reconstruction guarantee that connects finite-data estimates  obtained without resolving the internal degrees of freedom to the trace-distance accuracy of the associated external state. This reconstruction problem and our two experimental approaches are illustrated in Fig.~\ref{fig:overview}.

In this work, we fill in this gap by developing  a reconstruction theory for independently prepared photons with pure or nearly pure internal states, based solely on linear optics and photon-number resolving detectors. The protocols are furthermore efficient (in number of photons, irrespectively of the dimension and character of internal degrees of freedom) under explicit assumptions on the internal state overlaps. For $n$ photons entering distinct input modes with pure internal states $\ket{\psi_1},\ldots,\ket{\psi_n}$, all relevant information is contained in their Gram matrix
\begin{equation}
    G_{ij}=\langle\psi_i|\psi_j\rangle,
    \qquad G\succeq0,\qquad G_{ii}=1 .
    \label{eq:intro-gram}
\end{equation}
In particular, the external state, which specifies the outcome probabilities of linear optical experiments with photon number detectors, is determined by $G$. Independent phase choices for the vectors transform $G$ by conjugation with a diagonal unitary and leave the external state unchanged. Our tomography procedure reconstructs $G$ up to these unphysical phase choices from $O(n^2)$ selected Bargmann invariants (Sec.~\ref{sec:protocol-main}). For pure inputs, our Fourier measurement method achieves Frobenius error $\|G-\widehat G\|_{\mathrm F}\leq\delta$ at fixed confidence using $O(n^4/\delta^2)$ total experimental runs and polynomial classical postprocessing, provided the required pairwise and cycle invariants have magnitudes bounded below by a positive constant (Proposition~\ref{th:main-MSE-simple}). More generally, we give explicit bounds retaining the dependence on these invariants and on the measurement variances, but which are completely independent of the internal Hilbert-space dimension (Appendix~\ref{sec:estimators-error}).

The reconstruction guarantee has a direct operational meaning. We show that bosonic symmetrization acts as a mechanism translating a reconstruction for the internal state Gram matrix $G$ [Eq. (\ref{eq:intro-gram})] into a reconstruction of the state $\Omegaext(G)$ of external degrees of freedom, leading to a bound on the total variation distance (TVD) of \textit{any} linear-optical experiment insensitive to internal degrees of freedom (Proposition~\ref{prop:tvd-continuity} and Corollary~\ref{cor: total variation distance}):
\begin{align}
    \TVD\!\left(p_{\mathcal M}(G),p_{\mathcal M}(\widehat G)\right)
    &\leq \frac12\big\|\Omegaext(G)-\Omegaext(\widehat G)\big\|_1\notag\\
    &\leq \frac{\sqrt n}{2}\big\|G-\widehat G\big\|_{\mathrm F},
    \label{eq:intro-operational}
\end{align}
where $p_{\mathcal M}$ is the outcome distribution of \emph{any} measurement on the external degrees of freedom and the Gram matrices are compared after fixing their phase choices. Crucially, this supplies the same type of operational accuracy guarantee as trace-distance tomography for a system of distinguishable particles, as the external state $\Omegaext(G)$ is the relevant object for photonic experiments. The preceding sample bound gives $O(n^5/\delta^2)$ runs for external-state trace-distance accuracy $\delta$ under the same assumptions (Theorem~\ref{th:protocol} and the discussion following Theorem~\ref{th:protocol2}). A complementary $\Omega(n^2/\delta^2)$ worst-case sample lower bound shows that quadratic growth with photon number is unavoidable for general external-state reconstruction, even with arbitrary collective measurements on the external degrees of freedom (Theorem~\ref{th:sample-lower-bound}).

We extend the analysis to nearly pure product preparations $\rho=\rho_1\otimes\cdots\otimes\rho_n$, separating finite-sample statistical error from the systematic effect of impurity. The reconstructed Gram matrix approximates that of the dominant pure components of the individual states. If $\operatorname{Tr}(\rho_i^2)\geq1-\varepsilon$, replacing the actual source by these pure components contributes at most $n\varepsilon$ to the external-state trace distance [Eq.~\eqref{eq:trace-inequality-external}]. Combining this approximation with our reconstruction bounds gives operational guarantees that remain meaningful for imperfectly pure sources (Theorem~\ref{th:protocol2}).

To acquire the reconstruction data, we develop two complementary strategies using passive linear optics and photon counting (Fig.~\ref{fig:overview}). The first uses Fourier interferometers to measure the selected Bargmann invariants directly~\cite{novoNativeLinearopticalProtocol2026}, including higher-order invariants when the overlap structure requires them (Protocol~\ref{Prot: Fourier}). The second obtains second- and third-order invariants from photon-number correlations measured at the output of a single generic interferometer, provided the associated linear systems are invertible (Protocol~\ref{Prot: Oneshot}). These orders suffice whenever one reference photon has nonzero overlap with every other photon and the use of a single configuration reduces the need for experimental reconfiguration. The trade-off is that the correlations must be converted into invariants by linear inversion, whose conditioning affects the reconstruction error. We prove conditioning bounds for the second-order inversion in a dilute regime (Theorem~\ref{th:main-condition}) and assess both inversion problems numerically, including at substantially smaller mode numbers (Appendix~\ref{sec: Conditioning number}).

The remainder of the paper is organized as follows. Section~\ref{sec:partialdistTheory} introduces the description of partial distinguishability and the external state. Section~\ref{sec: Tomography} presents the reconstruction protocol, its error guarantees, and the sample-complexity lower bound. Section~\ref{sec:relational-info} describes the Fourier and one-shot measurement methods. The numerical results then compare the two methods for pure and mixed inputs, and the concluding section discusses open questions. Appendices~\ref{sec: proof of lemma}--\ref{sec: size reduction} contain the proofs and further analyses of estimation errors, inversion conditioning, photon-number correlators, indistinguishability certification, and linear-system reduction.

\section{Partial distinguishability in linear optics}
\label{sec:partialdistTheory}

Consider $n$ photons entering distinct input ports of an $m$-mode passive interferometer. The paths through the interferometer are the \emph{external} degrees of freedom, with Hilbert space $\hext\simeq\mathbb{C}^{m}$ and basis $\ket{1},\ldots,\ket{m}$. Each photon also carries \emph{internal} degrees of freedom, such as its temporal or spectral wave packet and polarization, which the interferometer does not manipulate and the detectors do not resolve. Fix an orthonormal basis $\{\ket{\alpha}\}$ of the internal Hilbert space $\hint$. The creation operator $a^{\dagger}_{j,\alpha}$ then creates a photon in mode $j$ with internal basis state $\ket{\alpha}$, as in the description of Ref.~\cite{OptimalInterferometricCertification}.

A passive interferometer described by a single-particle unitary $U\in U(m)$ mixes the modes while preserving the internal label:
\begin{equation}\label{eq:interferometer-action}
    \U{U}\,a^{\dagger}_{j,\alpha}\,\U{U}^{\dagger}
    =\sum_{k=1}^{m}U_{kj}a^{\dagger}_{k,\alpha}.
\end{equation}
The detectors measure the total occupation of each output mode, $\widehat n_j=\sum_{\alpha}a^{\dagger}_{j,\alpha}a_{j,\alpha}$. Although the internal states are neither manipulated nor observed, they determine the interference between alternative many-photon paths~\cite{tichyInterferenceIdenticalParticles2014,shchesnovichPartialIndistinguishabilityTheory2015a,tichySamplingPartiallyDistinguishable2015}.

To identify the state accessible in such experiments, we keep the external and internal tensor factors explicit. Suppose the occupied input ports are $1,\ldots,n$, and write their ordered list as $\ket{\mathbf{i}_0}=\ket{1}\otimes\cdots\otimes\ket{n}$. An internal density operator $\rho$ on $\hint^{\otimes n}$ describes the preparations associated with these input ports; the tensor factors label ports, rather than distinguishable particles. Bosonic symmetry requires us to permute the external and internal factors together. With the convention $P_{\pi}\ket{x_1,\ldots,x_n}=\ket{x_{\pi^{-1}(1)},\ldots,x_{\pi^{-1}(n)}}$, the corresponding embedding of an internal state $\ket{\chi}$ is~\cite{OptimalInterferometricCertification}
\begin{equation}\label{eq:isoDEF}
    V\ket{\chi}
    =\frac{1}{\sqrt{n!}}\sum_{\pi\in S_n}
    P_{\pi}^{\mathrm{ext}}\ket{\mathbf{i}_0}
    \otimes P_{\pi}^{\mathrm{int}}\ket{\chi}.
\end{equation}
Here the superscripts specify the factors being permuted. Since the input ports are distinct, $V$ is an isometry and $\Omega(\rho)=V\rho V^{\dagger}$ is a normalized bosonic state. This construction allows arbitrary internal states, including mixed or correlated preparations.

The \emph{ external state}, also called the \emph{visible} or \emph{accessible state}~\cite{adamsonMultiparticleStateTomography2007,adamsonDetectingHiddenDifferences2008,englbrechtIndistinguishabilityIdenticalBosons2024}, is obtained by tracing out the internal degrees of freedom. It contains all information needed to predict experiments that act only on the external modes. In the orthonormal permutation basis $\ket{\sigma}:=\ket{\sigma(1),\ldots,\sigma(n)}=P_{\sigma^{-1}}^{\mathrm{ext}}\ket{\mathbf{i}_0}$, it takes the form
\begin{equation}\label{eq:reducedstate}
\begin{aligned}
    \Omegaext(\rho)
    &:=\operatorname{Tr}_{\mathrm{int}}[V\rho V^{\dagger}]\\
    &=\frac{1}{n!}\sum_{\sigma,\tau\in S_n}
    \ket{\tau}\!\bra{\sigma}\,
    \operatorname{Tr}\!\left[\big(P_{\tau\sigma^{-1}}^{\mathrm{int}}\big)^{\dagger}\rho\right].
\end{aligned}
\end{equation}
The permutation traces in this expression are the weights that enter the theory of partial distinguishability~\cite{shchesnovichPartialIndistinguishabilityTheory2015a}.

For independent preparations, $\rho=\rho_1\otimes\cdots\otimes\rho_n$, each permutation trace factorizes over its disjoint cycles:
\begin{equation}\label{eq:cycle-bargmann-factorization}
    \operatorname{Tr}\!\left[\big(P_{\pi}^{\mathrm{int}}\big)^{\dagger}\rho\right]
    =\prod_{(i_1\,\cdots\,i_{\ell})\in\mathrm{cyc}(\pi)}
    \operatorname{Tr}(\rho_{i_1}\cdots\rho_{i_{\ell}}).
\end{equation}
The cycle traces $B_c:=\operatorname{Tr}(\rho_{i_1}\cdots\rho_{i_{\ell}})$ are \emph{Bargmann invariants}: they describe relations among the preparations and are unchanged by a common unitary transformation of the internal states~\cite{oszmaniecMeasuringRelationalInformation2021}. If every preparation is pure, $\rho_i=\ket{\psi_i}\!\bra{\psi_i}$, these quantities are cyclic products of overlaps, $B_c=G_{i_1i_2}\cdots G_{i_{\ell}i_1}$, where
\begin{equation}\label{eq:gram-matrix}
\begin{aligned}
    G_{ij}&:=\langle\psi_i|\psi_j\rangle,\\
    \operatorname{Tr}\!\left[\big(P_{\pi}^{\mathrm{int}}\big)^{\dagger}\rho\right]
    &=\prod_{i=1}^{n}G_{i,\pi(i)}.
\end{aligned}
\end{equation}
Thus the $n\times n$ Gram matrix determines the entire external state, irrespective of the dimension of $\hint$. Reconstructing it suffices to characterize all subsequent experiments that are insensitive to the internal degrees of freedom~\cite{tichySamplingPartiallyDistinguishable2015,oszmaniecMeasuringRelationalInformation2021}.

The phases of the individual vectors have no physical meaning. Replacing $\ket{\psi_i}$ by $e^{\mathrm{i}\varphi_i}\ket{\psi_i}$ transforms $G$ into $D^{\dagger}GD$, where $D=\operatorname{diag}(e^{\mathrm{i}\varphi_1},\ldots,e^{\mathrm{i}\varphi_n})$, but leaves every Bargmann invariant and hence $\Omegaext(\rho)$ unchanged. The tomographic target is therefore the equivalence class
\begin{equation}\label{eq:gram-gauge-class}
    [G]=\{D^{\dagger}GD:\ D\ \text{diagonal unitary}\}.
\end{equation}
For generic full-rank Gram matrices, this leaves $(n-1)^2$ real parameters. To choose a representative, consider the \emph{overlap graph}: vertex $i$ represents the preparation $\ket{\psi_i}$, and distinct vertices $i,j$ are joined by an undirected edge whenever $G_{ij}\neq 0$. Within each connected component, choose a \emph{spanning tree}, namely a connected subgraph that contains every vertex of that component and has no cycles. Starting from any root, choose the phases of successive state vectors so that the overlaps on all tree edges are real and positive. This fixes the phase convention within that component, up to a common phase that does not affect $G$.

Every edge outside the chosen trees closes a cycle with the unique tree path between its endpoints. The Bargmann invariant around this cycle determines the phase of that edge's overlap once the tree overlaps have been made positive. Thus pairwise invariants $|G_{ij}|^2$ determine the overlap magnitudes, while higher-order invariants provide the remaining phase information. The next section develops the reconstruction protocol and shows how its Gram-matrix error controls the trace distance between external states.

\section{The tomography protocol}\label{sec: Tomography}

\subsection{Statement of the results}\label{sec:pseudopure-setup}

We will now describe the main result of this paper -- an efficient experimental protocol that reconstructs the external state $\Omegaext(\rho)$ by measuring certain Bargmann invariants, provided $\rho$ has high purity. We first describe the ideal case where the internal state $\psi \in \hint^{\otimes n}$ is a pure product state:
\begin{equation}
    \psi = \bigotimes_{i=1}^{n}\psi_i
\end{equation}
Any fixed choice of global phases allows to identify $\psi$ with a sequence of vectors $\{\ket{\psi_i}\}_{i=1}^{n}$ with the corresponding Gram matrix $G_{ij}=\langle\psi_i|\psi_j\rangle$. Combining \eqref{eq:reducedstate} and \eqref{eq:gram-matrix}, the external state in the permutation basis $\ket{\sigma}=\ket{\sigma(1),\ldots,\sigma(n)}$ is
\begin{equation}\label{eq:omegaext-gram}
      \Omegaext(\psi) = \frac{1}{n!}\sum_{\sigma,\tau \in S_n}\ket{\tau}\bra{\sigma}\prod_{i=1}^{n}G_{\sigma(i),\tau(i)}.
\end{equation}

We propose a protocol that reconstructs an estimate of the true Gram matrix $G$ based on a number of appropriate Bargmann invariants, measured using methods described in \Cref{sec:relational-info}. The protocol is described in detail in \Cref{sec:protocol-main} and then analyzed in \Cref{sec:estimators}. 


\begin{theorem}\label{th:protocol}
    There exists a protocol that, given as input a photonic state with $n$ photons with pure product internal state $\psi$ with a Gram matrix $G$, measures $O(n^2)$ Bargmann invariants using in total $N$ experimental samples and outputs a estimate $G'$ of the Gram matrix such that for a pure product state $\psi'$ described by $G'$ we have: 
    \begin{equation}
        \mathbb{P}\left( \norm{\Omegaext(\psi)- \Omegaext(\psi')}_1 > \delta \right) \leq \frac{A(n)}{N}\cdot\frac{1}{ \delta^2}.  
    \end{equation}
    Here, the probability is taken over the sampling of Bargmann invariants and the factor $A(n)$ depends on the measurement method, number of photons $n$ and the magnitude of Bargmann invariants. In particular, the external state $\Omegaext(\psi)$ can be estimated with high probability to accuracy $\delta$ in trace distance with $O\left(\frac{A(n)}{\delta^2}\right)$ experimental samples. 
\end{theorem}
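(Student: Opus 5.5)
The proof chains three ingredients: a continuity bound from the Gram matrix to the external state, a Frobenius-norm error bound for the reconstructed Gram matrix, and Chebyshev's inequality.

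\textbf{Step 1: continuity.} We show that for pure product states $\psi,\psi'$ with Gram matrices $G,G'$ (fixed phase conventions),
\[
\norm{\Omegaext(\psi)-\Omegaext(\psi')}_1\le \sqrt{n}\,\norm{G-G'}_{\mathrm F}.
\]
Realize $G=X^\dagger X$ and $G'=X'^\dagger X'$ with columns $\ket{\psi_i}$ and $\ket{\psi'_i}$ in a common space $\mathbb{C}^n$. Choose the realization $X'=W\sqrt{G'}$ and $X=\sqrt{G}$, with the unitary $W$ chosen to align the two frames. By unitary invariance of $\Omega$ under a common internal rotation, $\Omegaext(\psi)$ and $\Omegaext(\psi')$ are partial traces of $V(\psi\psi^\dagger)V^\dagger$ and $V(\psi'\psi'^\dagger)V^\dagger$. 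Since $V$ is an isometry and the partial trace is contractive,
\[
\norm{\Omegaext(\psi)-\Omegaext(\psi')}_1\le 2\sqrt{1-|\langle\psi|\psi'\rangle|^2}\le 2\norm{\psi-\psi'}.
\]
A telescoping bound on product vectors gives $\norm{\psi-\psi'}\le\sum_i\norm{\psi_i-\psi'_i}\le\sqrt n\,(\sum_i\norm{\psi_i-\psi'_i}^2)^{1/2}$. The last factor is the Frobenius distance of the frames, which a Procrustes-type argument controls by $\norm{G-G'}_{\mathrm F}$ (up to a constant). If the constant or the square-root behaviour of $\sqrt{G}$ causes trouble near rank deficiency, a fallback is to bound the permanent-type entries in \eqref{eq:omegaext-gram} directly: expand the difference of products $\prod_i G_{\sigma(i)\tau(i)}-\prod_i G'_{\sigma(i)\tau(i)}$ telescopically and estimate the trace norm through the operator $\Pi(G)$ acting by symmetrization.

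\textbf{Step 2: estimator and Frobenius bound.} The protocol fixes a spanning tree of the overlap graph and measures $|G_{ij}|^2$ for the $O(n^2)$ pairs. It also measures, for each non-tree edge, the cycle invariant closing it with the tree path (or a triangle through a reference photon). Each invariant is estimated as an empirical mean with $N/O(n^2)$ samples. The estimator sets $|G'_{ij}|=\sqrt{\widehat{|G_{ij}|^2}}$ and assigns phases on non-tree edges from the arguments of the cycle estimates divided by the products of the tree magnitudes. Propagating errors to first order, each entry satisfies $\E|G'_{ij}-G_{ij}|^2\le c_{ij}\,O(n^2)/N$, where $c_{ij}$ depends on the variances and on inverse magnitudes of the invariants involved. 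Summing over entries gives $\E\norm{G-G'}_{\mathrm F}^2\le B(n)/N$. If $G'$ is not positive semidefinite, project it onto the PSD matrices with unit diagonal; by convexity this projection does not increase the distance to $G$ by more than a factor of $2$.

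\textbf{Step 3: conclude.} Chebyshev's inequality combined with Step 1 gives
\[
\mathbb{P}\bigl(\norm{\Omegaext(\psi)-\Omegaext(\psi')}_1>\delta\bigr)\le \mathbb{P}\bigl(\norm{G-G'}_{\mathrm F}^2>\delta^2/n\bigr)\le \frac{nB(n)}{N\delta^2},
\]
so the statement holds with $A(n)=nB(n)$.

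\textbf{Main obstacle.} The hardest step is the error propagation in Step 2. The square root and the phase extraction are nonlinear and become ill-conditioned when the measured invariants are small. A clean mean-squared-error bound requires handling this: I would first try truncating the estimates away from zero, or otherwise use the bound $|G_{ij}|\le1$ to cap the per-entry error by a constant, and absorb the inverse magnitudes of the invariants into $A(n)$.
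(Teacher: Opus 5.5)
\textbf{Gap in Step 1.} Your architecture (continuity bound, per-entry MSE, Chebyshev on the squared distance) is the paper's, and Steps 2--3 are essentially sound. The main route of Step 1, however, cannot give the \emph{linear} bound $\|\Omegaext(\psi)-\Omegaext(\psi')\|_1\le\sqrt n\,\|G-G'\|_{\mathrm F}$ that Step 3 needs.

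The Procrustes claim is false. Over all realizations, the smallest frame distance is $(2n-2\|\sqrt G\sqrt{G'}\|_1)^{1/2}$, and this scales like $\|G-G'\|^{1/2}$ when $G$ is rank-deficient. A linear bound holds only with a factor like $\lambda_{\min}(G)^{-1/2}$, which blows up whenever the internal dimension is below $n$ (e.g.\ the qubit SIC example).

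More fundamentally, any argument passing through $\|V\psi-V\psi'\|$ is square-root lossy. Take $n=2$, $G_{12}=1$, $G'_{12}=1-\epsilon$. Then $\|G-G'\|_{\mathrm F}=\sqrt2\,\epsilon$ and $\|\Omegaext(\psi)-\Omegaext(\psi')\|_1=2\epsilon-\epsilon^2$. But every realization has the form $\psi=u\otimes u$, $\psi'=a\otimes b$, and then $|\langle\psi|\psi'\rangle|=|\langle u|a\rangle\langle b|u\rangle|\le(1+|\langle a|b\rangle|)/2=1-\epsilon/2$. So your bound $2\sqrt{1-|\langle\psi|\psi'\rangle|^2}$ is never better than about $2\sqrt{\epsilon}$. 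Fed into Step 3, a square-root continuity bound only gives a tail of order $n^3B(n)/(N\delta^4)$, not the $\delta^{-2}$ of the statement.

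\textbf{What is missing from the fallback.} Telescoping the products in \eqref{eq:omegaext-gram} is where the paper goes, but you are missing the step that makes it work. Naive estimates, such as viewing $J(G)_{\sigma\tau}=\prod_iG_{\sigma(i)\tau(i)}$ as a compression of $G^{\otimes n}$, lose factors like $n^n/n!$. The paper's argument runs as follows.
\begin{enumerate}
\item With $H=G-G'$, telescoping gives $J(G)-J(G')=\sum_{k=1}^nC_k\circ H_k$, where $(H_k)_{\sigma\tau}=H_{\sigma(k)\tau(k)}$.
\item Each $C_k$ is the Gram matrix of the unit vectors $\bigotimes_{i<k}v'_{\sigma(i)}\otimes\bigotimes_{i>k}v_{\sigma(i)}$, hence a correlation matrix.
\item Schur multiplication by a correlation matrix is contractive in trace norm.
\item Ordering the permutation basis by $\sigma(k)$ shows $H_k\cong H\otimes E$, with $E$ the all-ones $(n-1)!\times(n-1)!$ matrix, so $\|H_k\|_1=(n-1)!\,\|H\|_1$.
\item Summing over $k$ and dividing by $n!$ yields $\|\Omegaext(\psi)-\Omegaext(\psi')\|_1\le\|G-G'\|_1\le\sqrt n\,\|G-G'\|_{\mathrm F}$, independent of rank and internal dimension.
\end{enumerate}

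\textbf{Minor points in Step 2.} First-order error propagation is not a rigorous MSE bound, but no truncation is needed. The exact inequalities $|\sqrt{\hat B}-\sqrt B|\le|\hat B-B|/\sqrt B$ and $|z/|z|-w/|w||\le2|z-w|/|w|$ give non-asymptotic per-entry bounds with factors $1/B_{ij}$ and $B_{ij}/|B_{\pi(i,j)}|^2$. This requires the tree to be chosen independently of the estimates used afterwards, e.g.\ by sample splitting. Finally, the projection onto correlation matrices is nonexpansive toward $G$, so your factor $2$ is unnecessary.
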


 The protocol is robust in the sense that it can handle also mixed states, provided they are sufficiently close to pure. 


\begin{theorem}\label{th:protocol2}
    The protocol from \Cref{th:protocol} works also for mixed states. Given as input a photonic state with $n$ photons with mixed product internal state $\rho=\bigotimes_{i=1}^{n} \rho_i$, with $\tr(\rho_i^2) > 1-\varepsilon$, it measures $O(n^2)$ Bargmann invariants using in total $N$ experimental samples and outputs a Gram matrix $G'$ of a pure product state $\psi'$ such that:
    \begin{equation}
        \mathbb{P}\left( \norm{\Omegaext(\rho)- \Omegaext(\psi')}_1 > \delta \right) \leq \left(\frac{A(n)}{N} +B(n)\varepsilon^2 \right)\cdot\frac{1}{ \delta^2} 
    \end{equation}
    where probability is taken over the sampling of Bargmann invariants and $A(n), B(n)$ depend on the measurement method, number of photons $n$ and the magnitude of Bargmann invariants. In particular, the external state $\Omegaext(\psi)$ can be estimated with high probability to accuracy $\delta$ in trace distance with $O\left(\frac{\max\{A(n),B(n)\}}{\delta^2}\right)$ experimental samples, assuming $\varepsilon \ll \delta$. 
\end{theorem}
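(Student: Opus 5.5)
The plan is to split the error by the triangle inequality into a systematic part and a statistical part. Write each $\rho_i = \lambda_i \ket{\phi_i}\!\bra{\phi_i} + (1-\lambda_i)\sigma_i$ with $\ket{\phi_i}$ the dominant eigenvector. Let $\phi=\bigotimes_i \phi_i$, with Gram matrix $G^{\phi}$. Then
\begin{equation*}
\norm{\Omegaext(\rho)-\Omegaext(\psi')}_1 \le \norm{\Omegaext(\rho)-\Omegaext(\phi)}_1 + \norm{\Omegaext(\phi)-\Omegaext(\psi')}_1 .
\end{equation*}

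\textbf{Systematic term.} The map $\Omegaext$ is $\rho\mapsto \operatorname{Tr}_{\mathrm{int}}[V\rho V^\dagger]$, with $V$ an isometry. It is therefore trace-norm contractive, so the first term is at most $\norm{\rho-\phi}_1$. Purity $\tr(\rho_i^2)>1-\varepsilon$ gives $\lambda_i \ge \tr\rho_i^2 > 1-\varepsilon$, hence $\norm{\rho_i-\phi_i}_1 = 2(1-\lambda_i)\le 2\varepsilon$. A telescoping bound on the tensor product then gives $\norm{\rho-\phi}_1\le 2n\varepsilon$; the constant is absorbed into the advertised $n\varepsilon$ bound. This term is deterministic.

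\textbf{Statistical term.} Here I rerun the argument of \Cref{th:protocol}. The measured data are now estimators of the mixed Bargmann invariants $\tr(\rho_{i_1}\cdots\rho_{i_\ell})$, not of the invariants of $\phi$. Each estimated invariant therefore carries a bias
\begin{equation*}
b_c = \tr(\rho_{i_1}\cdots\rho_{i_\ell}) - \tr(\phi_{i_1}\cdots\phi_{i_\ell}).
\end{equation*}
Expanding each $\rho_i$ in the decomposition above shows $|b_c| \le \sum_{j}(1-\lambda_{i_j})\cdot O(1) = O(\ell\varepsilon)$, using that every product of density operators has trace at most one in modulus. Because only invariants of order $\ell\le 3$ (or bounded order) are used, $|b_c|=O(\varepsilon)$. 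The variances of the estimators are unchanged up to constants, since all measured quantities are bounded.

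I then feed biased estimates into the same reconstruction map $\{B_c\}\mapsto G'$ used in \Cref{th:protocol}: magnitudes from $\sqrt{|G_{ij}|^2}$, and phases from cycle invariants normalized by the tree overlaps. Its local Lipschitz constant, which depends on lower bounds of $|B_c|$, controls the mean-squared error as
\begin{equation*}
\E\norm{G^{\phi}-G'}_F^2 \lesssim \frac{A_0(n)}{N} + B_0(n)\varepsilon^2 ,
\end{equation*}
via the bias--variance decomposition. By \Cref{prop:tvd-continuity} and \eqref{eq:intro-operational}, $\norm{\Omegaext(\phi)-\Omegaext(\psi')}_1\le \sqrt n\norm{G^\phi-G'}_F$ after gauge fixing. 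Chebyshev/Markov on the squared error then yields the probability bound with $A(n)=nA_0(n)$ and part of $B(n)$.

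\textbf{Combining.} On the event that the statistical term is at most $\delta/2$, the total error is at most $\delta/2+2n\varepsilon$. When $2n\varepsilon\le\delta/2$ this is at most $\delta$. Alternatively, the deterministic term can be absorbed into the $B(n)\varepsilon^2/\delta^2$ contribution: if $2n\varepsilon>\delta/2$, the right-hand side exceeds $1$ once $B(n)\ge 16n^2$, so the bound is trivially true. This gives the stated inequality.

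\textbf{Main obstacle.} The delicate step is controlling the bias propagation through the nonlinear reconstruction. Phase estimation divides by small invariants, so the Lipschitz constant requires the lower bounds on $|G^\phi_{ij}|$ along tree edges and cycles. Moreover, $\varepsilon$-perturbations must not push the estimated invariants near zero, which needs $\varepsilon$ small relative to those magnitudes. I would handle this by working on a high-probability event on which all estimates lie within half their true magnitude, and bounding the complement separately.
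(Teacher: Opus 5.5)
Your proposal is correct and follows essentially the paper's route: pass through the dominant-eigenvector product state, bound $\E\norm{G-\widehat{G}}_2^2$ by a bias--variance decomposition, convert to external states via \Cref{prop:tvd-continuity}, and apply Markov. The deterministic $2n\varepsilon$ term comes from contractivity of $\Omegaext$, and telescoping gives a bias of order $\ell\varepsilon$ on each invariant. The only difference in the combination step is that the paper absorbs the deterministic term by bounding $\E\norm{\Omegaext(\rho)-\Omegaext(\psi')}_1^2\le 8n^2\varepsilon^2+\cdots$ directly, where you use an equally valid case split.

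The obstacle you flag does not arise in the paper's argument. It uses the global inequalities $\abs{\sqrt{x}-\sqrt{y}}\le\abs{x-y}/\sqrt{y}$ and $\abs{z/\abs{z}-w/\abs{w}}\le 2\abs{z-w}/\abs{w}$, whose denominators involve only the true invariants. So no good-event argument is needed, and $\varepsilon$ need not be small relative to the invariants. Also, the protocol may use invariants of order up to $n$, not just $\ell\le 3$; these simply contribute the factor $m_{ij}^2$ to $B(n)$.
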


\emph{Efficiency of the protocol.} The number of samples required for reconstruction to desired accuracy $\delta$ depends on the functions $A(n), B(n)$, which depend on the experimental setting. We give detailed formulas and discussion in \Cref{sec:estimators-error}. A sufficient, but not necessary, criterion for efficient (polynomial in $n$) reconstruction is that all pairwise Bargmann invariants are bounded from below by a constant independent of $n$, $B_{ij} > c$. This holds e.g. for generic random states with fixed internal dimension and we expect it to be satisfied in natural experimental settings. In that case, using the Fourier method (see \Cref{sec:relational-info}) gives $A(n)=O(n^5), B(n)=O(n^3)$, so reconstruction to accuracy $\delta$ is possible with $O(\frac{n^5}{\delta^2})$ samples provided that single photon impurities are $\varepsilon = O(\frac{\delta}{n^{3/2}})$.

It is natural to ask what are the information theoretic limits on the number of samples required to reconstruct the external state.  We show that even for pure states any protocol that with high probability correctly identifies the external state up to error $\delta$ must use at least $N = \Omega(\frac{n^2}{\delta^2})$ samples. Therefore, in certain regimes the above protocols are sample-optimal up to powers of $n$. We provide the proof in \Cref{sec:proof-of-sample-lower-bound}. 

\begin{theorem}\label{th:sample-lower-bound}
    Consider any estimator $\widehat{\rho}$ that receives $N$ copies of an input state $\rho_G$, whose internal state is a pure product state $\psi_G$ with unknown Gram matrix $G$, and can perform arbitrary collective measurements on the external degrees of freedom. If $\widehat{\rho}$ satisfies:
    \begin{equation}
        \mathbb{P}\left( \norm{\widehat{\rho} - \Omegaext(\psi_G)}_1 < \delta \right) > \frac{2}{3}
        \end{equation}
        for every $G$, then necessarily $N = \Omega\left( \frac{n^2}{\delta^2} \right)$.
\end{theorem}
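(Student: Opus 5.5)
We plan to prove Theorem~\ref{th:sample-lower-bound} with a standard packing (Fano) argument combined with a Holevo-information bound. The aim is to construct a large family of Gram matrices $\{G_x\}_{x}$ whose external states are pairwise $2\delta$-separated in trace norm, yet are close enough that $N$ copies carry too little information to distinguish them unless $N$ is large. Any estimator achieving accuracy $\delta$ with probability $>2/3$ identifies $x$ with probability $>2/3$. Fano's inequality then forces the mutual information between $x$ and the measurement outcome to be $\Omega(\log|\{x\}|)$. By Holevo's bound, this mutual information is at most $\chi = S(\bar\rho^{\otimes N}\text{-average}) - \text{avg } S \le N\cdot \chi_1$, where $\chi_1$ is the single-copy Holevo quantity.

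\textbf{Construction.} Take internal states near a common reference, $\ket{\psi_i^{x}} = \sqrt{1-\eta^2}\ket{0} + \eta\ket{v_i^x}$, where the $\ket{v_i^x}$ are orthogonal to $\ket{0}$. Then $G_x = \id + \eta^2(V_x - \id)+O(\eta^2)$ corrections, with $V_x$ the Gram matrix of the $v$'s. We choose $\{V_x\}$ by a Gilbert--Varshamov style packing: for example, $v_i^x$ real combinations of $\sim n$ orthonormal directions with random sign patterns. This gives $2^{\Omega(n^2)}$ matrices whose off-diagonal parts differ pairwise in $\Omega(n^2)$ entries by $\Theta(1)$ after gauge fixing. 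Gauge invariance is handled by choosing real, positive-tree configurations, or by checking that diagonal conjugations cannot reduce the distance by more than a constant factor.

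\textbf{Separation.} We need a \emph{lower} bound on $\|\Omegaext(G_x)-\Omegaext(G_y)\|_1$, the reverse of Proposition~\ref{prop:tvd-continuity}. To first order in $\eta^2$, the external state is the symmetric projector plus a perturbation linear in $G-\id$, supported on transposition terms $\ket{\tau}\bra{\sigma}$ with $\tau\sigma^{-1}$ a transposition $(ij)$ weighted by $G_{ij}$. Testing against a suitable observable, such as an HOM-type projector onto antisymmetric pairs combined across pairs, should give trace distance $\gtrsim \eta^2\|G_x-G_y\|_{\mathrm F}/\sqrt{n}$ or similar, i.e. of order $\eta^2 n^{1/2}$ to $\eta^2 n$. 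We set $\eta^2$ so that this equals $2\delta$. \emph{This step is the main obstacle}: we must get the right power of $n$ so that the final ratio is $n^2/\delta^2$. We would first try computing the first-order perturbation in the permutation basis explicitly and evaluating its trace norm via the regular representation of $S_n$.

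\textbf{Information bound.} Each $\Omegaext(G_x)$ is close to a fixed state $\Omega_0$, namely the one for all photons in $\ket{0}$, which is pure. We bound the single-copy Holevo quantity by the relative entropy $D(\Omegaext(G_x)\|\bar\Omega)$, controlled via $\chi^2$-divergence by $O(\eta^4 n)$-type quantities, being quadratic in the perturbation since the leading-order states are near-pure and the linear term averages out. This yields $N\cdot O(\delta^2)\gtrsim n^2$ after substituting $\eta$, which gives $N=\Omega(n^2/\delta^2)$. If the relative-entropy bound proves loose, because the support of the perturbed states leaves that of $\Omega_0$, we instead use the fidelity-based bound for pure-state ensembles: the states are pure in the pure-product case, so $N$ copies are distinguishable only if $1-|\langle\Phi_x|\Phi_y\rangle|^{2N}$ is large. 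The Holevo quantity of the ensemble is then bounded by the entropy of the average Gram matrix of $N$-fold copies.
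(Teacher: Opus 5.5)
\textbf{Gap: the choice of regime.} Your template (a $2^{\Omega(n^2)}$ packing, reduction to identification, Fano plus a per-copy information bound) is the right one and matches the paper's. The problem is that you perturb around the \emph{pure} state $\Omega_0$ (nearly indistinguishable photons, $G_x=(1-\eta^2)J+\eta^2V_x$, with $J$ the all-ones matrix). In that regime the information bound you need fails.

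Write $\Omegaext(G)=\frac{1}{n!}\sum_\pi c_\pi L_\pi$ with $c_\pi=\prod_jG_{j,\pi(j)}$, and decompose into $S_n$-irreps. To leading order, the relational data $V_x$ appears only in the standard-representation block. That block is approximately $\frac{\eta^2}{n}P^{\perp}V_x^{T}P^{\perp}$, with multiplicity $n-1$, where $P^\perp$ projects onto $\mathbf{1}^\perp\subset\mathbb{C}^n$. Its total weight is $w\approx n\eta^2$.

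\begin{itemize}
\item In this block the average state has eigenvalues of order $\eta^2/n$, so your perturbations are $O(1)$ \emph{relative} perturbations there. The $\chi^2$/relative-entropy quantity is therefore not small. For your sign-pattern family ($\|V_x-I\|_\infty=\Theta(1)$) one gets $\chi_1=\Theta(n\eta^2)$, linear rather than $O(n\eta^4)$.
\item The cancellation of the linear term that you invoke concerns only coherences with the symmetric vector. Those carry essentially none of the $n^2$ relational parameters.
\item The separation cannot exceed the same weight. By \Cref{prop:tvd-continuity}, $\|\Omegaext(G_x)-\Omegaext(G_y)\|_1\le\eta^2\|V_x-V_y\|_1\le\eta^2\sqrt n\,\|V_x-V_y\|_{\mathrm F}=O(n\eta^2)$. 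Fixing the separation at $2\delta$ forces $\chi_1\gtrsim\delta$, so Fano/Holevo certifies at most $N\gtrsim n^2/\delta$.
\item Shrinking the deviation of $V_x$ from $I$ by a factor $c$ gives $\chi_1\approx c^2w$ and $\delta\lesssim cw$. The certified bound is then at most $n^2w/\delta^2$.
\item This matches the theorem only when $w=\Theta(1)$. But then the overlap with the symmetric vector is $\perm(G_x)/n!\approx e^{-\Theta(1)}$, and your first-order expansion about $\Omega_0$ is invalid.
\end{itemize}

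Your fallback also rests on a false premise. The estimator receives $\Omegaext(\psi_G)$, which is mixed even for pure product internal states (for $G=I$ it equals $I/n!$).

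\textbf{The missing idea: perturb around fully distinguishable photons.} There the reference state is $I/n!$, and information and trace distance are related quadratically, as classically. Your description of the first-order perturbation as supported on $\ket{\tau}\bra{\sigma}$ with $\tau\sigma^{-1}$ a transposition is in fact the structure around $G=I$ (with weights $|G_{ij}|^2$). Around $G\approx J$, every entry $\prod_iG_{\sigma(i),\tau(i)}$ moves.

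The paper proceeds as follows.
\begin{itemize}
\item It takes $G_\Gamma=I+\sqrt{\alpha/n}\,A_\Gamma$ over a Gilbert--Varshamov family of graphs pairwise differing in $\ge\frac14\binom n2$ edges. Off-diagonal entries must be $\Theta(1/\sqrt n)$. A family of $2^{\Omega(n^2)}$ correlation matrices pairwise differing by $\Theta(1)$ on $\Omega(n^2)$ entries cannot exist, since each lies within Frobenius distance $\epsilon n$ of a PSD matrix of rank $\le\epsilon^{-2}$. Such packings have only $2^{O(n)}$ elements.
\item Counting permutations that displace $k$ points gives $\tr\Omegaext(\psi_G)^2\le(1+2\alpha^2)/n!$. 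Hence the per-copy information is $O(\alpha^2)$. The paper uses the Ogawa--Nagaoka bound; in your language, $\chi_1\le\max_x\log(n!\,\tr\Omega_x^2)\le\log(1+2\alpha^2)$.
\item The separation step you left open is handled by $\tr|D|\ge|\tr(DT)|^2/\sqrt{\tr(D^2)\tr(T^4)}$. Here $T=\sum_{(ij)\in\mathcal{E}}s_{ij}L_{(ij)}$, $\mathcal{E}$ is the symmetric difference of the edge sets, and $s_{ij}=\pm1$. One has $\tr(DT)=\frac{\alpha}{n}|\mathcal{E}|$, $n!\,\tr(D^2)<\frac{5\alpha^2}{n^2}|\mathcal{E}|$ and $\tr(T^4)<4\,n!\,|\mathcal{E}|^2$. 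This gives separation $>\alpha/16$ with the right power of $n$.
\end{itemize}
Taking $\alpha=32\delta$ then yields $N=\Omega(n^2/\delta^2)$.
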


\begin{figure*}[tbhp!]
        \centering
        \includegraphics[width=\linewidth]{figures/Operational_consequence.pdf}
        \caption{We consider the case of nearly pure states, i.e. $\Tr{\rho^2_{i}}\geq 1-\epsilon \ \forall i$, after $N$ measurements, the Frobenius estimation error satisfies $\mathbb{E}\|G - \widehat{G}\|_2^2 \le A(n)/N + B(n)\epsilon^{2}$ (Proposition~\ref{th:main-MSE-simple}). Additionally Proposition ~\ref{prop:tvd-continuity}, shows that being able to reconstruct the Gram matrix up to desired precision suffices to reconstruct the external state. Due to the relation between trace distance and total variation distance, the estimated Gram matrix provides a uniform guarantee for any subsequent measurement $\mathcal{M}$ on the external modes, bounding the total-variation distance of outcome distributions by $\mathrm{TVD}(p_\mathcal{M}(\psi), p_\mathcal{M}(\psi')) \le \frac{\sqrt{n}}{2}\|G - \widehat{G}\|_2$ (Corollary~\ref{cor: total variation distance}).}
        \label{fig: operational meaning}
\end{figure*}
\subsection{The protocol setup: distance between states}

We first explain how estimating the Gram matrix leads to a quantitative bound on the distance between external states. Assume first that the input state is pure. As discussed in \Cref{sec:partialdistTheory}, equation \eqref{eq:omegaext-gram} is invariant under rephasing $G \mapsto D^{\dagger}GD$, where $D=\operatorname{diag}(e^{i\varphi_1},\ldots,e^{i\varphi_n})$ corresponds to $\ket{\psi_j} \mapsto e^{i\varphi_j}\ket{\psi_j}$. We denote the resulting equivalence class by
\begin{equation}
    \equivclass(\psi) := [G] = \{D^{\dagger}GD \ \vert \ \text{$D$ diagonal unitary} \}. 
\end{equation}
The Gram matrix $G$, or more precisely the equivalence class $\equivclass(\psi)$, must be recovered experimentally and thus can be known only approximately. Suppose our experimental estimate is $G'$, which corresponds to some pure product state $\psi'$. The following proposition shows that the error between $G$ and $G'$ controls the error between corresponding external states $\Omegaext(\psi)$ and $\Omegaext(\psi')$. Thus, to perform internal state tomography it suffices to recover the Gram matrix up to desired error. Formally for any $p\geq 1$ let us define the distance between equivalence classes:
\begin{equation}\label{eq:def-distance-equivalence}
    d_p(\equivclass(\psi),\equivclass(\psi')) := \inf_{\substack{G \in \equivclass(\psi) \\ G' \in \equivclass(\psi')}}\norm{G - G'}_{p}
\end{equation}
where $\norm{\cdot}_{p}$ is the usual matrix Schatten norm (for $p=2$ this is simply the Frobenius norm).

\begin{proposition}\label{prop:tvd-continuity}
    Suppose that $\psi, \psi' \in \hint^{\otimes n}$ are internal pure product states. Then:
    \begin{equation}
        \norm{\Omegaext(\psi) - \Omegaext(\psi')}_1 \leq d_1(\equivclass(\psi),\equivclass(\psi'))
    \end{equation}
     In particular using the inequality between the $1$ and $2$-norm we have:
    \begin{equation}\label{eq:prop-gg2}
        \norm{\Omegaext(\psi) - \Omegaext(\psi')}_1 \leq \sqrt{n}\cdot d_2(\equivclass(\psi),\equivclass(\psi'))
    \end{equation}
\end{proposition}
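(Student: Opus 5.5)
The plan is to write $\Omegaext$ as a map that is linear in $G^{\otimes n}$, with no square roots, and then bound the difference by a telescoping sum in which each term costs exactly $\|G-G'\|_1/n$.

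\emph{Step 1 (a canonical linear form).} Since $G\succeq 0$, I write $G=A^{\dagger}A$, where $A:\mathbb{C}^n\to\hint$ maps $e_i\mapsto\ket{\psi_i}$. Then $\bigotimes_i\ket{\psi_i}=A^{\otimes n}\ket{\mathbf{i}_0}_{\mathrm{int}}$. The operator $A^{\otimes n}$ commutes with $P^{\mathrm{int}}_\pi$, so $V\bigotimes_i\ket{\psi_i}=(\id\otimes A^{\otimes n})\ket{\Phi}$, where
\begin{equation*}
\ket{\Phi}=\frac{1}{\sqrt{n!}}\sum_{\pi\in S_n}P^{\mathrm{ext}}_\pi\ket{\mathbf{i}_0}\otimes P^{\mathrm{int}}_\pi\ket{\mathbf{i}_0}
\end{equation*}
is a fixed unit vector in $\hext^{\otimes n}\otimes(\mathbb{C}^n)^{\otimes n}$ that does not depend on $\psi$. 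Cyclicity of the partial trace on the internal factor gives
\begin{equation*}
\Omegaext(\psi)=\operatorname{Tr}_{\mathrm{int}}\!\big[(\id\otimes G^{\otimes n})\ket{\Phi}\!\bra{\Phi}\big]=:\mathcal{L}(G^{\otimes n}).
\end{equation*}
Here $\mathcal{L}(K)=\operatorname{Tr}_{\mathrm{int}}[(\id\otimes K)\ket{\Phi}\!\bra{\Phi}]$ is linear in $K$. As a sanity check, expanding in the permutation basis should reproduce \eqref{eq:omegaext-gram}. The same formula holds for $G'$ with its own factorization, and it depends only on $G'$.

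\emph{Step 2 (telescoping).} I write
\begin{equation*}
G^{\otimes n}-G'^{\otimes n}=\sum_{k=1}^n G'^{\otimes(k-1)}\otimes\Delta\otimes G^{\otimes(n-k)},\qquad \Delta:=G-G'.
\end{equation*}
I then spectrally decompose $\Delta=\sum_j\lambda_j\ket{v_j}\!\bra{v_j}$ with $\|v_j\|=1$, so that $\sum_j|\lambda_j|=\|\Delta\|_1$. Each resulting operator $K_{k,j}=G'^{\otimes(k-1)}\otimes\ket{v_j}\!\bra{v_j}\otimes G^{\otimes(n-k)}$ is positive semidefinite, hence of the form $C^{\dagger}C$. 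Therefore $\mathcal{L}(K_{k,j})=\operatorname{Tr}_{\mathrm{int}}[(\id\otimes C)\ket{\Phi}\!\bra{\Phi}(\id\otimes C^{\dagger})]\succeq0$, and its trace norm equals its trace $\bra{\Phi}\id\otimes K_{k,j}\ket{\Phi}$.

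Orthogonality of the $P^{\mathrm{ext}}_\pi\ket{\mathbf{i}_0}$ (distinct ports) gives
\begin{equation*}
\bra{\Phi}\id\otimes K\ket{\Phi}=\frac{1}{n!}\sum_\pi\bra{P_\pi\mathbf{i}_0}K\ket{P_\pi\mathbf{i}_0}.
\end{equation*}
For $K=K_{k,j}$ this is a product of diagonal entries. All diagonal entries of $G$ and $G'$ equal $1$, and the $k$-th factor is $|(v_j)_{\pi^{-1}(k)}|^2$ or similar. Summing over $\pi$ gives $(n-1)!\,\|v_j\|^2$, so the trace equals $1/n$.

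\emph{Step 3 (conclusion).} By the triangle inequality,
\begin{equation*}
\|\Omegaext(\psi)-\Omegaext(\psi')\|_1\le\sum_{k=1}^n\sum_j|\lambda_j|\cdot\frac1n=\|G-G'\|_1.
\end{equation*}
Since $\Omegaext$ is invariant under $G\mapsto D^{\dagger}GD$ (already noted in \Cref{sec:partialdistTheory}), I may apply this bound to arbitrary representatives of the two classes and take the infimum, which gives the $d_1$ bound. The $d_2$ version follows from $\|X\|_1\le\sqrt{n}\|X\|_2$ for $n\times n$ matrices.

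\emph{Main obstacle.} The crux is Step 1: recognizing that the dependence on the internal vectors collapses to $G^{\otimes n}$ acting on a fixed purification-like vector $\Phi$. The alternative routes (fidelity arguments or per-photon trace distances) only yield $\sqrt{\|G-G'\|_1}$-type bounds. The other delicate point is the exact trace count $1/n$ in Step 2, which relies on the unit diagonals and on the distinctness of the input ports.
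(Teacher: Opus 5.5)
Your proof is correct. It shares the paper's skeleton: telescope across the $n$ tensor slots, so that each slot costs exactly $\|G-G'\|_1/n$. What differs is how you bound each telescoping term.

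The paper works entrywise with the $n!\times n!$ matrix $J(G)_{\sigma,\tau}=\prod_i G_{\sigma(i),\tau(i)}$. It writes $J(G)-J(G')=\sum_k C_k\circ H_k$, where $C_k$ is the Gram matrix of the unit vectors $\bigotimes_{i<k}v'_{\sigma(i)}\otimes\bigotimes_{i>k}v_{\sigma(i)}$. It then imports the fact that Schur multiplication by a correlation matrix is trace-norm contractive on Hermitian matrices. It finishes with $\|H_k\|_1=\|H\otimes J\|_1=(n-1)!\,\|H\|_1$, where $J$ is the all-ones $(n-1)!\times(n-1)!$ matrix.

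In the permutation basis, your map $\mathcal{L}$ is $\frac{1}{n!}$ times the transpose of the compression $K\mapsto(\langle\sigma|K|\tau\rangle)_{\sigma,\tau}$ onto the vectors $|e_{\sigma(1)},\dots,e_{\sigma(n)}\rangle$. So your $k$-th term is exactly the paper's $(C_k\circ H_k)^{T}/n!$. Splitting $\Delta$ into rank-one projectors and using positivity of $\mathcal{L}$ together with the exact trace count amounts to reproving the Schur-contraction lemma in the case needed. Each piece is $C_k$ Schur-multiplied with the rank-one PSD matrix $(v_{\sigma(k)}\overline{v_{\tau(k)}})_{\sigma,\tau}$, which is PSD with trace $(n-1)!$.

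Your route buys two things. It is self-contained, using only positivity of a partial trace against a fixed vector. It also yields the single structural formula $\Omegaext(\psi)=\mathcal{L}(G^{\otimes n})$, which makes linearity in $G^{\otimes n}$, gauge invariance and independence from $\dim\hint$ transparent. The paper's route is a shorter matrix-level computation that exposes the Kronecker structure $H_k\cong H\otimes J$, but it relies on a cited contraction lemma.
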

Due to the infimum in \eqref{eq:def-distance-equivalence} any choice of $G \in \equivclass(\psi), G' \in \equivclass(\psi')$ gives a valid upper bound on the distance. We defer the proof of \Cref{prop:tvd-continuity} to \Cref{sec:proof-of-tvd}.

An immediate physical consequence of \Cref{prop:tvd-continuity} is that the observable statistics of the two states will also be closely matched. By the standard operational interpretation of the trace distance, the total variation distance (TVD) between the probability distributions of any measurement outcomes is bounded by the trace distance. Applying this to a standard optical setup -- where internal states cannot be resolved -- yields the following bound.

\begin{corollary}\label{cor: total variation distance}
Let $\psi, \psi'$ be internal pure product states. Let $p_{\mathcal{M}}(\psi)$ and $p_{\mathcal{M}}(\psi')$ denote the probability distributions of outcomes obtained by performing any arbitrary generalized measurement (POVM) $\mathcal{M}$ on the external states $\Omega_{ext}(\psi)$ and $\Omega_{ext}(\psi')$, respectively. Then we have:
\begin{align}
    \TVD(p_{\mathcal{M}}(\psi), p_{\mathcal{M}}(\psi')) &\leq \frac{1}{2}d_1(\equivclass(\psi),\equivclass(\psi'))\\ &\leq \frac{\sqrt{n}}{2}d_2(\equivclass(\psi),\equivclass(\psi'))
\end{align}
\end{corollary}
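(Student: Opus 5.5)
The corollary follows from \Cref{prop:tvd-continuity} combined with the variational characterization of the trace distance, so the plan is short.

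\emph{Step 1: reduce a single outcome to an operator inequality.} Fix an arbitrary POVM $\mathcal{M}=\{M_k\}_k$ acting on $\hext^{\otimes n}$, with $M_k\succeq 0$ and $\sum_k M_k=\id$. Set $\Delta:=\Omegaext(\psi)-\Omegaext(\psi')$. This is Hermitian and traceless, since both external states are normalized: $V$ is an isometry and the partial trace preserves trace. The outcome probabilities are $p_{\mathcal{M}}(\psi)_k=\operatorname{Tr}[M_k\Omegaext(\psi)]$, and likewise for $\psi'$. Hence
\begin{equation}
\TVD\big(p_{\mathcal M}(\psi),p_{\mathcal M}(\psi')\big)=\tfrac12\sum_k\big|\operatorname{Tr}(M_k\Delta)\big| .
\end{equation}

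\emph{Step 2: bound the sum by the trace norm.} Write the Jordan decomposition $\Delta=\Delta_+-\Delta_-$ with $\Delta_\pm\succeq0$ and $\operatorname{Tr}\Delta_+=\operatorname{Tr}\Delta_-=\tfrac12\norm{\Delta}_1$. For each $k$ we have $|\operatorname{Tr}(M_k\Delta)|\le \operatorname{Tr}(M_k\Delta_+)+\operatorname{Tr}(M_k\Delta_-)$. Summing over $k$ and using completeness $\sum_k M_k=\id$ gives $\sum_k|\operatorname{Tr}(M_k\Delta)|\le\norm{\Delta}_1$. Therefore $\TVD\le\tfrac12\norm{\Omegaext(\psi)-\Omegaext(\psi')}_1$.

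\emph{Step 3: insert the proposition.} Applying \Cref{prop:tvd-continuity} gives the first inequality, with bound $\tfrac12 d_1(\equivclass(\psi),\equivclass(\psi'))$. For the second inequality, for any $n\times n$ matrix $X$ we have $\norm{X}_1\le\sqrt{n}\,\norm{X}_2$ by Cauchy--Schwarz on the singular values. Apply this to each pair $G\in\equivclass(\psi)$, $G'\in\equivclass(\psi')$ and take the infimum over the equivalence classes to get $d_1\le\sqrt n\,d_2$. Equivalently, one can quote \eqref{eq:prop-gg2} directly.

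There is no real obstacle here. The only point needing care is that the bound is uniform over all POVMs, including collective ones. This holds because Step 2 uses only positivity and completeness of $\mathcal{M}$, and does not rely on any structure of linear-optical measurements.
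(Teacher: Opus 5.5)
Your proof is correct and follows the same route as the paper. The paper also obtains the corollary by combining the standard bound $\TVD\le\frac12\norm{\Omegaext(\psi)-\Omegaext(\psi')}_1$, valid for any POVM, with \Cref{prop:tvd-continuity} and \eqref{eq:prop-gg2}; you spell out the Jordan-decomposition argument and the step $d_1\le\sqrt{n}\,d_2$ that the paper leaves implicit.
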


 
If the input state is pure, by \Cref{prop:tvd-continuity} it suffices to reconstruct its (equivalence class of) Gram matrix. To handle mixed input states, let assume that:       
\begin{equation}
    \rho = \bigotimes_{i=1}^{n}\rho_i
\end{equation} is a mixed internal product state such that for each $i$ the purity satisfies:
\begin{equation}\label{eq:purity-parameter}
    \tr(\rho_i^2) > 1 - \varepsilon
\end{equation} for some small $\varepsilon$. By \cite{oszmaniecMeasuringRelationalInformation2021}, for each $i$ there exists a pure state $\psi_i$ such that $\norm{\rho_i - \psi_i}_{1} < 2\varepsilon$ ($\psi_i$ can be taken as the projector onto the maximal eigenvector of $\rho_i$). Let $\psi := \bigotimes_{i=1}^{n}\psi_i$ and let $G$ be the Gram matrix corresponding to $\psi$. By tensorization:
\begin{equation}\label{eq:trace-inequality}
    \norm{\rho - \psi}_{1} \leq 2n\varepsilon
\end{equation}
which also implies:
\begin{equation}\label{eq:trace-inequality-external}
    \norm{\Omegaext(\rho) -\Omegaext(\psi)}_{1} < 2n\varepsilon
\end{equation}
We aim to reconstruct $\Omegaext(\rho)$ by approximately reconstructing $\Omegaext(\psi)$, using Bargmann invariants of $\rho$ and the protocol as described for pure states. By \cite{oszmaniecMeasuringRelationalInformation2021} (Corollary 1) for any Bargmann invariant of order $m$ we have:
\begin{equation}\label{eq:bargmann-diff}
    \vert \tr(\rho_{i_{1}}\dots \rho_{i_{m}}) - \tr (\psi_{i_{1}}\dots \psi_{i_{m}}) \vert \leq 2m \varepsilon
\end{equation}
Thus, we can treat measurements of Bargmann invariants of $\rho$ as approximate measurements of Bargmann invariants of $\psi$. In other words, the Bargmann invariants of $\rho$ do not correspond directly to any Gram matrix (since the state is mixed, there is no Gram matrix to speak of) -- however, the pure state $\psi$ can be used as a proxy for $\rho$ and its Gram matrix can be (approximately) recovered from the Bargmann invariants of $\rho$. The total error incurred for the Gram matrix entries will include the sampling error depending on $N$, the number of experimental samples, and systematic error depending on the purity parameter $\varepsilon$. The latter persists even as $N \to \infty$ since the protocol approximates the target state $\rho$ using its closest pure state $\psi$. 

We note that the task of Gram matrix tomography requires significantly more resources than the simpler task of characterizing indistinguishability. It is thus reasonable to consider if reconstructing the Gram matrix can also bound or estimate the amount of indistinguishability that the set of $n$ photons carry. In Section  \ref{Sec: Certification} we show that the pure state assumption is enough to fully accomplish this task up to small error.

\subsection{Estimation procedure for entries of \texorpdfstring{$G$}{G}}\label{sec:protocol-main}

We now derive the protocol underlying \Cref{th:protocol} and \Cref{th:protocol2}. The protocol assumes a mixed product input state $\rho$, with pure input states as a special case. The goal is to reconstruct from experiment the matrix entries of $G$, the Gram matrix of the pure product state $\psi$ close to $\rho$ that satisfies \eqref{eq:trace-inequality-external}. For a pure state, it is known \cite{oszmaniecMeasuringRelationalInformation2021} how to reconstruct its Gram matrix given access to $O(n^2)$ Bargmann invariants. This procedure is subject to two sources of error. 

First, the internal states featuring in the experiment are not the pure states $\psi_{i}$, rather, what is experimentally accessible are the Bargmann invariants of the mixed states $\rho_s$ which satisfy \eqref{eq:bargmann-diff}. This results in a \textit{systematic error} in the protocol that depends on the purity parameter $\varepsilon$ and persists in the limit $N\to\infty$. The second source of error is that the Bargmann invariants $\tr(\rho_{i_1}\dots\rho_{i_m})$ are themselves known only approximately, since they need to be estimated experimentally from measurements. This incurs a \textit{sampling error} dependent on the number of samples. These two types of error together add up to the total error of the protocol.

Assume that for each tuple $(i_1,\dots,i_m)$ we have an estimator $\widehat{B}_{i_1\dots i_m}$ of the mixed Bargmann invariant $\tr(\rho_{i_{1}}\dots \rho_{i_{m}})$. We assume for simplicity the following experimental setting: constrained to a total budget of $N$ experimental samples, we perform measurements of $k$ different interferometers, using $\frac{N}{k}$ samples for each, where in each experimental run we measure one or more Bargmann invariants. For measurement methods described in \Cref{sec:relational-info} we then have $k \sim n^2$ for the Fourier method (each required Bargmann invariant needs a separate interferometer) while $k=1$ for the one-shot method (all required Bargmann invariants are estimated at once).

We assume the estimators are unbiased, i.e.:
\begin{equation}
    \E (\widehat{B}_{i_1\dots i_m}) = \tr(\rho_{i_{1}}\dots \rho_{i_{m}})
\end{equation}
with the expectation taken over the experimental sample randomness. We assume that all estimator variances are bounded by a parameter $\sigma^2$, so that:
\begin{equation}\label{eq:bargmann-variance}
    \variance(\widehat{B}_{i_1 \dots i_m}) \leq \frac{\sigma^2}{\frac{N}{k}}
\end{equation}
where $N$ is the number of experimental samples.

 We shall treat $\widehat{B}_{i_1\dots i_m}$ as a \textit{biased} estimator of $B_{i_1 \dots i_{m}} := \tr(\psi_{i_{1}} \dots \psi_{i_{m}})$, with bias :
\begin{equation}\label{eq:bargmann-bias}
    \bias(\widehat{B}_{i_1\dots i_m}) := \tr(\rho_{i_{1}}\dots \rho_{i_{m}}) - \tr(\psi_{i_{1}} \dots \psi_{i_{m}})
\end{equation}
Note that by \eqref{eq:bargmann-diff} the bias is bounded by 
\begin{equation}\label{eq:bias-bound}
\abs{  \bias(\widehat{B}_{i_1\dots i_m})} \leq 2m \varepsilon.    
\end{equation}
The mean square error for a single invariant is then:
\begin{align}\label{eq:bargmann-MSE}
    \E\abs{ B_{i_1\dots i_m} - \widehat{B}_{i_1\dots i_m} }^2 &= \bias(\widehat{B}_{i_1\dots i_m})^2 + \variance(\widehat{B}_{i_1\dots i_m})\\ &\leq 4m^2 \varepsilon^2 + \frac{k\sigma^2}{N}
\end{align}

Let $G$ denote the target Gram matrix of $\psi = \bigotimes_{i=1}^{n}\ket{\psi_i}\bra{\psi_i}$. To provide intuition for the protocol we first recall from \cite{oszmaniecMeasuringRelationalInformation2021} the setup in the ideal case where we simply want to reconstruct $G$ from perfect knowledge of Bargmann invariants $B_{i_1 \dots i_m}$. Let $\Gamma$ be the graph on the vertex set $\{1,\dots,n\}$ with an edge $(i,j)$ present if $G_{ij} \neq 0$. We assume that $\Gamma$ is connected, otherwise we can perform the procedure on each connected component separately. Fix a spanning tree $T$ of the graph $\Gamma$. By using gauge equivalence \cite{oszmaniecMeasuringRelationalInformation2021}, we can assume that for each edge $(i,j) \in T$ the entry $G_{ij}$ is real and nonnegative. This is equivalent to choosing a particular representative $G$ of the equivalence class $\equivclass(\psi)$. It follows that for $(i,j) \in T$:
 \begin{equation}\label{eq:gram-ideal1}
    G_{ij} =  \abs{\braket{\psi_i}{ \psi_j}} =  \sqrt{B_{ij}}.
\end{equation}
For $(i,j) \notin T$ let $\pi(i,j)=(i=u_1, u_2,\dots, u_{k-1},u_k=j)$ denote the unique path in the tree $T$ joining $i$ and $j$. The corresponding Bargmann invariant is  $B_{\pi(i,j)} = \tr(\psi_{u_{1}}\dots\psi_{u_{k}})$. We then easily verify that: 
\begin{equation}\label{eq:b-notin-T}
    G_{ij} = \frac{B_{\pi(j,i)}}{\prod_{l=1}^{k-1} \sqrt{B_{u_{l},u_{l+1}}}}
\end{equation}
Let $\varphi_{ij}$ denote the complex phase of $B_{\pi(i,j)}$, so that:
\begin{equation}
    e^{i\varphi_{ij}} = \frac{B_{\pi(i,j)}}{\abs{B_{\pi(i,j)}}}
\end{equation}
We then easily verify that:
\begin{equation}\label{eq:gram-ideal2}
    G_{ij} = e^{i\varphi_{ji}} \sqrt{B_{ji}}
\end{equation}

Therefore, knowledge of all the invariants $B_{ij}$ and all the invariants $B_{\pi(i,j)}$ corresponding to paths in the tree $T$ (or equivalently all the phases $\varphi_{ij}$), allows us to fully recover the Gram matrix $G$, in the choice of gauge given by this particular spanning tree.

To pass from the ideal case to the non-ideal case, we need to replace quantities appearing in equations \eqref{eq:gram-ideal1}, \eqref{eq:gram-ideal2} with their appropriate estimators. In this way we will obtain, entry by entry, a (biased) estimator $\widehat{G}$ of the target Gram matrix $G$. We will measure the total error of this estimator using the mean squared error (MSE): 
\begin{equation}
    \E\norm{G-\widehat{G}}^2_{2} = \sum_{i,j}\E\abs{G_{ij} - \widehat{G}_{ij}}^2 
\end{equation}
Note that by \Cref{prop:tvd-continuity} we could also use the $1$-norm $\E\norm{G - \widehat{G}}_{1}$, which gives a tighter bound on the distance between external states (by a factor of $\sqrt{n}$) but is more difficult to control analytically.

We now derive the full protocol and analyze its MSE error. First, for each pair $(i,j)$ we measure the estimator $\widehat{B}_{ij}$. Since very small values of $\widehat{B}_{ij}$ will be numerically unstable when computing square roots, we fix a threshold $\eta > 0$ and set $\widehat{G}_{ij}=0$ for all $(i,j)$ such that $\abs{\widehat{B}_{ij}} < \eta$. We remove these edges from the graph. The remaining pairs $(i,j)$ form a graph $\Gamma$, which we again assume to be connected.  We fix a spanning tree $T$ of $\Gamma$, chosen in a way that will minimize the incurred error, see \Cref{sec:tree} for the detailed procedure. 

We use the following auxilliary estimators. To estimate the square root of a Bargmann invariant we plug:
\begin{equation}
    \widehat{\sqrt{B_{ij}}} := \sqrt{\widehat{B}_{ij}}
\end{equation}
and for the phase part of $B_{\pi(i,j)}$ we plug:
\begin{equation}\label{eq:estimator-phase}
    \widehat{e^{i\varphi_{i,j}}} := \frac{\widehat{B}_{\pi(i,j)}}{\abs{\widehat{B}_{\pi(i,j)}}}
\end{equation}
We now define the final estimator $\widehat{G}$ for the Gram matrix. For diagonal terms we put $\widehat{G}_{ii}:=1$. For an edge $(i,j)$ we define the estimator $\widehat{G}_{ij}$ as follows:
\begin{enumerate}[(a)]
    \item if $(i,j) \in T$, then:
    \begin{equation}\label{eq:gram1}
        \widehat{G}_{ij} := \widehat{\sqrt{B_{ij}}}
    \end{equation}
    \item if $(i,j) \notin T$, then:
    \begin{equation}\label{eq:gram2}
    \widehat{G}_{ij} :=  \widehat{e^{i\varphi_{i,j}}} \cdot \widehat{\sqrt{B_{ij}}}
\end{equation}
\end{enumerate}
Note that the maximal order of Bargmann invariants that we need to measure is equal to diameter of the spanning tree plus one. 

The resulting estimator $\widehat{G}$ may not be a valid Gram matrix. To mitigate this we project $\widehat{G}$ onto the convex set of correlation matrices (positive semidefinite with ones on the diagonal), obtaining a matrix $p(\widehat{G})$. Since projections onto convex sets do not increase the Euclidean distance and $p(G)=G$ (since $G$ itself is a correlation matrix), we have: 
\begin{equation}
    \norm{p(\widehat{G}) - G}_2 = \norm{p(\widehat{G}) - p(G)}_2 \leq \norm{\widehat{G} - G}_2
\end{equation}
so the projection step does not increase the total error. The projection step can be efficiently implemented e.g. using Higham's projection algorithm \cite{nigham}.

The whole protocol can be summarized as follows.





\begin{mdframed}[
  linewidth=1pt,
  linecolor=black,
  backgroundcolor=gray!10,
  innertopmargin=6pt,
  innerbottommargin=8pt
]
\textbf{Gram matrix tomography protocol}
\begin{enumerate}
    \item For each pair $(i,j)$ measure the Bargmann invariant $B_{ij}$ using an estimator $\widehat{B}_{ij}$.
    \item Fix a threshold $\eta$ and remove edges with $\vert \widehat{B}_{ij} \vert < \eta$. 
    \item Find the spanning tree $T$ on the remaining graph using the algorithm from \Cref{sec:tree}.
    \item Based on the spanning tree measure the remaining Bargmann invariants $B_{\pi(i,j)}$ using an estimator $\widehat{B}_{\pi(i,j)}$. Estimate the phases using \eqref{eq:estimator-phase}.
    \item Estimate the Gram matrix entries $\widehat{G}_{ij}$ using \eqref{eq:gram1} and \eqref{eq:gram2}.
    \item Project the resulting matrix $\widehat{G}$ onto the set of correlation matrices.
\end{enumerate}
\end{mdframed}

For the purpose of analysis it is required that the choice of the spanning tree $T$ is statistically independent of the randomness of the Bargmann invariant estimators measured in step 5. This can be easily achieved e.g. by splitting the sample in half, using estimators from the first half to produce the optimal spanning tree $T$, discarding them and using the second half of the sample for the remaining steps of the protocol.

\begin{figure*}[tbhp!]
  \centering
  \begin{tikzpicture}


  \node[stepBox] (B1) at (0, 0) {};
  \node[stepNum] at ($(B1.north west)+(-4pt, 0pt)$) {1};
  \node[captionText, anchor=north] at ($(B1.north)+(0, -7pt)$) {Measure pairwise invariants};

  \coordinate (v1_1) at ($(B1.center)+(-1.15,  0.75)$);
  \coordinate (v2_1) at ($(B1.center)+( 1.15,  0.75)$);
  \coordinate (v3_1) at ($(B1.center)+(-1.15, -0.95)$);
  \coordinate (v4_1) at ($(B1.center)+( 1.15, -0.95)$);

  \draw[baseEdge] (v1_1) -- (v2_1);
  \draw[baseEdge] (v1_1) -- (v3_1);
  \draw[baseEdge] (v2_1) -- (v4_1);
  \draw[baseEdge] (v3_1) -- (v4_1);
  \draw[baseEdge] (v1_1) -- (v4_1);
  \draw[baseEdge] (v2_1) -- (v3_1);

  \pic at ($(v1_1)!0.5!(v2_1)$) {meter};
  \pic at ($(v3_1)!0.5!(v4_1)$) {meter};
  \pic at ($(v1_1)!0.5!(v3_1)$) {meter};
  \pic at ($(v2_1)!0.5!(v4_1)$) {meter};
  \pic at ($(v1_1)!0.32!(v4_1)$) {meter};
  \pic at ($(v2_1)!0.32!(v3_1)$) {meter};

  \node[gnode] at (v1_1) {};
  \node[gnode] at (v2_1) {};
  \node[gnode] at (v3_1) {};
  \node[gnode] at (v4_1) {};

  \node[stepBox] (B2) at (5.5, 0) {};
  \node[stepNum] at ($(B2.north west)+(-4pt, 0pt)$) {2};
  \node[captionText, anchor=north] at ($(B2.north)+(0, -7pt)$) {Remove small edges ($|\hat{B}_{ij}| < \eta$)};

  \coordinate (v1_2) at ($(B2.center)+(-1.15,  0.75)$);
  \coordinate (v2_2) at ($(B2.center)+( 1.15,  0.75)$);
  \coordinate (v3_2) at ($(B2.center)+(-1.15, -0.95)$);
  \coordinate (v4_2) at ($(B2.center)+( 1.15, -0.95)$);

  \draw[baseEdge] (v1_2) -- (v2_2);
  \draw[baseEdge] (v1_2) -- (v3_2);
  \draw[baseEdge] (v2_2) -- (v4_2);
  \draw[baseEdge] (v1_2) -- (v4_2);

  \node[gnode] at (v1_2) {};
  \node[gnode] at (v2_2) {};
  \node[gnode] at (v3_2) {};
  \node[gnode] at (v4_2) {};

  \node[stepBox] (B3) at (11.0, 0) {};
  \node[stepNum] at ($(B3.north west)+(-4pt, 0pt)$) {3};
  \node[captionText, anchor=north] at ($(B3.north)+(0, -7pt)$) {Choose a spanning tree $T$};

  \coordinate (v1_3) at ($(B3.center)+(-1.15,  0.75)$);
  \coordinate (v2_3) at ($(B3.center)+( 1.15,  0.75)$);
  \coordinate (v3_3) at ($(B3.center)+(-1.15, -0.95)$);
  \coordinate (v4_3) at ($(B3.center)+( 1.15, -0.95)$);

  \draw[baseEdge] (v1_3) -- (v2_3);
  \draw[treeEdge] (v1_3) -- (v3_3);
  \draw[treeEdge] (v1_3) -- (v4_3);
  \draw[treeEdge] (v2_3) -- (v4_3);

  \node[gnode] at (v1_3) {};
  \node[gnode] at (v2_3) {};
  \node[gnode] at (v3_3) {};
  \node[gnode] at (v4_3) {};


  \node[stepBox] (B4) at (11.0, -5.2) {};
  \node[stepNum] at ($(B4.north west)+(-4pt, 0pt)$) {4};
  \node[captionText, anchor=north] at ($(B4.north)+(0, -7pt)$) {Measure cycle phases $\hat{\varphi}_{ij}$};

  \coordinate (v1_4) at ($(B4.center)+(-1.15,  0.38)$);
  \coordinate (v2_4) at ($(B4.center)+( 1.15,  0.38)$);
  \coordinate (v3_4) at ($(B4.center)+(-1.15, -0.74)$);
  \coordinate (v4_4) at ($(B4.center)+( 1.15, -0.74)$);

  \draw[baseEdge] (v1_4) -- (v2_4);
  \draw[treeEdge] (v1_4) -- (v3_4);
  \draw[treeEdge] (v1_4) -- (v4_4);
  \draw[treeEdge] (v2_4) -- (v4_4);

  \draw[line width=1.5pt, draw=crimsonRed, dashed, -{Latex[scale=0.9]}]
    (v1_4) to[out=38, in=142, looseness=0.95] coordinate[midway] (arcMid) (v2_4);

  \pic at (arcMid) {meter};


  \node[font=\small] at ($(B4.center)+(0, -1.38)$) {
    $e^{i\hat{\varphi}_{ij}} = \dfrac{\hat{B}_{\pi(i,j)}}{|\hat{B}_{\pi(i,j)}|}$
  };

  \node[gnode] at (v1_4) {};
  \node[gnode] at (v2_4) {};
  \node[gnode] at (v3_4) {};
  \node[gnode] at (v4_4) {};

  \node[stepBox] (B5) at (5.5, -5.2) {};
  \node[stepNum] at ($(B5.north west)+(-4pt, 0pt)$) {5};
  \node[captionText, anchor=north] at ($(B5.north)+(0, -7pt)$) {Construct the matrix $\widehat{G}$};

  \node at ($(B5.center)+(0, -0.22)$) {
    $\widehat{G} = \begin{pmatrix}
      1 & \hat{G}_{12} & \cdots & \hat{G}_{1n} \\
      \hat{G}_{21} & 1 & \cdots & \hat{G}_{2n} \\
      \vdots & \vdots & \ddots & \vdots \\
      \hat{G}_{n1} & \hat{G}_{n2} & \cdots & 1
    \end{pmatrix}$
  };

  \node[stepBox] (B6) at (0, -5.2) {};
  \node[stepNum] at ($(B6.north west)+(-4pt, 0pt)$) {6};
  \node[captionText, anchor=north] at ($(B6.north)+(0, -7pt)$) {Project onto Gram matrices $p(\widehat{G})$};

  \node[font=\Large] at ($(B6.center)+(0, -0.2)$) {
    $\begin{aligned}
      &p(\widehat{G}) \succeq 0 \\[8pt]
      &p(\widehat{G})_{ii} = 1
    \end{aligned}$
  };

  \draw[flowArrow] (B1.east) -- (B2.west);
  \draw[flowArrow] (B2.east) -- (B3.west);
  \draw[flowArrow] (B3.south) -- (B4.north);
  \draw[flowArrow] (B4.west) -- (B5.east);
  \draw[flowArrow] (B5.west) -- (B6.east);

  \end{tikzpicture}
  \caption{\textbf{Reconstruction pipeline for Gram-matrix tomography.} \textbf{Steps 1--2 (Graph initialization and thresholding):} Pairwise Bargmann invariants $B_{ij} = |\langle \psi_i | \psi_j \rangle|^2$ are measured to extract the magnitudes $|G_{ij}|$ of all off-diagonal entries. Edges falling below an operational cutoff threshold $\eta > 0$ are removed to prevent noise amplification and numerical instabilities when computing square roots and phase normalizations. \textbf{Step 3 (Gauge fixing):} Since rephasing individual state vectors transforms $G$ by an unobservable diagonal unitary $D^\dagger G D$, a spanning tree $T$ is selected on the remaining overlap graph to fix this gauge convention by setting all tree-edge overlaps to be real and positive.  \textbf{Step 4 (Phase recovery):} Each non-tree edge closes a loop with the unique path $\pi(i,j)$ in $T$. Measuring the corresponding higher-order cyclic Bargmann invariant $B_{\pi(i,j)}$ along this path directly yields the relative phase factor $e^{i\varphi_{ij}} = B_{\pi(i,j)} / |B_{\pi(i,j)}|$. \textbf{Steps 5--6 (Assembly and convex projection):} Combining the measured moduli and phases yields an intermediate estimator $\hat{G}$. To correct for finite-sample violations of positive semidefiniteness, $\hat{G}$ is projected onto the convex set of Gram matrices ($p(\hat{G}) \ge 0$, $p(\hat{G})_{ii} = 1$), which enforces physical validity without increasing the Frobenius reconstruction error.}
\label{fig:reconstruction_protocol}
  \label{fig: Main idea}
\end{figure*}

\begin{proposition}\label{th:main-MSE-simple}
    Consider a mixed state $\rho$ with purity parameter $\varepsilon$ as in \eqref{eq:purity-parameter} and estimators of Bargmann invariants using in total $N$ experimental samples. For threshold value $\eta=0$, the above protocol gives an estimator $\widehat{G}$ for the Gram matrix $G$ with mean square error bounded by:
    \begin{equation}\label{eq:main-theorem-mse-bound}
        \E\norm{G - \widehat{G}}_{2}^{2} \leq \frac{A(n)}{N}+B(n)\varepsilon^2 
    \end{equation}
    where $A(n),B(n)$ are explicit functions depending on the spanning tree $T$, estimators used and the magnitudes of Bargmann invariants.
\end{proposition}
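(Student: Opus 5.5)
The MSE decomposes entrywise, $\E\norm{G-\widehat G}_2^2=\sum_{i\neq j}\E|G_{ij}-\widehat G_{ij}|^2$, since $\widehat G_{ii}=G_{ii}=1$. I would bound each entry using only the per-invariant MSE \eqref{eq:bargmann-MSE}, $\E|B-\widehat B|^2\le 4m^2\varepsilon^2+k\sigma^2/N$. Summing over the $O(n^2)$ pairs then gives an expression of the form $A(n)/N+B(n)\varepsilon^2$. The projection step $p(\cdot)$ does not increase the Frobenius error, as already shown, so it suffices to analyse the unprojected $\widehat G$. With $\eta=0$ no edges are removed, so the graph and the spanning tree $T$ are fixed (or chosen on an independent half-sample), and $T$ can be treated as deterministic.

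\textbf{Tree edges.} Here $G_{ij}=\sqrt{B_{ij}}$ and $\widehat G_{ij}=\sqrt{\widehat B_{ij}}$. I would use the elementary inequality, valid for $b>0$ and any complex $\widehat b$ with the principal branch (or after taking the real part and clipping at $0$),
$$|\sqrt b-\sqrt{\widehat b}|\le |b-\widehat b|/\sqrt b .$$
This gives $\E|G_{ij}-\widehat G_{ij}|^2\le \E|B_{ij}-\widehat B_{ij}|^2/B_{ij}\le (16\varepsilon^2+k\sigma^2/N)/B_{ij}$.

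\textbf{Non-tree edges.} Write $\widehat G_{ij}-G_{ij}=\widehat{e^{i\varphi}}(\widehat{\sqrt{B_{ij}}}-\sqrt{B_{ij}})+\sqrt{B_{ij}}(\widehat{e^{i\varphi}}-e^{i\varphi})$. The first term is handled exactly as for tree edges, since the phase factor has modulus one. For the second term I would use the normalization inequality $|z/|z|-w/|w||\le 2|z-w|/|w|$ with $z=\widehat B_{\pi(i,j)}$ and $w=B_{\pi(i,j)}$. Combining the two terms via $|a+b|^2\le 2|a|^2+2|b|^2$ gives
$$\E|G_{ij}-\widehat G_{ij}|^2\le \frac{2}{B_{ij}}\Big(16\varepsilon^2+\frac{k\sigma^2}{N}\Big)+\frac{8B_{ij}}{|B_{\pi(i,j)}|^2}\Big(4\ell_{ij}^2\varepsilon^2+\frac{k\sigma^2}{N}\Big),$$
where $\ell_{ij}$ is the length of the cycle, at most $\mathrm{diam}(T)+1$.

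Summing over all pairs yields
$$A(n)=k\sigma^2\sum_{i\ne j}\Big(\frac{2}{B_{ij}}+\frac{8B_{ij}}{|B_{\pi(i,j)}|^2}\Big),\qquad B(n)=\sum_{i\ne j}\Big(\frac{32}{B_{ij}}+\frac{32\ell_{ij}^2B_{ij}}{|B_{\pi(i,j)}|^2}\Big).$$
Both are explicit in the tree, the estimator variance $\sigma^2$, the number $k$ of interferometers, and the invariant magnitudes, as the proposition requires. If all $B_{ij}$ and cycle invariants are bounded below by a constant, then with $k\sim n^2$ this gives $A=O(n^4)$ and $B=O(n^2\,\mathrm{diam}^2)$.

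\textbf{Main obstacle.} The delicate point is the unbounded sensitivity of $\sqrt{\cdot}$ and $z\mapsto z/|z|$ near zero. The inequalities above are global, valid for all $\widehat b$, so no truncation or concentration argument is needed, but the resulting constants blow up as $B_{ij}$ or $|B_{\pi(i,j)}|\to 0$. This is exactly why $A(n)$ and $B(n)$ depend on the invariant magnitudes. I would verify these two scalar inequalities carefully, including the complex/negative case of $\widehat B_{ij}$: if $\widehat b\le 0$, then $|\sqrt b-0|=\sqrt b\le |b-\widehat b|/\sqrt b$ still holds, and for complex $\widehat b$ one should first replace it by $\max(\mathrm{Re}\,\widehat b,0)$, which only decreases $|b-\widehat b|$. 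For the phase, the case $z=0$ requires fixing any unit value, and then $|z/|z|-w/|w||\le 2\le 2|z-w|/|w|$ still holds.
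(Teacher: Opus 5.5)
Your proposal is correct and follows the paper's proof essentially step for step. It uses the same bound $|\sqrt{b}-\sqrt{\widehat b}|\le |b-\widehat b|/\sqrt{b}$ for tree edges, and for non-tree edges the same splitting $\widehat X\widehat Y-XY=\widehat X(\widehat Y-Y)+Y(\widehat X-X)$ together with $|z/|z|-w/|w||\le 2|z-w|/|w|$, leading to identical constants. Your handling of negative or complex $\widehat B_{ij}$ (principal branch or clipping) and of $\widehat B_{\pi(i,j)}=0$ makes rigorous a point the paper passes over tacitly.
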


The proof of \Cref{th:main-MSE-simple} is technical and we defer it to \Cref{sec:estimators}, where we also give a more detailed expression and bounds for the functions $A$ and $B$. In particular, if some of the Bargmann invariants are very small, it is useful to fix a nonzero threshold $\eta >0$ and we describe its influence on the total error in \Cref{sec:estimators}. 


\begin{proof}[Proof of \Cref{th:protocol} and \Cref{th:protocol2}]
Let $\psi$ be the pure state from \eqref{eq:trace-inequality} and $\psi'$ be the state whose Gram matrix is reconstructed by the protocol. From \eqref{eq:prop-gg2} we get:
\begin{equation}
    \E \norm{\Omegaext(\psi) -\Omegaext(\psi')}^2_{1} \leq n\cdot \E\norm{G-\widehat{G}}^2_2 \leq  n\left( \frac{A}{N} + B \varepsilon^2 \right)
\end{equation}
By \eqref{eq:trace-inequality-external} and the triangle inequality we immediately get:
\begin{gather}\label{eq:pre-markov}
    \E \norm{\Omegaext(\rho) -\Omegaext(\psi')}^2_{1} \leq 8n^2\varepsilon^2 +  n\left(  B \varepsilon^2 + \frac{A}{N}\right)
\end{gather}

Using Markov inequality now gives:
\begin{align}
    \mathbb{P}\left( \norm{\Omegaext(\rho) -\Omegaext(\psi')}_{1} \geq \delta \right) &\leq \frac{\E\norm{\Omegaext(\rho) -\Omegaext(\psi')}^2_{1}}{\delta^2}\\ &\leq \frac{B' \varepsilon^2 + \frac{A'}{N}}{\delta^2}
\end{align}
as desired, where $A', B'$ are deduced from \eqref{eq:pre-markov}.
        \end{proof}

\section{Accessing relational information}\label{sec:relational-info}

The reconstruction protocol introduced in Sec.~\ref{sec: Tomography} relies on the experimental accessibility of the relevant Bargmann invariants. While  characterizing an arbitrary mixed state generally involves invariants  across all cycle types, reconstructing pure or nearly pure states requires only a subset of $O(n^2)$ invariants: pairwise overlaps to fix the moduli of  the Gram matrix and cycle invariants along paths in the spanning tree to fix the relative phases. The order of these cycle invariants is determined by the tree topology and can in principle reach $n$. Long cycles, however, are only required when many pairwise overlaps are small, a situation that rarely occurs for realistic sources, where photons typically exhibit substantial mutual overlap. In particular, whenever one photon has a non-negligible overlap with all the others, a star-shaped spanning tree can be chosen and all required invariants are of third order.

In this section, we present two strategies to measure these invariants using passive linear optics and number-resolved detection. The first approach (Fourier method) uses a Fourier interferometer to individually target the needed invariants, while the second approach (one-shot method) extracts relational information simultaneously from the particle-number correlators of a single generic interferometer through linear inversion.

\paragraph{Fourier method.-} As discussed above, a Bargmann invariant can be expressed as the expectation value of the cyclic shift operator,
\begin{equation}
    \Tr{\U{C}_{k}\bigotimes_{i=1}^{k}\rho_{i}}=\Tr{\rho_{1}...\rho_{k}}
\end{equation}
where $\U{C}_{k}|\psi_{1},\psi_{2},\dots,\psi_{k}\rangle=|\psi_{2},\dots,\psi_{k},\psi_{1}\rangle$. As shown in \cite{novoNativeLinearopticalProtocol2026}, in linear optics, such expectation value can be evaluated by diagonalizing the cyclic shift operator $\U{C}_{k}=\U{F}_{k}\U{D}_{k}\U{F}_{k}^{\dagger}$, where $\U{F}_{k}$ is the $k$-dimensional Fourier interferometer, and $\hat{D}$ is an operator diagonal in the Fock basis, which can be expressed as 
\begin{equation}
    \U{D}_{k}=\prod_{j=1}^{k}e^{\frac{2\pi i}{k}(j-1)\hat{n}_{j}} .
\end{equation}
This implies that the task of measuring an arbitrary Bargmann invariant can be reduced to the sampling problem, i.e. let $\Omegaext(\rho)$ evolve in the interferometer $\U{F}_{k}\oplus \U{I}_{n-k}$ and perform particles number detection on the output, to then evaluate $\langle \U{D}_{k}\rangle$ by postprocessing the data. Since $\lVert\U{D}_{k}\rVert_{\infty}=1$, estimating any Bargmann invariant up to precision $\epsilon$ and probability $1-\delta$ can be done using $\mathcal{O}\left(\epsilon^{-2}\ln\delta^{-1}\right)$ samples. Notice that for Bargmann of order $2$, the Fourier matrix reduces to a balanced beam splitter and the protocol coincides with a standard Hong-Ou-Mandel (HOM) test. The complete procedure is summarized below.

\begin{algorithm}[H]
\caption{Fourier method for Gram matrix tomography}
\label{Prot: Fourier}
\begin{algorithmic}
\State \textbf{Input:} external state \(\Omega_{\mathrm{ext}}(\rho)\)
\For{\( 1\leq i<j\leq n\)}
    \State Use the HOM interference to estimate the visibility to estimate $|G_{ij}|^{2}$
\EndFor
\State Use the value of $|G_{ij}|^{2}$ to find the Spanning tree $T$
\For{\( \text{each } (i,j)\not\in T\)}
    \State Use the Fourier interferometer to estimate $B_{\pi(i,j)}=\Tr{\rho_{i}\rho_{k_{1}}\dots \rho_{k_{l}}\rho_{j}}$ and obtain $e^{i\phi_{i,j}}$.
\EndFor
\State Construct $\hat{G}$
\State Project over the space of Gram matrix $\hat{G}\gets p(\hat{G})$
\State \textbf{Return} $\hat{G}$
\end{algorithmic}
\label{Protocol: Fourier}
\end{algorithm}

Notice that this protocol has an intrinsic overhead in the sample complexity due to the fact that each of the $\mathcal{O}(n^{2})$ quantities must be measured separately, and thus $\mathcal{O}(n^{2})$ experiments must be run to obtain a good estimate of the desired Gram matrix. 


\paragraph{One-Shot method.-} The one-shot method follows a different route, exploiting the fact that partial distinguishability affects the output statistics of multiphoton scattering experiments such as Boson Sampling, in which single photons evolve through a linear interferometer and particle numbers are measured at the output~\cite{aaronson2010computationalcomplexitylinearoptics}. It can be shown that, in the interferometric scenario we consider, the probability of detecting an outcome $\boldsymbol{k}=(k_{1},...,k_{m})$ when the unitary evolution is described by the matrix $U$ can be computed as 
\begin{equation}
    p_{\boldsymbol{k}}(U|G)=\frac{1}{\boldsymbol{k}!}\sum_{\pi \in S_{n}}\perm\left(M^{(\boldsymbol{k})}\odot\overline{M}_{\pi}^{(\boldsymbol{k})}\right)\prod_{i=1}^{n}G_{i,\pi_{i}}
\end{equation}
where $\boldsymbol{k}!=\prod_{i}k_{i}!$, $\perm(A)$ is the permanent of the matrix $A$, $M^{(\boldsymbol{k})}$ is the relevant scattering matrix associated with the outcome $\boldsymbol{k}$, $\odot$ is the element-wise product and $\overline{A}_{\sigma}$ is the matrix conjugate of $A$ whose columns have been permuted according to the permutation $\pi$ \cite{tichySamplingPartiallyDistinguishable2015}. Extracting a specific invariant from these probabilities, i.e. $\Tr\left[\rho_{1}\cdots\rho_{k}\right]=G_{12}G_{23}\cdots G_{k1}$, would require estimating typically very small outcome probabilities to high precision and inverting a linear system whose number of terms grows factorially with $n$. This is feasible for few photons ($n=3,4$), but quickly becomes impractical as $n$ increases. An alternatively scalable observable can be found in the number particles correlators. As shown in Refs.~\cite{walschaersStatisticalBenchmarkBosonSampling2016, rodariSemideviceIndependentCharacterization2024, anguitaExperimentalValidationBoson2025, robbio2026complementaritybosonicfermionicmanybody}, it is possible to directly observe deviation from the behavior of indistinguishable particles by measuring correlators of type $\langle \hat{n}_{i}\hat{n}_{j}\rangle$. We note that such correlators may be written in the form
\begin{align}\label{eq:correlators2}
    \langle \hat{n}_{i}\hat{n}_{j}\rangle &= \delta_{ij} \sum_{k=1}^n |U_{ik}|^2 + \sum_{k \neq l}^{n}  |U_{ik}|^2 |U_{jl}|^2 + \\
    &+ \sum_{k \neq l}^{n}|G_{kl}|^2 U_{ik}^* U_{il} U_{jl}^* U_{jk}  \notag
\end{align}  
The second moments are thus affine in the squared overlaps $|G_{kl}|^{2}$, i.e., in the second-order Bargmann invariants. Grouping the equivalent terms $(k,l)$ and $(l,k)$, they define a linear system $A_{2}x_{2}=b_{2}$, where $x_{2}$ collects the unknown squared overlaps, $A_{2}$ depends only on $U$, and $b_{2}$ combines the measured correlators with $U$-dependent terms:
\begin{align}
    (b_{2})_{(ij)}=\langle \hat{n}_{i}\hat{n}_{j}\rangle+ \sum_{k \neq l} |U_{ik}|^2 |U_{jl}|^2 - \delta_{ij} \sum_{k=1}^n |U_{ik}|^2 \quad ,\label{eq:Axb1} \\
    (x_{2})_{(ij)}=|G_{ij}|^{2} \quad , \quad (A_{2})_{(ij),(kl)}=U_{ik}^* U_{il} U_{jl}^* U_{jk} \quad \label{eq:Axb2}.
\end{align}
Once $\langle \hat{n}_{i}\hat{n}_{j}\rangle$ has been estimated for all $i,j$ from the collected samples, inverting this system (e.g., by least squares) yields estimators of $|G_{kl}|^{2}$. Similarly, the third-order correlators $\langle \hat{n}_{i}\hat{n}_{j}\hat{n}_{k}\rangle$ depend linearly on the second-order invariants $|G_{ab}|^{2}$ and on the third-order invariants $G_{ab}G_{bc}G_{ca}$. Since the former are already known from the previous step, the latter are the only unknowns and can again be obtained by solving a linear system $A_{3}x_{3}=b_{3}$. A detailed derivation of the correlators is given in Appendix~\ref{sec: Correlator derivation}. In principle, the procedure extends to higher orders, as $r$-th order correlators depend linearly on invariants up to order $r$. However, the number of unknowns grows as $\mathcal{O}(n^{r})$, which makes the approach impractical beyond low orders. It nonetheless remains well suited to the regime discussed above, in which third-order invariants suffice for the reconstruction. Additionally, in Appendix \ref{sec: size reduction} we discuss how to reduce the size of the system to only the $\mathcal{O}(n^2)$ invariant needed without affecting the numerical stability of the linear system. The complete procedure is summarized below.

\begin{algorithm}[H]
\caption{One-Shot method for Gram matrix tomography}
\label{Prot: Oneshot}
\begin{algorithmic}
\State \textbf{Input:} external state \(\Omega_{\mathrm{ext}}(\rho)\), random interferometer $\U{U}$
\State Sample in the Fock basis from $\U{U}\Omegaext(\rho)\U{U}^{\dagger}$ and collect samples $\{\boldsymbol{s}^{(1)},...,\boldsymbol{s}^{(M)}\}$.
\State Build the matrix $A_{2},A_{3}$ and the vectors $b_{2},b_{3}$ from $\{\boldsymbol{s}^{(1)},...,\boldsymbol{s}^{(M)}\}$ and $U$.
\State Solve the linear $x_{2}=A_{2}b_{2}$ to obtain $\{|\hat{G}_{ij}|^{2}\}$.
\State Use the value of $|\hat{G}_{ij}|^{2}$ to find the Spanning tree $T$.
\State Solve the linear $x_{3}=A_{3}b_{3}$ to obtain $\{\hat{G}_{ij}\hat{G}_{jk}\hat{G}_{ki}\}$.
\State Construct $\hat{G}$
\State Project over the space of Gram matrix $\hat{G}\gets p(\hat{G})$
\State \textbf{Return} $\hat{G}$
\end{algorithmic}
\label{Protocol: One-shot}
\end{algorithm}

We now compare the two approaches. The one-shot method requires a single interferometer configuration, whereas the Fourier method requires multiple configurations. As discussed in Appendix~\ref{sec: Conditioning number}, random interferometers with at least as many modes as particles ($m\geq (1+\delta) n$ with $\delta>0$) yield well-conditioned matrices with high probability and, consequently, small inversion errors. Moreover, since it relies solely on classical post-processing of the collected samples, the one-shot method can serve as an efficient certification tool for Boson Sampling experiments. On the other hand, as shown later, its reconstruction error is larger than that of the Fourier method, so that more samples are needed to reach the same precision, which may be experimentally demanding.

\section{Numerical results}
\FloatBarrier

While we have proved appealing theoretical upper bounds, as bosonic linear optical experiments continue to develop, it is of interest to see how our reconstruction procedure works in practice. To this end we have applied our reconstruction algorithm with simulated samples using both the one-shot method, where all relevant Bargmann invariants are measured from one interferometer, and Fourier method where each relevant Bargmann is decomposed into factors given by the cycle decomposition of a permutation and each factor is measured using a dedicated Fourier interferometer. As shown in section~\ref{sec:estimators-error} the dependence of the Gram matrix reconstruction on the specific Bargmann estimation procedure is entirely due to the dependence of the quantity $A$ in the statement of Proposition~\ref{th:main-MSE-simple} on the $\sigma^2$ upper bounding the variances in equation~\eqref{eq:bargmann-variance}. For the two methods of estimating Bargmanns we have considered, the effective $\sigma$ is controlled by quite different aspects. The Fourier method uses a dedicated experiment to estimate each Bargmann invariant needed by the protocol, which means that the $N$ in the scaling $\sigma^2/N$ of equation~\eqref{eq:bargmann-variance} is reduced by the number of individual Bargmann invariant being estimated. The one-shot method, on the other hand does not suffer from this limitation, however there appears to be no ``free lunch'', and based on the numerical experiments we have performed the Bargmann invariants estimated with the one-shot method appear to have substantially larger variance for the same $N$ as those from the Fourier method, resulting from the inversion of the linear system. Overall these effects seem to lead to the Fourier method having better performance on the cases we have investigated numerically, despite the lower number of shots available for each Bargmann being estimated. In all of the figures in this section, where we plot upper bounds depending on the parameter $\sigma$ it is estimated empirically from data for the one-shot method and set to the upper bound $\sigma \leq 1$ valid for the Fourier method. In each case the one-shot method is implemented with a single Haar random unitary acting on twice as many modes as there are photons while the Fourier method spends half of the allocated experimental shots estimating the $n(n-1)/2$ second-order Bargmann invariants (i.e. the overlap of each state), and splits the remaining shot-budget evenly across the required higher order Bargmann invariants. In the case studied in Figure~\ref{fig:star-graph}, for example, the Fourier method first estimates $21$ second-order overlaps and $15$ third order Bargmann invariants each corresponding a path through photon $1$ connecting photons $i$, $j$ such that $1 < i < j \leq 7$. The error bars in each plot are determined from the sample variance across $10$ independent repetitions of each data-point. Throughout this section matrix $l^p$ distances refer to the Shatten norms.

\begin{figure*}[tbhp!]
    \centering
    \begin{subfigure}{0.49\textwidth}   \includegraphics[width=\linewidth]{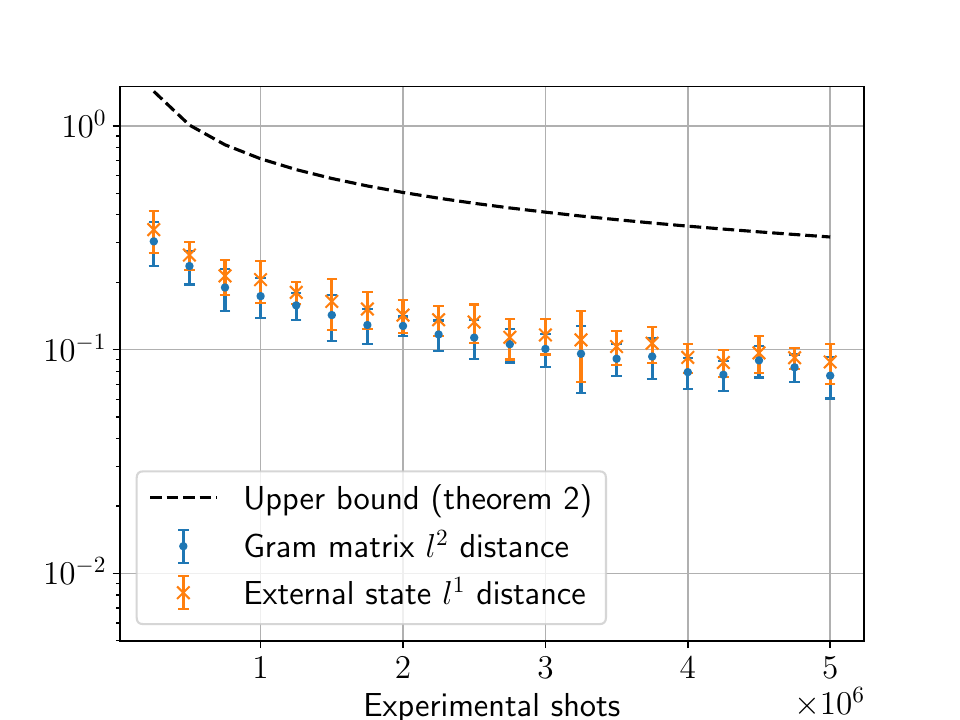}
    \caption{One shot method}
    \end{subfigure}
    \hfill
    \begin{subfigure}{0.49\textwidth}   \includegraphics[width=\linewidth]{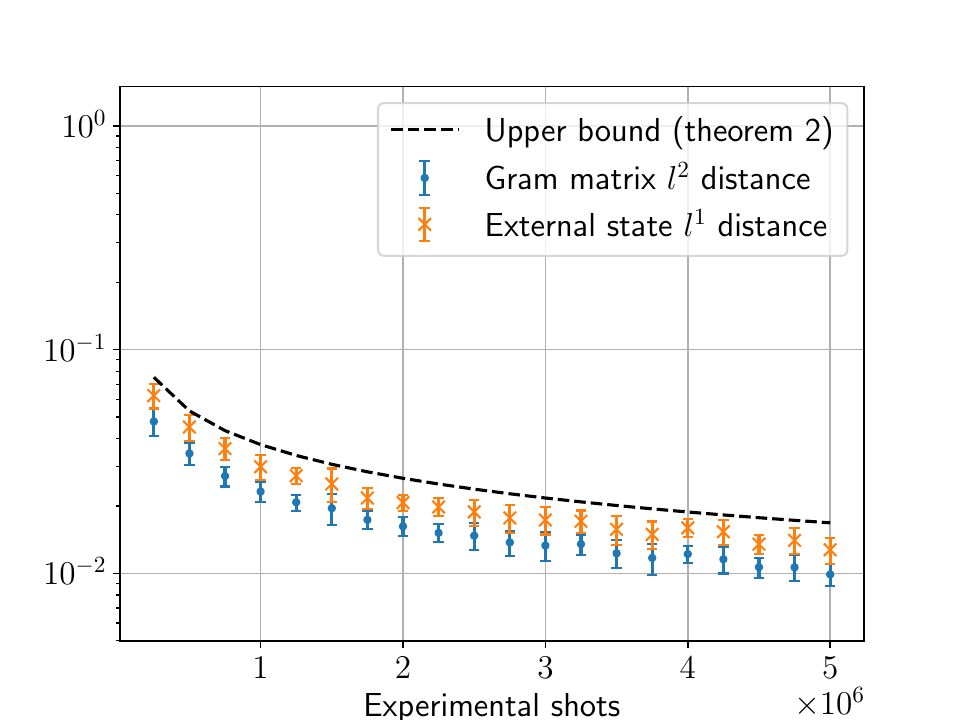}
    \caption{Fourier method}
    \end{subfigure}
    \caption{Comparison of our upper bound, the Frobenius distance between the recovered and true Gram matrices and the trace-norm distance between the recovered and true external states for a system of $7$ photons arranged in a ``star'' pattern. Photon $1$ is in pure state $\ket{1}$ while photons $j>1$ have internal states $x\ket{1} + \sqrt{1-x^{2}} \ket{j}$, where we fixed $x=0.9$. On the left (a) the reconstruction via the one-shot method is shown while the right (b) shows the Fourier method. The points and error bars plotted represent the mean and standard deviation in the Frobenius norm taken across $10$ independent reconstructions for each number of shots plotted.}
    \label{fig:star-graph}
\end{figure*}

Figures~\ref{fig:star-graph},~\ref{fig:qubit-sic} and~\ref{fig:qubit-mixed-main} include direct comparisons of the two methods. We see that, for the examples we have investigated, the performance of the one-shot method at the same number of experimental shots is noticeably worse than the one of the Fourier method. In an experimental context this should be contrasted with the lower experimental requirements for the one-shot method, rather than requiring specific, dedicated interferometers for each Bargmann measured, it can, in principle be performed using experimental data from \emph{any} single interferometer. Moreover we have theoretical evidence (see Theorem \ref{th:main-condition} in Appendix \ref{sec: Conditioning number}) that the one-shot method performs well with even a uniformly chosen random interferometer, complementing the numerical evidence of this section. 

In the cases we have compared the one-shot's performance to our theoretical upper bound, e.g. in Figures~\ref{fig:star-graph},~\ref{fig:qubit-sic} and~\ref{fig:qubit-mixed-main} we observe a significant ``gap'' between the theoretical bounds and the true performance. The theoretical guarantees appear to be rather pessimistic. One possible explanation for this is that the theoretical bounds take a single variance parameter $\sigma$, characterizing the Bargmann estimators. However the one-shot methods behaves slightly differently for the second and third order Bargmanns -- the second order Bargmanns are reconstructed first and the third order ones are reconstructed using a second linear system. It may be the case that these two sets of Bargmanns have differing variances, and using a single bound for both is overly coarse. Across both reconstruction methods we observe that Gram matrix tomography in practice may be performed with significantly fewer shots than our theoretical upper bounds would predict, with the ``in practice'' error at $250,000$ shots being comparable to the upper bound at $10$ times as many.
\begin{figure*}[tbhp!]
    \centering
    \begin{subfigure}{0.49\textwidth}   \includegraphics[width=\linewidth]{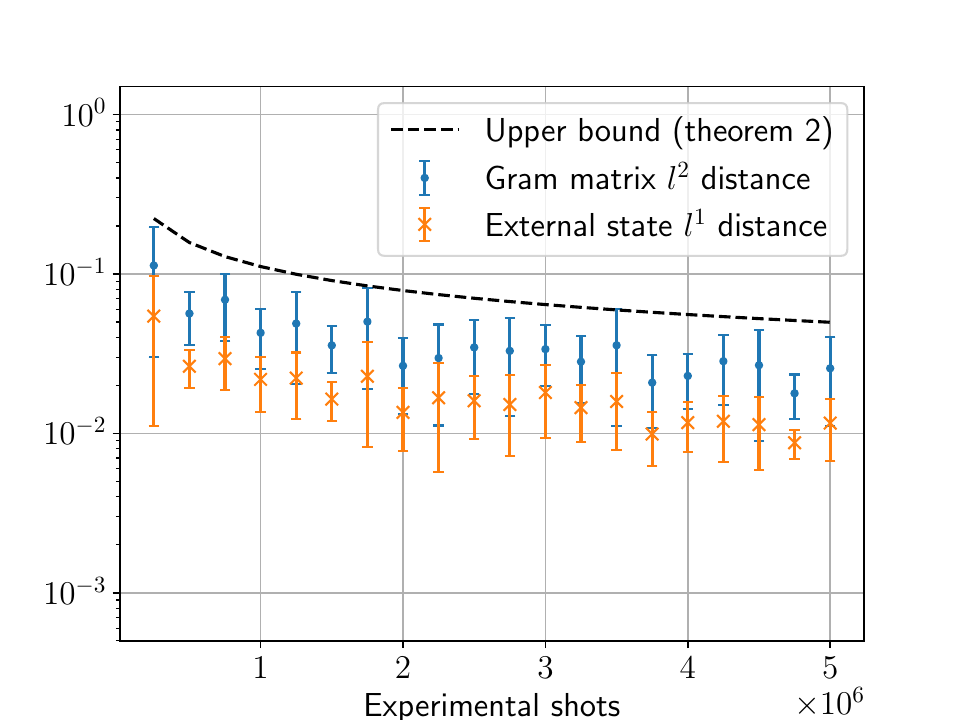}
    \caption{One-shot method}
    \end{subfigure}
    \hfill
    \begin{subfigure}{0.49\textwidth}   \includegraphics[width=\linewidth]{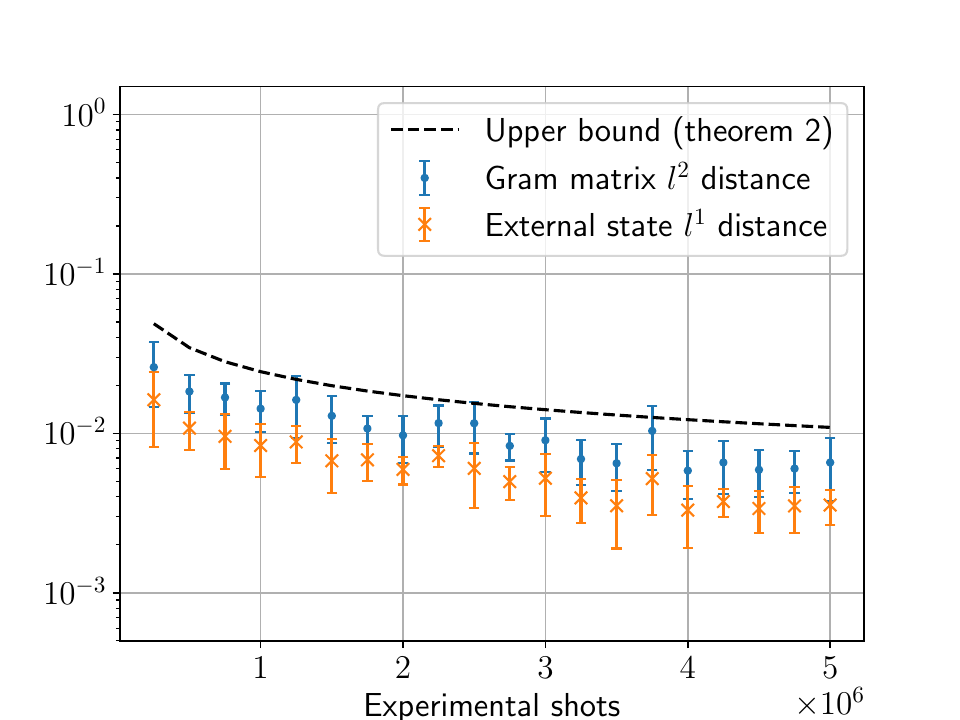}
    \caption{Fourier method}
    \end{subfigure}
    \caption{Comparison of our upper bound, the Frobenius distance between the recovered and true Gram matrices and the trace-norm distance between the recovered and true external states. Here there are $4$ photons whose single qubit states are the $4$ states whose projectors are proportional to the $4$ effects of a qubit SIC-POVM. For panel (a) we determine the Bargmann invariants with the one-shot method, while panel (b) make use of the Fourier method.}
    \label{fig:qubit-sic}
\end{figure*}

While we focus on the Gram matrix, as the object that determines the complete relational information, the external state $\Omega$ more directly specifies the outcome probabilities of measurements in linear optical experiments. In general the external state is defined as a factorial-sized matrix, so reconstruction rapidly becomes inefficient as the number of particles grows. Nonetheless the examples in this section, with up to $7$ photons per experiment, are small enough that we can reconstruct the external state, and therefore plot the quantities appearing in both sides of inequality~\eqref{eq:prop-gg2}. The trace-norm difference of external states is of both significant practical and theoretical relevance, as it controls the accuracy of the predictions one would make for the outcomes of measurements one can make on the external state. In particular, given a good approximation of the external state in trace-norm, one can obtain a good estimate of the probability distribution (in total variation distance) obtained by applying an \emph{arbitrary} interferometer and performing photon-number measurements on each mode.

The constellation of states in Figure~\ref{fig:qubit-sic} is notable due to the significant coherence between the states involved, forming a regular tetrahedron whose circumsphere is the Bloch sphere. Explicitly in the computational basis one can express the $4$ state vectors as
\begin{equation}
    \begin{aligned}
        \ket{\phi_1} &= \ket{0}, &
    \ket{\phi_2} &= \frac{\ket{0}+\sqrt{2}\ket{1}}{\sqrt{3}},\\ 
    \ket{\phi_3} &= \frac{\ket{0}+\sqrt{2}e^{i\frac{2\pi}{3}}\ket{1}}{\sqrt{3}},&
    \ket{\phi_4} &= \frac{\ket{0}+\sqrt{2}e^{i\frac{4\pi}{3}}\ket{1}}{\sqrt{3}}.
    \end{aligned}
\end{equation}
This constellation has significant applications in quantum information theory more broadly, as the states are (up to an overall normalization) equal to the effects of a qubit SIC POVM.  

\begin{figure*}[thbp!]
    \centering
    \begin{subfigure}{0.49\textwidth}   \includegraphics[width=\linewidth]{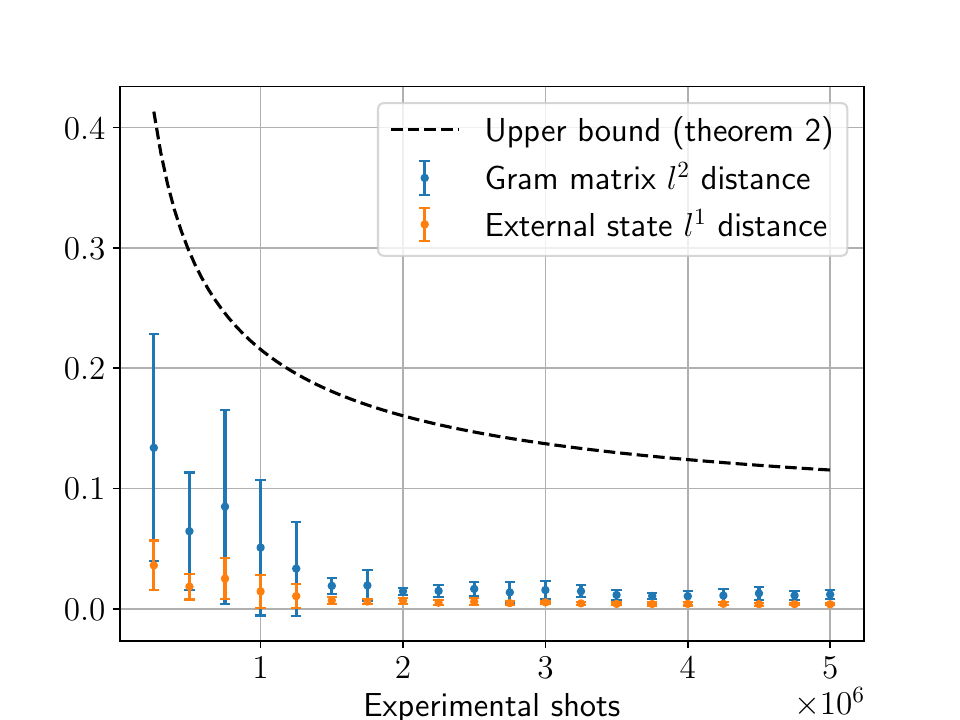}
    \end{subfigure}
    \hfill
    \begin{subfigure}{0.49\textwidth}   \includegraphics[width=\linewidth]{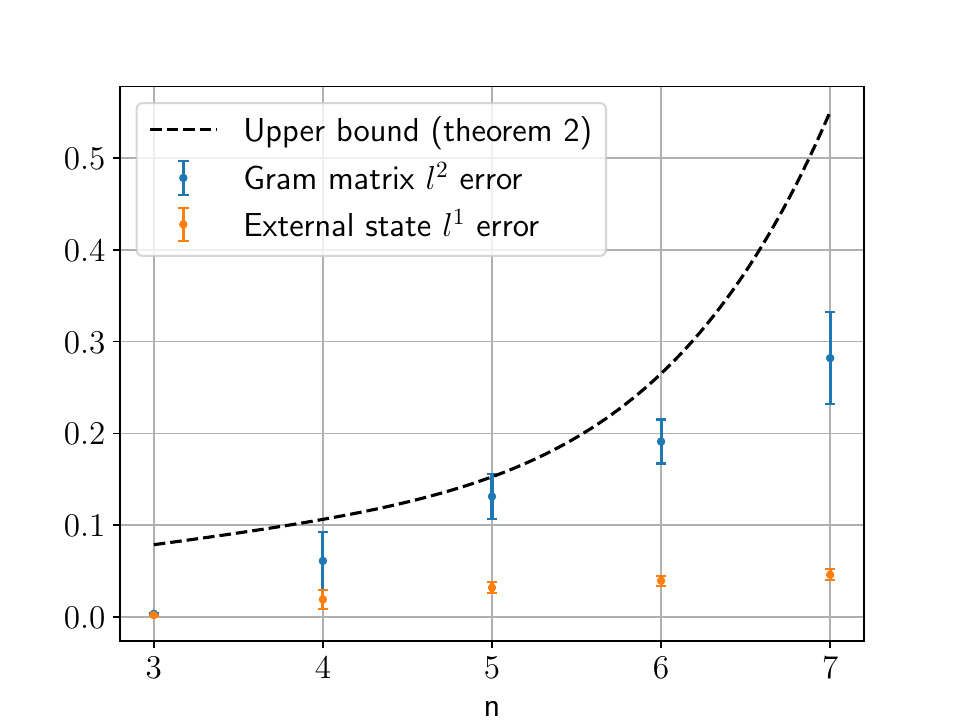}
    \end{subfigure}
    \caption{We apply our method, with Bargmann invariants measured using the Fourier method, to a system of ``circle dance states'', where we have $j=1\hdots n-1$ $\braket{\phi_j}{\phi_{j+1}} = \frac{1}{2}$ for $j=1\hdots n-1$ with non-trivial complex phase $\braket{\phi_1}{\phi_n} = \frac{i}{2}$. On the left we show the impact of increasing the number of experimental shots on a fixed-size system $n=5$, while on the right we fix the number of experimental shots at 5,000,000 and increase $n$. Our method is not expected to be efficient in the number of shots as $n$ grows, since we have to estimate an exponentially decaying Bargmann corresponding to the full cycle acting on all $n$ states. Here the threshold value $\eta$ is chosen to be 0.01.
 }
    \label{fig:circle-dance}
\end{figure*}
In Figure~\ref{fig:circle-dance} we apply our method to a setup in which the full relational information is impossible to recover by measuring Bargmanns of any fixed order less than the number of particles. In particular measuring visibilities and third order Bargmanns is not sufficient. The circle-dance states~\cite{shchesnovichCollectivePhasesIdentical2018} satisfy $\braket{\phi_j}{\phi_k} = 0$ unless $\abs{j-k} \leq 1$, and in our instantiation we also set $\braket{\phi_j}{\phi_{j+1}} = \frac{1}{2}$ for $j=1\hdots n-2$ and $\braket{\phi_{n-1}}{\phi_{n}} = \braket{\phi_{1}}{\phi_{n}} = \frac{i}{2}$. While our method is not expected to be efficient for states of this form as the number of particles grows, since one has to estimate the $B_\pi$ for the $\pi$ cyclically shifting $i \mapsto i+1 \text{ mod } n$ which decays exponentially in $n$, we observe performance significantly better than our theoretical upper bound implies, in particular the quantity which quantifies our ability to make predictions about the outcomes of photonic experiments, the $l_1$ error of the external states, remains very low even as the upper bound and the error in the Gram matrices grows. In Figure~\ref{fig:qubit-mixed-main} we explore the impact of having impure internal states on our protocol. The upper bound from theorem~\ref{th:main-MSE-simple} consists of two parts, one part which could be termed ``bias'' arising from the impurity of the states, and one part which could be called ``variance'' which decays with the number of experimental shots. As expected, in Figure~\ref{fig:qubit-mixed-main} as the number of experimental shots increases we observe our bound and the true observed errors in our reconstructions becoming asymptotic to the bias. Conversely as the single photon error rate becomes small at a fixed number of shots, we observe the quantities plotted becoming asymptotic to the ``variance'' term.

\begin{figure*}[tbhp!]
    \centering
    \begin{subfigure}{0.49\textwidth}   \includegraphics[width=\linewidth]{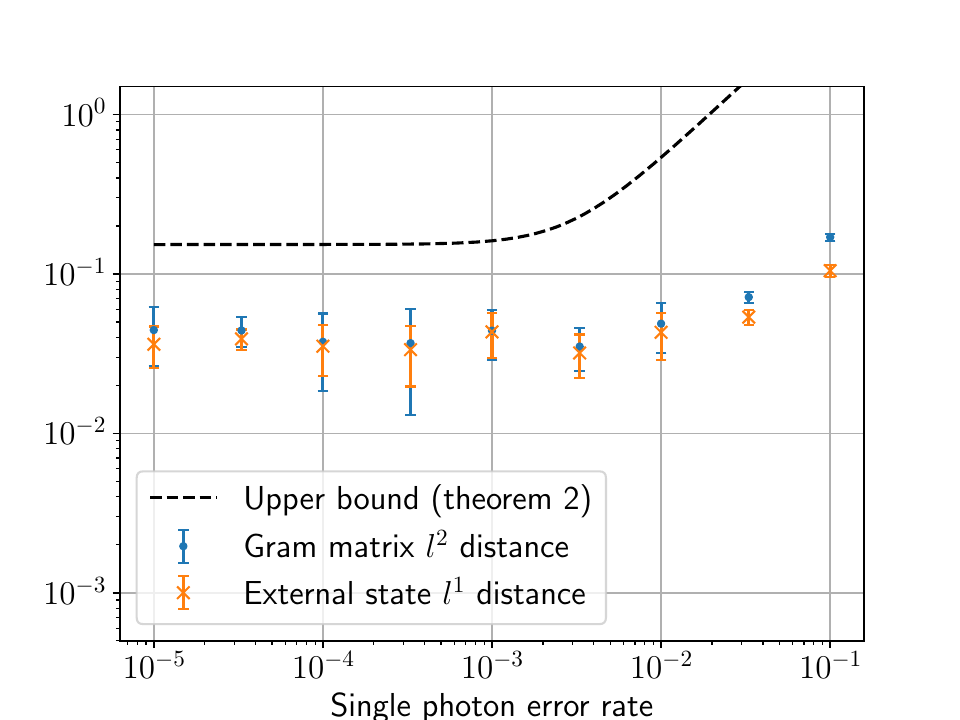}
    \caption{One shot method}\label{fig:qubit-mixed-impurity-oneshot}
    \end{subfigure}
    \hfill
    \begin{subfigure}{0.49\textwidth}   \includegraphics[width=\linewidth]{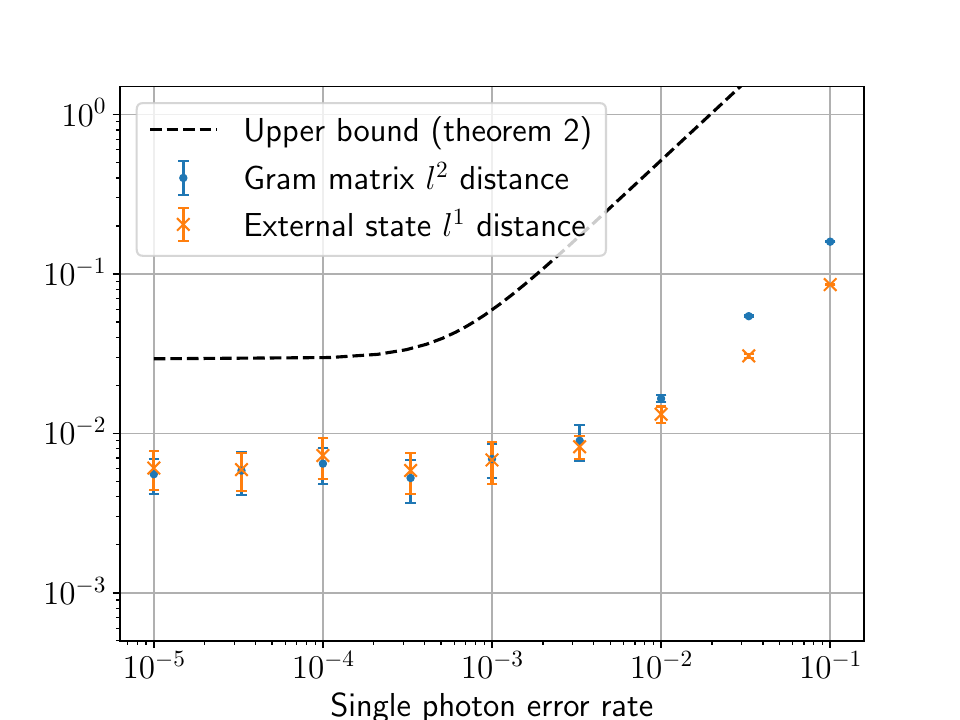}
    \caption{Fourier method}\label{fig:qubit-mixed-impurity-fourier}
    \end{subfigure}\\
    \begin{subfigure}{0.49\textwidth}   \includegraphics[width=\linewidth]{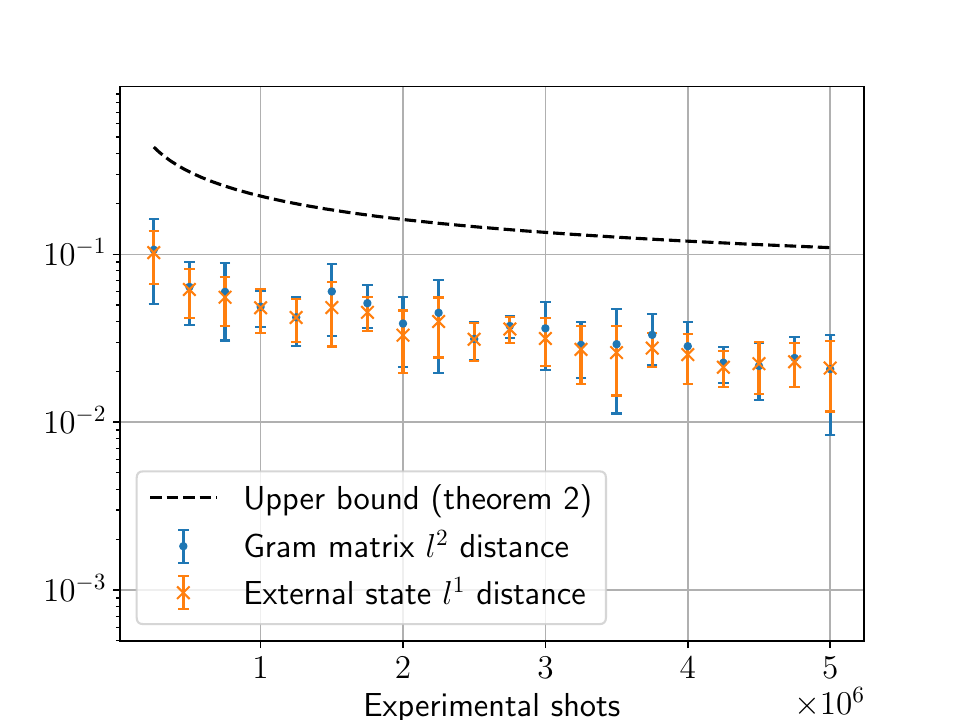}
    \caption{One shot method}\label{fig:qubit-mixed-shots-oneshot}
    \end{subfigure}
    \hfill
    \begin{subfigure}{0.49\textwidth}   \includegraphics[width=\linewidth]{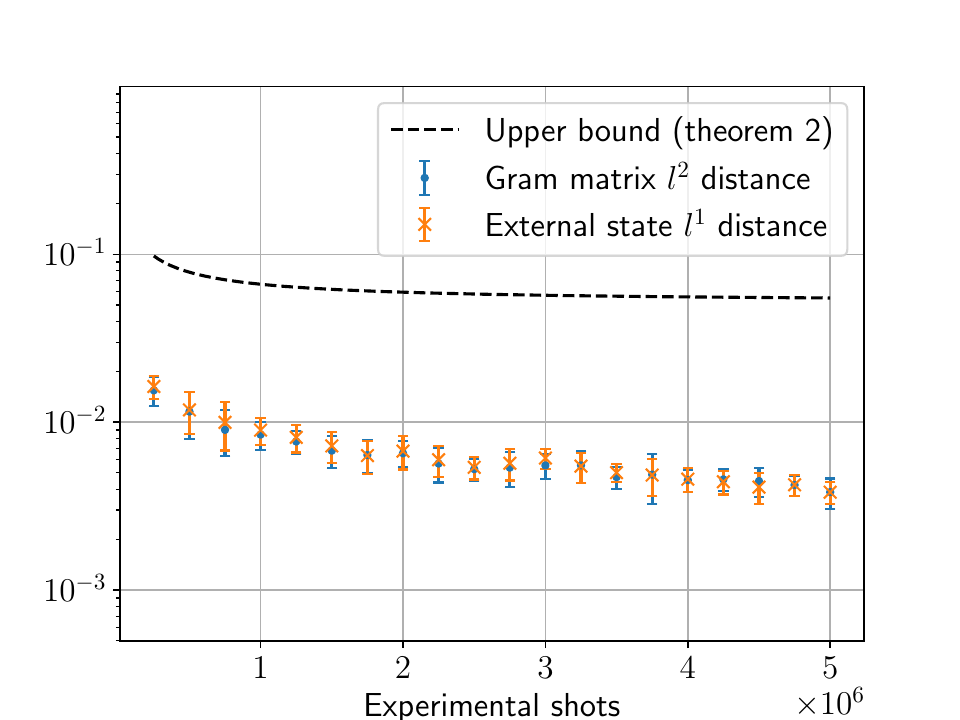}
    \caption{Fourier method}\label{fig:qubit-mixed-shots-fourier}
    \end{subfigure}
    \caption{Comparison of our upper bound, the Frobenius distance between the recovered and true Gram matrices and the trace-norm distance between the recovered and true external states for a system of $5$ qubit states. Each state is formed by first choosing a pure state uniformly at random within the circular cap on the Bloch sphere with Pauli $Z$ expectation-value greater than $0.9$. The single qubit error probability (the x-axis for subfigures~\ref{fig:qubit-mixed-impurity-oneshot} and~\ref{fig:qubit-mixed-impurity-fourier}) corresponds to depolarizing noise: each pure state is mixed with the maximally mixed state with to form $\rho_i = (1-p)\ketbra{\phi_i}{\phi_i} + p I/n$. For subfigures~\ref{fig:qubit-mixed-impurity-oneshot} and~\ref{fig:qubit-mixed-impurity-fourier} the total number of shots per simulated experiment is $2\times 10^6$, while for subfigures~\ref{fig:qubit-mixed-shots-oneshot} and~\ref{fig:qubit-mixed-shots-fourier} the single photon error rate is $3\times 10^{-3}$.}
    \label{fig:qubit-mixed-main}
\end{figure*}

\section{Conclusion}
 
In this work, we have introduced a framework for characterizing the partial distinguishability of photons with pure (or nearly pure) product internal states via Gram-matrix tomography. Rather than directly reconstructing the internal degrees of freedom, the protocol targets the Gram matrix describing their relational information, which fully determines the visible state $\Omegaext(\psi)$ (Proposition \ref{prop:tvd-continuity}) and hence the statistics of any experiment insensitive to the internal degrees of freedom (Corollary \ref{cor: total variation distance}). The resulting sample complexity is independent of the internal dimension and polynomial in $n$ whenever the required invariants are bounded from below, while we showed that $\Omega(n^{2}/\delta^{2})$ samples are necessary for any strategy.
 
We proposed two experimental routes to the required invariants: the Fourier method, which targets the estimation of each invariant with a dedicated interferometer, and the one-shot method, which extracts all second- and third-order invariants from the particle-number correlators of a single generic interferometer. Our analysis and simulations indicate that the one-shot method is well conditioned for random interferometers, but that the Fourier method achieves smaller errors for the same number of samples, reflecting a trade-off between experimental simplicity and reconstruction efficiency. In the cases considered, the observed errors lie below our analytical upper bounds.
 
Several questions remain open, including accounting for losses and calibration errors, and extending the framework beyond nearly pure product states, for which the minimal set of Bargmann invariants required for reconstruction has not yet been characterized. While we have introduced and explored two protocols for measuring the required Bargmann invariants for Gram matrix tomography in practical experiments, we do not claim that these protocols are optimal and it seems likely that future work could improve on these protocols.

We expect Gram-matrix tomography to become a practical tool for benchmarking photonic sources and for protocols that exploit photonic indistinguishability as a resource.

\section*{Acknowledgments }
 MR is a FRIA grantee of the Fonds de la Recherche Scientifique – FNRS. MO, MK, OR-S, BS and LN acknowledge the support from the European Union’s Horizon Europe research and innovation program under EPIQUE Project GA No. 101135288. The C4QEC project is carried out within the IRAP of the Foundation for Polish Science co-financed by the European Union. E. F. G.  acknowledges funding from the National Council for Scientific and Technological Development – CNPq (Brazil) under grant 308292/2025-1.

\section*{Code availability}

The code is available at the GitHub repository: \href{https://github.com/benoitseron/gram_tomography}{\texttt{https://github.com/benoitseron/gram\_tomography}} and relies on the following package \cite{seron2024bosonsampling}.

\section*{AI use disclosure}

ChatGPT 5.6 Sol was used for auxilliary computations in Appendix C and D. It also provided the proof of \Cref{prop:tvd-continuity} and \Cref{th:sample-lower-bound}. 
Claude Code was used to help write the codebase and plots.

\bibliography{Reference}
\clearpage
\onecolumngrid

\appendix
\section*{Appendix}

This appendix contains the technical proofs, statistical error bounds, and mathematical derivations supporting the main results. Appendix~\ref{sec: proof of lemma} provides proofs for the trace-distance upper bound and the sample complexity lower bound, Appendix~\ref{sec:estimators} analyzes the estimator error and the optimal spanning-tree selection for the reconstruction protocol, Appendix~\ref{sec: Conditioning number} proves the conditioning guarantees for random interferometers using Weingarten calculus and unitary concentration, Appendix~\ref{sec: Correlator derivation} details the derivation of particle-number correlators, Appendix~\ref{Sec: Certification} addresses the certification of multiphoton indistinguishability and Appendix ~\ref{sec: size reduction} shows how to reduce the linear system dimension for higher order invariants.

\section{Proofs of technical statements}\label{sec: proof of lemma}

\subsection{Trace distance upper bound (proof of \Cref{prop:tvd-continuity})}\label{sec:proof-of-tvd}

\begin{proof}
    In the permutation basis $\ket{\sigma}=\ket{\sigma(1),\ldots,\sigma(n)}=P^{\mathrm{ext}}_{\sigma^{-1}}\ket{\mathbf{i}_0}$, define the auxiliary $n!\times n!$ matrix $J(G)$ by $J(G)_{\sigma,\tau}:=\prod_{i=1}^{n}G_{\sigma(i),\tau(i)}$. Equation~\eqref{eq:omegaext-gram} then reads
    \begin{equation}
        \Omegaext(\psi) = \frac{1}{n!}\sum_{\sigma,\tau \in S_n}\ket{\tau}\bra{\sigma}J(G)_{\sigma,\tau}.
    \end{equation}
    Thus the matrix of $\Omegaext(\psi)$ in this basis is $J(G)^{T}/n!$. Transposition preserves the trace norm, so it suffices to bound $\norm{J(G)-J(G')}_1/n!$.
    Let $H = G-G'$ and for $k=1,\dots,n$ let $H_k$ be the $n! \times n!$ matrix defined by $(H_k)_{\sigma,\tau} := H_{\sigma(k),\tau(k)}$. Using a standard telescoping argument we can write:
    \begin{equation}
        J(G)_{\sigma,\tau}-J(G')_{\sigma,\tau} = \prod_{i=1}^{n}G_{\sigma(i),\tau(i)} - \prod_{i=1}^{n}G'_{\sigma(i),\tau(i)} = 
        \sum_{k=1}^{n}\left(\prod_{i=1}^{k-1}G'_{\sigma(i),\tau(i)} \right)(H_k)_{\sigma,\tau}\left(\prod_{i=k+1}^{n}G_{\sigma(i),\tau(i)}\right)
    \end{equation}
    Let us define:
    \begin{equation}
        (C_k)_{\sigma,\tau} := \prod_{i=1}^{k-1}G'_{\sigma(i),\tau(i)} \prod_{i=k+1}^{n}G_{\sigma(i),\tau(i)}
    \end{equation}
    so that:
    \begin{equation}
        J(G)_{\sigma,\tau}-J(G')_{\sigma,\tau} = \sum_{k=1}^{n}C_k \circ H_k
    \end{equation}
    where $\circ$ denotes the entrywise product. We then clearly have:
    \begin{equation}
        \norm{\Omegaext(\psi) - \Omegaext(\psi')}_1 = \frac{1}{n!}\norm{J(G)-J(G')}_{1} \leq
        \frac{1}{n!}\sum_{k=1}^{n}\norm{C_k \circ H_k}_{1}
    \end{equation}
    We now claim that each $C_k$ is a correlation matrix, i.e. a Gram matrix of unit vectors. To see this, choose any vectors $v_i, v'_i$ such that $G_{ij} = \langle v_i, v_j \rangle$ and $G'_{ij} = \langle v'_i, v'_j \rangle$. Then for each $\sigma \in S_n$ define:
    \begin{equation}
        w^{(k)}_{\sigma} := \bigotimes_{i=1}^{k-1}v'_{\sigma(i)}  \otimes\bigotimes_{i=k+1}^{n}v_{\sigma(i)}
    \end{equation}
    It is easily checked that each $w^{(k)}_{\sigma}$ has unit norm and the Gram matrix of this set of vectors is exactly $C_k$. We now use the standard fact (see e.g. \cite{watrous}) that entry-wise multiplication by a correlation matrix is contractive on the set of Hermitian matrices in the trace norm, i.e. $\norm{C \circ A}_{1} \leq \norm{A}_{1}$. We thus obtain:
    \begin{equation}
        \norm{\Omegaext(\psi) - \Omegaext(\psi')}_1 \leq
        \frac{1}{n!}\sum_{k=1}^{n}\norm{H_k}_{1}
    \end{equation}
    We will now show that each $\norm{H_k}_{1}$ is equal to $(n-1)! \cdot \norm{H}_{1}$, which will finish the proof since $H=G-G'$. To see this, fix $k$ and let us order the permutation basis $\{\ket{\pi}\}_{\pi \in S_n}$ according to the values of $\pi(k)$. The matrix $H_k$ in this ordering consists of consecutive blocks of size $(n-1)! \times (n-1)!$, where each block has entries equal to $H_{\sigma(k),\tau(k)}$. We can thus write $H_k = H \otimes J$, where $J$ is the $(n-1)! \times (n-1)!$ matrix of all ones. It follows that $\norm{H_k}_{1} = \norm{H}_{1} \cdot \norm{J}_{1} = (n-1)! \norm{H}_{1}$ as desired.
\end{proof}

\subsection{Sample complexity lower bound (proof of \Cref{th:sample-lower-bound})}\label{sec:proof-of-sample-lower-bound}

We start with an informal sketch. The logic of the lower bound proof follows a similar template as the lower bound on the sample complexity of tomography of quantum states \cite{haahSampleOptimalTomography2017}. Given a $\delta$, we construct a large family of states $\{\psi_G\}_{G \in \mathcal{G}}$, indexed by Gram matrices $G$ from a family $\mathcal{G}$ of size $2^{\Omega(n^2)}$, such that:
\begin{enumerate}
    \item External states $\Omegaext(\psi_G), \Omegaext(\psi_{G'})$ for $G, G' \in \mathcal{G}, G \neq G'$ have large mutual distance:
    \begin{equation}
        \norm{\Omegaext(\psi_G) - \Omegaext(\psi_{G'})}_1 > \delta/16
    \end{equation}
    \item Each external state $\Omegaext(\psi_G)$ has nearly minimal purity:
    \begin{equation}
        \tr(\Omegaext(\psi_G)^2) \leq \frac{1 +  2 \delta^2}{n!}
    \end{equation}
\end{enumerate}
where $n!$ is the dimension of the total space in which $\Omegaext(\psi_G)$ live. Property 1 allows us to reduce the task of external state estimation to state identification: given $N$ copies of $\Omegaext(\psi_G)$ for unknown $G \in \mathcal{G}$, output the correct label $G$. Performing such identification with high probability requires at least $\approx \log\vert \mathcal{G} \vert = \Omega(n^2)$ bits of information about the label. Meanwhile, property 2 implies that each state $\Omegaext(\psi_G)$ has nearly maximal entropy. This, in turn, implies that the label information is only weakly correlated with the state -- performing any measurement on a single copy can yield at most $O(\delta^2)$ bits of information about the label. Taken together this implies that at least $N \approx \frac{n^2}{\delta^2}$ copies are needed.

For a graph $\Gamma$ on $n$ vertices let $A_{\Gamma}$ denote its adjacency matrix. For a fixed constant $\alpha$ we define the matrix:
\begin{equation}\label{eq:ggamma-def}
    G_{\Gamma} := I + \sqrt{\frac{\alpha}{n}}A_{\Gamma}
\end{equation}

\begin{lemma}\label{lm:half-of-the-graphs}
    For any $\alpha < 1/4$, for at least half of all possible graphs the matrix $G_{\Gamma}$ is positive semidefinite and thus a Gram matrix of $n$ vectors.
\end{lemma}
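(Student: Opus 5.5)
The plan is to view $\Gamma$ as drawn uniformly from all $2^{\binom{n}{2}}$ labelled graphs on $n$ vertices, i.e.\ an Erd\H{o}s--R\'enyi graph $G(n,1/2)$, and to show that the event $\{G_\Gamma\succeq 0\}$ has probability at least $1/2$. Since $G_\Gamma = I + \sqrt{\alpha/n}\,A_\Gamma$ is real symmetric with unit diagonal, positive semidefiniteness is equivalent to $\lambda_{\min}(A_\Gamma)\geq -\sqrt{n/\alpha}$. Any PSD matrix with unit diagonal is the Gram matrix of $n$ unit vectors (take the columns of $G_\Gamma^{1/2}$), so the second claim follows immediately once positivity is established.

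\textbf{Step 1: centre the adjacency matrix.} Write $A_\Gamma = \tfrac12(J-I) + W$, where $J$ is the all-ones matrix and $W$ is symmetric with zero diagonal and independent off-diagonal entries $\pm\tfrac12$ with equal probability. Since $J\succeq 0$, we have $\tfrac12(J-I)\succeq -\tfrac12 I$. Therefore
\[
\lambda_{\min}(A_\Gamma)\;\geq\; -\tfrac12 - \norm{W}_\infty ,
\]
and it suffices to show that $\norm{W}_\infty \le \sqrt{n/\alpha}-\tfrac12$ with probability at least $1/2$.

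\textbf{Step 2: bound the operator norm of $W$.} I would use a second-moment (trace) bound, which avoids any asymptotic random-matrix theory and holds for every $n$. Markov's inequality gives
\[
\mathbb{P}(\norm{W}_\infty\geq t)\leq \mathbb{E}\tr(W^{2p})/t^{2p}.
\]
The cheapest choice $p=1$ yields $\mathbb{E}\tr W^2 = n(n-1)/4$, so $\mathbb{P}(\norm{W}_\infty\ge t)\le n^2/(4t^2)$. This is only useful for $t\gtrsim n$, which is too weak. Instead I would take $p=2$ and compute $\mathbb{E}\tr W^4$ by counting closed walks of length $4$ in which every edge is traversed an even number of times. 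The result is $\mathbb{E}\tr W^4 \le 2n^3/16 + O(n^2) \le n^3/4$ for all $n$ (a check of the constant is needed). Then
\[
\mathbb{P}\!\left(\norm{W}_\infty \ge t\right)\le \frac{n^3}{4t^4}.
\]
Setting $t=\sqrt{n/\alpha}-\tfrac12$ with $\alpha<1/4$ gives $t \gtrsim 2\sqrt n$, so the right-hand side is roughly $n^3/(64 n^2)=n/64$. This is still not $\le 1/2$ for large $n$, so the moment order must grow with $n$.

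\textbf{The main obstacle} is precisely this: obtaining $\norm{W}_\infty\le (2+o(1))\cdot\tfrac12\sqrt n\cdot c$ with constant probability. The known sharp constant (Füredi--Komlós) is $\norm{W}_\infty\approx\sqrt n$ for entries of variance $1/4$, which indeed sits below $\sqrt{n/\alpha}>2\sqrt n$ with room to spare. I would therefore invoke a standard nonasymptotic bound. One option is the Bandeira--van Handel inequality,
\[
\mathbb{E}\norm{W}_\infty\le (1+\epsilon)\,2\sigma + C_\epsilon\max|W_{ij}|\sqrt{\log n},
\]
with $\sigma=\sqrt{n}/2$, combined with Markov's inequality at level $2\,\mathbb{E}\norm{W}_\infty$. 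The other is Talagrand concentration around the mean, giving a tail $e^{-ct^2}$. Either route gives $\norm{W}_\infty\le\sqrt n+O(\sqrt{\log n})$ with probability at least $1/2$. This beats $\sqrt{n/\alpha}-\tfrac12\ge 2\sqrt n-\tfrac12$ for $n$ beyond an absolute constant. The finitely many small $n$ are absorbed by taking $\alpha$ smaller or by adjusting constants, which is harmless since the lemma only needs some fixed $\alpha$ in the subsequent packing argument. If the exact claim ``for all $\alpha<1/4$'' must hold for every $n$, I would check the small cases directly. Those cases follow from the Gershgorin-type bound $\norm{W}_\infty\le (n-1)/2\le\sqrt{n/\alpha}-\tfrac12$, which holds whenever $n\le 16$.
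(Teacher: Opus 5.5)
Your proposal is correct and follows essentially the paper's route: the same centering $A_\Gamma=\tfrac12(J-I)+W$ with $J\succeq 0$, which reduces positivity to $\|W\|_\infty\le\sqrt{n/\alpha}-\tfrac12$ (the paper's condition on $X=2W$, which you rightly treat as sufficient rather than, as the paper says, equivalent), then a cited nonasymptotic operator-norm bound (Bandeira--van Handel or Talagrand in place of the paper's F\"uredi--Koml\'os and Vu), plus a Gershgorin check for all $n\le 4/\alpha$ that the paper omits. One small correction: Markov's inequality at level $2\,\mathbb{E}\|W\|_\infty$ only gives $\|W\|_\infty\lesssim 2(1+\epsilon)\sqrt n$, not $\sqrt n+O(\sqrt{\log n})$; this still beats $\sqrt{n/\alpha}-\tfrac12$ for fixed $\alpha<1/4$, but only for $n\ge n_0(\alpha)$ with $n_0(\alpha)\to\infty$ as $\alpha\to 1/4$, so use the concentration route if you want a large-$n$ threshold independent of $\alpha$ (the paper's argument is likewise only asymptotic in $n$).
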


\begin{proof}
Write $A_{\Gamma} = \frac{J-I+X}{2}$, where $J$ is the all ones matrix and $X$ is a symmetric matrix with zero diagonal and $X_{ij}=X_{ji} = \pm 1$ with probability $1/2$. Since $J \succeq 0$, the condition that $G_{\Gamma} \succeq 0$ is easily seen to be equivalent to:
\begin{equation}\label{eq:operator-norm-graph}
     X \succeq -\left(2\sqrt{\frac{n}{\alpha}}-1\right)I
\end{equation}
It follows from classic random matrix theory \cite{FurediKomlos1981} that the operator norm of $X$ satisfies $\norm{X}_{\infty} \leq 4\sqrt{n}$ with probability tending to $1$ as $n\to\infty$. A more precise nonasymptotic statement \cite{Vu2007} gives \eqref{eq:operator-norm-graph} for $\alpha < 1/4$ 
\end{proof}

\begin{lemma}\label{lm:large-family}
    Fix $\alpha < 1/4$. There exists a family of graphs $\mathcal{F}$ of size $\geq 2^{\frac{1}{10} \binom{n}{2}}$ such that: 1) for each $\Gamma \in \mathcal{F}$ the matrix $G_{\Gamma}$ is a Gram matrix; 2) for any $\Gamma, \Gamma' \in \mathcal{F}, \Gamma \neq \Gamma'$ the graphs differ in at least $\frac{1}{4}\binom{n}{2}$ edges.
\end{lemma}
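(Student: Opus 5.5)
The plan is a Gilbert--Varshamov style greedy or probabilistic construction inside the set $\mathcal{S}$ of "good" graphs, meaning those for which $G_\Gamma\succeq 0$.

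Identify each graph on $n$ vertices with its edge-indicator vector in $\{0,1\}^{M}$, where $M=\binom{n}{2}$. The number of edges on which two graphs differ is then the Hamming distance between these vectors. By Lemma~\ref{lm:half-of-the-graphs}, for $\alpha<1/4$ we have
\begin{equation}
|\mathcal{S}|\geq 2^{M-1}.
\end{equation}

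Build $\mathcal{F}$ greedily. Pick any $\Gamma_1\in\mathcal{S}$, delete from $\mathcal{S}$ every graph within Hamming distance $<M/4$ of $\Gamma_1$, and repeat until $\mathcal{S}$ is exhausted. Every pair of chosen graphs then differs in at least $M/4$ edges. Each chosen graph lies in $\mathcal{S}$, so each $G_\Gamma$ is positive semidefinite with unit diagonal, hence a Gram matrix of $n$ unit vectors. Each step removes at most $V:=|\{x:\,|x|<M/4\}|$ graphs, so the procedure runs for at least $|\mathcal{S}|/V$ steps.

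Bound the ball volume with the standard entropy estimate:
\begin{equation}
V\leq\sum_{k\le M/4}\binom{M}{k}\leq 2^{h(1/4)M},
\end{equation}
where $h$ is the binary entropy and $h(1/4)\approx 0.811$. This gives
\begin{equation}
|\mathcal{F}|\geq 2^{M-1-h(1/4)M}=2^{(1-h(1/4))M-1}.
\end{equation}
Since $1-h(1/4)\approx 0.189$, the exponent satisfies $(1-h(1/4))M-1\geq M/10$ as soon as $0.089\,M\geq 1$, that is, for $M\geq 12$ (i.e.\ $n\ge 6$). For smaller $n$ the statement is either trivial (the bound $2^{M/10}<2$ is met by a single graph) or can be absorbed into constants.

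The only real input is the positive-fraction statement of Lemma~\ref{lm:half-of-the-graphs}. The one point I would double-check is the range of $n$ for which that lemma's "at least half" holds non-asymptotically; if it only holds for large $n$, the same argument goes through for all sufficiently large $n$, which is all the lower bound needs. An alternative to the greedy step is to pick $2^{M/10}$ uniformly random graphs conditioned on lying in $\mathcal{S}$ and union-bound over pairs. Conditioning inflates each pairwise collision probability by at most a factor of $4$, so the expected number of close pairs is at most
\begin{equation}
2^{2M/10}\cdot 4\cdot 2^{-(1-h(1/4))M}<1.
\end{equation}
This also works, but the greedy argument is cleaner.
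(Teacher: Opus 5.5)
Your proof is correct and is the paper's argument: a greedy Gilbert--Varshamov construction inside the half of good graphs supplied by Lemma~\ref{lm:half-of-the-graphs}, with the Hamming ball bounded by $2^{h(1/4)M}$. You are also more careful than the paper. The paper asserts $2^{(1-h(1/4))M-1}\geq 2^{M/10}$ without comment, while you note that this needs $M\geq 12$, i.e.\ $n\geq 6$.

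Two of your side remarks are wrong, though neither affects the main line.

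\emph{Small $n$.} A single graph does not suffice there. Since $2^{M/10}>1$ whenever $M\geq 1$, you need $|\mathcal{F}|\geq 2$. For $2\leq n\leq 5$ take the empty and the complete graph. They differ in all $M$ edges. Both give positive semidefinite $G_\Gamma$, because $I+\sqrt{\alpha/n}(J-I)$ has eigenvalues $1-\sqrt{\alpha/n}$ and $1+(n-1)\sqrt{\alpha/n}$. Finally, $2\geq 2^{M/10}$ because $M\leq 10$.

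\emph{The random alternative.} It does not work as stated. Since $2/10>1-h(1/4)$, you get $2^{2M/10}\cdot 4\cdot 2^{-(1-h(1/4))M}=2^{(0.2-0.1887\ldots)M+2}>1$ for every $M$. A plain union bound over pairs only reaches families of size about $2^{(1-h(1/4))M/2}\approx 2^{0.094M}$, which falls short of $2^{M/10}$. To recover the Gilbert--Varshamov exponent you need an alteration step: sample about $|\mathcal{S}|/V$ graphs and delete one graph from each close pair. That is essentially the greedy argument again.
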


\begin{proof}
    By \Cref{lm:half-of-the-graphs}, for at least half of all the $2^{\binom{n}{2}}$ graphs the matrix $G_{\Gamma}$ is a Gram matrix. Denote this family of graphs by $\mathcal{F}_{+}$. We construct the desired family $\mathcal{F}$ greedily -- start with any graph $\Gamma_1 \in \mathcal{F}_{+}$, remove from $\mathcal{F}_{+}$ all graphs which differ from $\Gamma_1$ in less than $\frac{1}{4}\binom{n}{2}$ edges, pick any graph from the remaining family and iterate until no more graphs can be picked. The classical Gilbert-Varshamov bound \cite{HuffmanPless2003} guarantees that the size of obtained family $\mathcal{F}$ is at least $2^{\binom{n}{2}(1 - h(1/4))-1} \geq 2^{\frac{1}{10}\binom{n}{2}}$, where $h$ is the binary entropy function.
\end{proof}

We now define $\mathcal{G} := \{ G_{\Gamma} \ \vert \ \Gamma \in \mathcal{F} \}$, where $\mathcal{F}$ is the graph family constructed in \Cref{lm:large-family}. For any $G \in \mathcal{G}$ we consider the pure product internal state $\psi_G$ with Gram matrix $G$.

\begin{lemma}\label{lm:purity-and-distance}
    For any $G \in \mathcal{G}$ the state $\Omegaext(\psi_G)$ satisfies:
    \begin{equation}
        \tr(\Omegaext(\psi_G)^2) < \frac{1 + 2\alpha^2}{n!}\label{eq:purity-bound}
    \end{equation}
    Moreover, for any $G, G' \in \mathcal{G}, G \neq G'$ we have:
    \begin{equation}\label{eq:tr-distance-gg'}
        \norm{\Omegaext(\psi_G) - \Omegaext(\psi_{G'})}_1 > \frac{\alpha}{16}
    \end{equation}
\end{lemma}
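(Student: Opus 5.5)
The plan is to work throughout in the permutation-basis representation $\Omegaext(\psi_G)=J(G)^{T}/n!$ from the proof of \Cref{prop:tvd-continuity}, with $J(G)_{\sigma,\tau}=\prod_i G_{\sigma(i),\tau(i)}$. I would reduce both claims to statements about permanents of the entrywise product matrices $G\circ\overline{G'}$ for $G,G'\in\mathcal{G}$. The first claim, \eqref{eq:purity-bound}, I expect to be a direct computation; the second, \eqref{eq:tr-distance-gg'}, is where the difficulty lies.

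\emph{Purity.} First I would write
\begin{equation}
\tr(\Omegaext(\psi_G)^2)=\frac{1}{(n!)^2}\sum_{\sigma,\tau}\prod_i|G_{\sigma(i)\tau(i)}|^2=\frac{1}{n!}\perm(G\circ\overline{G}).
\end{equation}
Since $G=I+\sqrt{\alpha/n}A_\Gamma$, the matrix $G\circ\overline G$ has ones on the diagonal and $\frac{\alpha}{n}(A_\Gamma)_{ij}$ off the diagonal. Next I would expand the permanent over permutations according to their number $k$ of non-fixed points. A permutation moving exactly $k$ points contributes at most $(\alpha/n)^k$. The number of such permutations is $\binom{n}{k}D_k\le \frac{n^k}{k!}\cdot\frac{k!}{2}$ for $k\ge2$, where $D_k$ is the number of derangements of $k$ points. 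Summing over $k$ gives
\begin{equation}
\perm(G\circ\overline G)\le 1+\tfrac12\sum_{k\ge2}\alpha^k=1+\frac{\alpha^2}{2(1-\alpha)}<1+2\alpha^2,
\end{equation}
which is \eqref{eq:purity-bound}.

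\emph{Distance.} I would try a Hilbert--Schmidt route. Set $\Delta=\Omegaext(\psi_G)-\Omegaext(\psi_{G'})$. H\"older's inequality gives
\begin{equation}
\norm{\Delta}_1\ge \frac{\norm{\Delta}_2^2}{\norm{\Delta}_\infty}.
\end{equation}
For the numerator, the same computation as in the purity step gives
\begin{equation}
n!\,\norm{\Delta}_2^2=\perm(G\circ\overline G)+\perm(G'\circ\overline{G'})-2\,\mathrm{Re}\,\perm(G\circ\overline{G'}).
\end{equation}
Expanding each permanent as above, the identity and $k=2$ terms dominate. The identity terms cancel. The transposition terms combine to $\sum_{a<b}\big(|G_{ab}|^2-|G'_{ab}|^2\big)^2\cdot\frac{1}{|G_{ab}|^2+|G'_{ab}|^2}$-type contributions, which reduce to $\frac{\alpha^2}{n^2}$ times the number of edges in which $\Gamma$ and $\Gamma'$ differ. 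By \Cref{lm:large-family} that number is at least $\frac14\binom n2$, so this contribution is at least $c\,\alpha^2$. The $k\ge3$ terms are $O(\alpha^3)$ by the same counting as in the purity step. Together these give $\norm{\Delta}_2^2\ge c'\alpha^2/n!$.

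For the denominator I would need $\norm{\Omegaext(\psi_G)}_\infty\le C/n!$. Combined with the numerator bound this yields $\norm{\Delta}_1\gtrsim\alpha^2$. I would then hope to sharpen this to linear order in $\alpha$, either by tightening the operator-norm bound or by a second argument.

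\emph{Main obstacle.} I expect the operator-norm bound on $\Omegaext(\psi_G)$ to be the crux. The nonzero spectrum of $J(G)$ coincides with that of $\sum_\sigma P_\sigma(v_1\otimes\cdots\otimes v_n)(v_1\otimes\cdots\otimes v_n)^\dagger P_\sigma^\dagger$. Naive estimates such as $\norm{G}^n$ are exponentially bad, so one must exploit that $\norm{G-I}_\infty\le2\sqrt\alpha$ in a representation-theoretic way, decomposing into $S_n$-isotypic components.

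As a fallback, I would use an explicit test observable. Let $O_{ab}$ be the basis permutation that swaps the values $a,b$ in $\ket{\sigma}$. Then $\tr(O_{ab}\Omegaext(\psi_G))=|G_{ab}|^2$. I would take $O=\sum_{a<b}s_{ab}O_{ab}$ with $s_{ab}=\pm1$ chosen according to the sign of $(A_\Gamma-A_{\Gamma'})_{ab}$. This gives $\tr(O\Delta)\ge\frac{\alpha}{n}\cdot\frac14\binom n2$, and hence
\begin{equation}
\norm{\Delta}_1\ge\frac{\tr(O\Delta)}{\norm{O}_\infty}.
\end{equation}
It would then remain to bound $\norm{O}_\infty=O(n)$, i.e.\ the norm of a signed sum of transpositions in the regular representation. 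Trivially $\norm{O}_\infty\le\binom n2$, which only gives $\norm{\Delta}_1\gtrsim\alpha/n$. Improving this to $O(n)$ is exactly what would produce the constant $\alpha/16$. If the fixed signs make this intractable, I would restrict to a subfamily of graph pairs for which the signed graph $A_\Gamma-A_{\Gamma'}$ has spectral norm $O(\sqrt n)$ and try to transfer that bound to the regular representation.
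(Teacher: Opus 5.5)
Your purity bound is correct and matches the paper's argument; your derangement count $\binom nk D_k\le n^k/2$ is slightly sharper than the paper's $k!\binom nk\le n^k$.

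The distance part has a genuine gap: both of your routes need operator-norm bounds that are false for this family. Let $|s\rangle=(n!)^{-1/2}\sum_\sigma|\sigma\rangle$. Since $G$ has nonnegative entries, $\langle s|\Omegaext(\psi_G)|s\rangle=\perm(G)/n!\ge\big(1+\frac{\alpha}{n}|E(\Gamma)|\big)/n!$. All but at most one graph in $\mathcal F$ have $|E(\Gamma)|\ge\frac18\binom n2$, because two sparser graphs would differ in fewer than $\frac14\binom n2$ edges. Hence $\norm{\Omegaext(\psi_G)}_\infty\ge(1+\alpha(n-1)/16)/n!$, and no bound $C/n!$ exists. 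Nested pairs $E(\Gamma')\subset E(\Gamma)$ (e.g.\ the empty and complete graphs) are not excluded by \Cref{lm:large-family}, and for them $\norm{\Delta}_\infty\ge\alpha(n-1)/(8\,n!)$. So H\"older yields at most $\norm{\Delta}_2^2/\norm{\Delta}_\infty=O(\alpha/n)$, and even if your $C/n!$ bound held, that route would only be quadratic in $\alpha$. The fallback fails for the same reason. Every $O_{ab}$ fixes $|s\rangle$, so $O|s\rangle=\big(\sum_{(ab)\in\mathcal E}s_{ab}\big)|s\rangle$, and for a nested pair this eigenvalue equals $|\mathcal E|\ge\frac14\binom n2$. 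Thus $\norm{O}_\infty=O(n)$ is false, and duality gives only $\alpha/n$. Restricting to a subfamily changes the statement. Moreover, a spectral bound on $A_\Gamma-A_{\Gamma'}$ would not control the trivial-representation eigenvalue $\frac12\mathbf{1}^{T}(A_\Gamma-A_{\Gamma'})\mathbf{1}$. (Separately, the transposition part of $n!\norm{\Delta}_2^2$ is exactly $\sum_{a<b}(|G_{ab}|^2-|G'_{ab}|^2)^2$; the extra factor you wrote is spurious.)

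The missing idea is to keep your test operator $O$ but never pass through $\norm{O}_\infty$. Use instead a fourth-moment, noncommutative Berger inequality. Applying Cauchy--Schwarz twice gives
\[
\abs{\tr(\Delta O)}^2\le\tr\abs{\Delta}\,\tr\big(\abs{\Delta}O^2\big)\le\tr\abs{\Delta}\,\sqrt{\tr(\Delta^2)\,\tr(O^4)},
\]
so $\tr\abs{\Delta}\ge\abs{\tr(\Delta O)}^2/\sqrt{\tr(\Delta^2)\tr(O^4)}$. The normalized fourth moment does not see the one-dimensional trivial and sign components that inflate $\norm{O}_\infty$. Since the trace of the permutation operator $\ket{\sigma}\mapsto\ket{\pi\circ\sigma}$ is $n!\,\delta_{\pi,\mathrm{id}}$, $\tr(O^4)$ is $n!$ times a signed count of ordered quadruples of transpositions from $\mathcal E$ with product the identity. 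Counting these gives $\tr(O^4)\le n!\,\big(|\mathcal E|^2+3|\mathcal E|(|\mathcal E|-1)\big)<4\,n!\,|\mathcal E|^2$. In exchange you need an \emph{upper} bound on $\tr(\Delta^2)$ rather than the lower bound you derived. Your own permanent expansion supplies it: for $\alpha<1/4$ and $|\mathcal E|>n^2/12$,
\[
n!\,\tr(\Delta^2)\le\frac{\alpha^2}{n^2}|\mathcal E|+\frac{\alpha^3}{1-\alpha}<\frac{5\alpha^2}{n^2}|\mathcal E|.
\]
Combined with $\tr(\Delta O)=\frac{\alpha}{n}|\mathcal E|$, the factors of $n!$ cancel, and $\tr\abs{\Delta}\ge\alpha|\mathcal E|^{1/2}/(2\sqrt5\,n)>\alpha/16$.
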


\begin{proof}
We first prove the purity bound \eqref{eq:purity-bound}. By \eqref{eq:omegaext-gram} we can write the purity as:
\begin{equation}
     \tr(\Omegaext(\psi_G)^2) = \frac{1}{n!}\sum_{\sigma \in S_n}\abs{\prod_{i=1}^{n} G_{i,\sigma(i)} }^2
\end{equation}
If a permutation $\sigma \in S_n$ displaces exactly $k$ vertices, by \eqref{eq:ggamma-def} its corresponding term in the sum is either zero (if for any $i$ the edge $(i,\sigma(i))$ is not present in the graph) or $(\frac{\alpha}{n})^k$. The number of such permutations is at most $k! \cdot \binom{n}{k} \leq n^k$. Thus:
\begin{equation}
     \tr(\Omegaext(\psi_G)^2) \leq \frac{1}{n!}\left(1 + \sum_{k=2}^{n}n^k \cdot \left(\frac{\alpha}{n}\right)^k\right) \leq \frac{1}{n!}\left(1 + \frac{\alpha^2}{1-\alpha}\right) \leq \frac{1 + 2\alpha^2}{n!}
\end{equation}

We now show \eqref{eq:tr-distance-gg'}. Take $G \neq G', G, G' \in \mathcal{G}$ corresponding to two different graphs $\Gamma, \Gamma'$. By \Cref{lm:large-family} the symmetric difference of the edge sets $\mathcal{E} :=E(\Gamma) \Delta E(\Gamma')$ has size at least $\frac{1}{4}\binom{n}{2} > \frac{1}{12}n^2$. For an edge $(i,j) \in \mathcal{E}$ define the sign:
\begin{equation}
    s_{ij} := (A_{\Gamma})_{ij}-(A_{\Gamma'})_{ij}
\end{equation}
and:
\begin{equation}
    T := \sum_{i<j,(ij) \in \mathcal{E}} s_{ij}L_{(ij)}
\end{equation}
where $L_{\pi}\ket{\sigma}:=\ket{\pi\circ\sigma}$ permutes the external mode labels. In particular, $L_{(ij)}$ exchanges mode labels $i$ and $j$ in every tensor factor. Let $D := \Omegaext(\psi_G) - \Omegaext(\psi_{G'})$, so that $ \norm{\Omegaext(\psi_G) - \Omegaext(\psi_{G'})}_1  = \tr(\abs{D})$. We claim that:
\begin{equation}\label{eq:noncomm-berger}
    \tr(\abs{D}) \geq \frac{\abs{\tr(DT)}^2}{\sqrt{\tr(D^2) \tr(T^4)}}
\end{equation}
Indeed, applying Cauchy-Schwarz twice gives:
\begin{equation}
    \abs{\tr(DT)}^2 \leq \tr(\abs{D})\tr(\abs{D}T^2) \leq  \tr(\abs{D})\sqrt{\tr(D^2) \tr(T^4)}
\end{equation}
The inequality \eqref{eq:noncomm-berger} can be seen as a noncommutative variant of Berger's inequality, which states that $\E\abs{X} \geq \frac{(\E X^2)^{3/2}}{(\E X^4)^{1/2}}$ for a scalar random variable $X$. 

We now bound the three quantities appearing in \eqref{eq:noncomm-berger}. The trace $\tr(DT)$ can be computed exactly using orthogonality of $\ket{\sigma}$:
\begin{equation}
    \tr(\Omegaext(\psi_G)T) =  \frac{1}{n!}\tr\left(\sum_{\substack{\sigma,\tau \in S_n \\ (ij) \in \mathcal{E}}}\ket{\tau}\bra{\sigma}s_{ij}L_{(ij)} \prod_{k=1}^{n}G_{\sigma(k),\tau(k)}\right) = 
    \frac{1}{n!}\tr\left(\sum_{\substack{\sigma \in S_n \\ (ij) \in \mathcal{E}}} s_{ij} \prod_{k=1}^{n}G_{\sigma(k),((ij)\sigma)(k)}\right) = \sum_{(ij)\in \mathcal{E}}s_{ij}G^2_{ij}
\end{equation}
Substracting the trace gives:
\begin{equation}\label{eq:trdt}
    \tr(DT) = \sum_{(ij)\in \mathcal{E}}s_{ij}(G^2_{ij} - (G')^2_{ij}) = \sum_{(ij)\in \mathcal{E}}s^2_{ij}\frac{\alpha}{n} =\frac{\alpha}{n}\abs{\mathcal{E}} 
\end{equation}
The trace $\tr(D^2)$ is computed similarly as the purity:
\begin{equation}\label{eq:trd2}
    n! \tr(D^2) = \sum_{\sigma \in S_N}\left( \prod_{k=1}^{n}G_{i,\sigma(i)} - \prod_{k=1}^{n}G'_{i,\sigma(i)}\right)^2 \leq \frac{\alpha^2}{n^2}\abs{\mathcal{E}} + \sum_{k=3}^{n}\alpha^k < \frac{\alpha^2}{n^2}\abs{\mathcal{E}} + \frac{\alpha^3}{1-\alpha} < \frac{5\alpha^2}{n^2}\abs{\mathcal{E}} 
\end{equation}
where in the last inequality we have used $\alpha < 1/4, \abs{\mathcal{E}} > \frac{n^2}{12}$. The first inequality follows by splitting the sum over $\sigma$ intro transpositions and the rest. The latter is bounded exactly as in the computation for the purity, while the former is treated as in \eqref{eq:trdt}.

Finally, we need to upper bound $\tr(T^4)$. To this end, we count the number of ordered quadruples of the $\abs{\mathcal{E}}$ transpositions whose product is the identity. When the first two coincide, there are $\abs{\mathcal{E}}^2$ possibilities. Otherwise, their product is either a three-cycle or two disjoint transpositions and its inverse has at most three ordered decompositions into two transpositions. Hence:
\begin{equation}\label{eq:trt4}
\tr(T^4) = n! (\abs{\mathcal{E}}^2 + 3\abs{\mathcal{E}}(\abs{\mathcal{E}}-1)) < n!\cdot4 \abs{\mathcal{E}}^2    
\end{equation}
Plugging \eqref{eq:trdt}, \eqref{eq:trd2} and \eqref{eq:trt4} into \eqref{eq:noncomm-berger} yields:
\begin{equation}
    \tr(\abs{D}) \geq \frac{(\frac{\alpha}{n}\abs{\mathcal{E}})^2}{(\frac{5\alpha^2}{n^2}\abs{\mathcal{E}} \cdot  4 \abs{\mathcal{E}}^2  )^{1/2}} = \frac{\alpha \abs{\mathcal{E}}^{1/2}}{2\sqrt{5}n} > \frac{\alpha}{16}
\end{equation}
 as desired.

\end{proof}

\begin{proof}[Proof of \Cref{th:sample-lower-bound}]
    Consider the family of states $S = \{\Omegaext(\psi_G))\}_{G \in \mathcal{G}}$ for $\alpha = 32\delta$. Consider any estimator $\widehat{\rho}$ that receives $N$ copies of the input state and satisfies for any input $G \in \mathcal{G}$:
    \begin{equation}
        \mathbb{P}\left( \norm{\widehat{\rho} - \Omegaext(\psi_G)}_1 < \delta \right) > \frac{2}{3}
    \end{equation}
    The estimator can be used to solve the state identification task for $S$: given an unknown state $\Omegaext(\psi_G))$ with uniform probability, return the correct label $G$. We first apply the estimator $\widehat{\rho}$ to the input state and then return the label $G$ of the state from $S$ closest to $\widehat{\rho}$. By \Cref{lm:purity-and-distance} any two different states $\Omegaext(\psi_G), \Omegaext(\psi_{G'})$ are at least $2\delta$ apart, so if the input is $\Omegaext(\psi_G)$, the procedure gives the correct label $G$ with probability at least $2/3$. Thus, it suffices to show that solving the state identification problem for $S$ with probability $\geq 2/3$ requires at least $N = \Omega(\frac{n^2}{\delta^2})$ samples.

    To bound the maximum probability for the state identification problem, we invoke Lemma 1 from \cite{OgawaNagaoka1999} for the family of states $\{ \Omegaext(\psi_G)^{\otimes N} \}_{G \in \mathcal{G}}$ and $\beta = 1/2$. In our notation the lemma states that the maximum probability of correct state identification is bounded by:
    \begin{equation}
        p \leq \frac{1}{\abs{\mathcal{G}}}\tr \left( \sum_{G \in \mathcal{G}} (\Omegaext(\psi_G)^{\otimes N})^2 \right)^{1/2}
    \end{equation}
    For any positive semidefinite matrix $A$ on a space of dimension $d$ we have $\tr\sqrt{A} \leq \sqrt{d} \sqrt{\tr A}$. By applying \Cref{lm:purity-and-distance} we get:
\begin{equation}
        p\leq\frac{1}{\abs{\mathcal{G}}}\tr \left( \sum_{G \in \mathcal{G}} (\Omegaext(\psi_G)^{\otimes N})^2 \right)^{1/2} \leq 
        \frac{1}{\abs{\mathcal{G}}} \sqrt{(n!)^N} \left( \sum_{G \in \mathcal{G}} \tr((\Omegaext(\psi_G)^{\otimes N})^2) \right)^{1/2} \leq
        \sqrt{\frac{(1 + 2\cdot 32^2 \delta^2)^N}{\abs{\mathcal{G}}}} 
    \end{equation}
Since we assumed $p \geq 2/3$, we obtain:
\begin{equation}
    4/9 \abs{\mathcal{G}} \leq (1 + 2048 \cdot \delta^2)^N
\end{equation}
so:
\begin{equation}
    N > \frac{\log(4/9 \abs{\mathcal{G}})}{\log(1 + 2048 \cdot \delta^2)} > \frac{\log(4/9 \abs{\mathcal{G}})}{2048\cdot \delta^2} 
\end{equation}
Recalling from \Cref{lm:large-family} that $\abs{\mathcal{G}} \geq 2^{\frac{1}{10}\binom{n}{2}}$ yields the desired sample lower bound.
\end{proof}

\section{Analysis of the tomography protocol}\label{sec:estimators}

\subsection{Error of estimators}\label{sec:estimators-error}






Suppose for now that we have fixed a deterministic spanning tree $T$. We first analyze the estimator $\widehat{G}_{ij}$ for $(i,j) \in T$.

\begin{proposition}\label{prop:inT}
    For $(i,j) \in T$ the mean square error of $\widehat{G}_{ij}$ given by \eqref{eq:gram1} is bounded by:
\begin{equation}
    \E\abs{\widehat{G}_{ij} - G_{ij}}^2 \leq \frac{1}{B_{ij}}\cdot \left(16\varepsilon^2 + \frac{k\sigma^2}{N}\right)
\end{equation}\end{proposition}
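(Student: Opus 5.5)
For $(i,j)\in T$ the estimator is $\widehat{G}_{ij}=\sqrt{\widehat{B}_{ij}}$, while by \eqref{eq:gram-ideal1} the target is $G_{ij}=\sqrt{B_{ij}}\ge 0$. The natural approach is to linearize the square root using the elementary identity
\begin{equation*}
    \sqrt{x}-\sqrt{y}=\frac{x-y}{\sqrt{x}+\sqrt{y}},
\end{equation*}
which, for $x\geq 0$ and $y>0$, gives
\begin{equation*}
    \abs{\sqrt{x}-\sqrt{y}}\leq \frac{\abs{x-y}}{\sqrt{y}}.
\end{equation*}
Applying this with $x=\widehat{B}_{ij}$ and $y=B_{ij}$ and squaring yields
\begin{equation*}
    \abs{\widehat{G}_{ij}-G_{ij}}^2\leq \frac{1}{B_{ij}}\abs{\widehat{B}_{ij}-B_{ij}}^2.
\end{equation*}
Since $B_{ij}$ is a deterministic quantity, taking expectations reduces the claim to bounding the mean square error of a single second-order invariant. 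That is exactly \eqref{eq:bargmann-MSE} with $m=2$, giving $4m^2\varepsilon^2+k\sigma^2/N=16\varepsilon^2+k\sigma^2/N$, which is the stated bound.

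The step needing care is that $\widehat{B}_{ij}$ is an unbiased estimator built from finite samples, so it may be negative or even complex. For the Fourier/HOM estimator it is real but can be negative; for the one-shot linear inversion it is real, but its sign is not guaranteed. I would handle this by interpreting $\sqrt{\widehat{B}_{ij}}$ as $\sqrt{\max(\operatorname{Re}\widehat{B}_{ij},0)}$, the natural convention given that tree edges are fixed real and nonnegative. This clipping is a contraction toward the set $[0,\infty)$, which contains $B_{ij}$, so
\begin{equation*}
    \abs{\max(\operatorname{Re}\widehat{B}_{ij},0)-B_{ij}}\leq\abs{\widehat{B}_{ij}-B_{ij}}.
\end{equation*}
The inequality above is therefore preserved, and for $x=0$ it still holds since $\sqrt{y}\le y/\sqrt{y}$.

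Finally, I would check that $B_{ij}>0$ is guaranteed for tree edges, since otherwise the bound is vacuous. Also, with the sample-splitting convention the tree $T$ is independent of the second-half estimators, so conditioning on $T$ and applying the variance bound \eqref{eq:bargmann-variance} and the bias bound \eqref{eq:bias-bound} is legitimate. I expect the main (mild) obstacle to be precisely this handling of nonpositive estimates and of the independence of $T$; the inequality itself is routine.
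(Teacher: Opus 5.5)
Your proof is correct and follows the paper's argument: the paper likewise rationalizes $\sqrt{\widehat{B}_{ij}}-\sqrt{B_{ij}}$, drops $\sqrt{\widehat{B}_{ij}}$ from the denominator, and applies \eqref{eq:bargmann-MSE} with $m=2$. Your handling of negative or complex $\widehat{B}_{ij}$ makes explicit a point the paper leaves implicit, though clipping is not needed: with the principal branch of the square root, $\operatorname{Re}\sqrt{z}\geq 0$, which already gives $|\sqrt{\widehat{B}_{ij}}+\sqrt{B_{ij}}|\geq\sqrt{B_{ij}}$ for the unmodified estimator.
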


\begin{proof}
We have:
\begin{equation}
    \abs{\sqrt{\widehat{B_{ij}}}-\sqrt{B_{ij}}} = \frac{\abs{\widehat{B_{ij}} - B_{ij}}}{\sqrt{\widehat{B_{ij}}}+\sqrt{B_{ij}}} \leq \frac{\abs{\widehat{B_{ij}} - B_{ij}}}{\sqrt{B_{ij}}}
\end{equation}
and therefore by \eqref{eq:bargmann-MSE}:
\begin{equation}
    \E\abs{\widehat{G}_{ij} - G_{ij}}^2 \leq \frac{1}{B_{ij}}\cdot \E\abs{\widehat{B_{ij}} - B_{ij}}^2 \leq
    \frac{1}{B_{ij}}\cdot \left(16\varepsilon^2 + \frac{k\sigma^2}{N}\right)
\end{equation}
\end{proof}


We now analyze the more complicated estimator $\widehat{G}_{ij}$ for $(i,j) \notin T$ given by \eqref{eq:gram2}. 


\begin{proposition}\label{prop:notinT}
     For $(i,j) \notin T$ the mean square error of $\widehat{G}_{ij}$ given by \eqref{eq:gram2} is bounded by:
\begin{equation}
    \E (\widehat{G_{ij}} - G_{ij})^2 \leq
    \frac{2}{B_{ij}} \left( 16\varepsilon^2 + \frac{k\sigma^2}{N} \right) +  \frac{8B_{ij}}{\abs{B_{\pi(i,j)}}^2} \cdot \left(4m^2_{ij}\varepsilon^2 + \frac{k\sigma^2}{N}   \right) 
\end{equation}    
where $m_{ij}$ is one plus the length of the path connecting $i$ and $j$ in the tree $T$.
\end{proposition}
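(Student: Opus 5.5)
The plan is to split the estimator of \eqref{eq:gram2} into a modulus factor and a phase factor, and to bound the two resulting errors separately. Write $G_{ij}=e^{i\varphi}\sqrt{B_{ij}}$ and $\widehat{G}_{ij}=\widehat{e^{i\varphi}}\sqrt{\widehat{B}_{ij}}$. Here $e^{i\varphi}=B_{\pi}/\abs{B_{\pi}}$ is the true phase of the path invariant (up to the conjugation/orientation convention fixed in \eqref{eq:gram-ideal2}), and $\widehat{e^{i\varphi}}$ is its plug-in estimator \eqref{eq:estimator-phase}. Add and subtract $\widehat{e^{i\varphi}}\sqrt{B_{ij}}$:
\begin{equation*}
\widehat{G}_{ij}-G_{ij}=\widehat{e^{i\varphi}}\big(\sqrt{\widehat{B}_{ij}}-\sqrt{B_{ij}}\big)+\sqrt{B_{ij}}\big(\widehat{e^{i\varphi}}-e^{i\varphi}\big).
\end{equation*}
Then apply $\abs{a+b}^2\le 2\abs{a}^2+2\abs{b}^2$. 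This is where the factor $2$ in front of both terms of the claimed bound comes from.

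The first term is immediate. Since $\abs{\widehat{e^{i\varphi}}}=1$, it equals $\abs{\sqrt{\widehat B_{ij}}-\sqrt{B_{ij}}}^2$. This is exactly the quantity bounded in the proof of Proposition~\ref{prop:inT}, via $\abs{\sqrt{x}-\sqrt{y}}\le\abs{x-y}/\sqrt{y}$ together with \eqref{eq:bargmann-MSE} at $m=2$. It therefore contributes at most $\frac{2}{B_{ij}}\big(16\varepsilon^2+\frac{k\sigma^2}{N}\right)$.

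The second term requires controlling how far the normalized estimate lies from the true phase. I will use the elementary inequality that for nonzero complex $z,w$,
\begin{equation*}
\Big|\frac{z}{\abs z}-\frac{w}{\abs w}\Big|\le \frac{2\abs{z-w}}{\abs w},
\end{equation*}
which follows from $\frac{z}{\abs z}-\frac{w}{\abs w}=\frac{z-w}{\abs w}+z\big(\frac1{\abs z}-\frac1{\abs w}\big)$ together with $\big|\abs w-\abs z\big|\le\abs{z-w}$. Taking $z=\widehat B_{\pi(i,j)}$ and $w=B_{\pi(i,j)}$ and squaring gives $\abs{\widehat{e^{i\varphi}}-e^{i\varphi}}^2\le 4\abs{\widehat B_\pi-B_\pi}^2/\abs{B_\pi}^2$. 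Multiplying by $2B_{ij}$ and taking expectations, then invoking \eqref{eq:bargmann-MSE} for an invariant of order $m_{ij}$, yields $\frac{8B_{ij}}{\abs{B_{\pi(i,j)}}^2}\big(4m_{ij}^2\varepsilon^2+\frac{k\sigma^2}{N}\big)$. Since the tree $T$ is deterministic, or independent of these samples by the sample-splitting remark, the expectations can be taken directly.

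The main obstacle is the degenerate case $\widehat B_{\pi}=0$, where \eqref{eq:estimator-phase} is undefined. I would handle it by adopting the convention that $\widehat{e^{i\varphi}}$ is then any unit complex number. The displayed inequality still holds in that case, because its left side is at most $2$ while its right side equals $2\abs{w}/\abs{w}=2$. A second point to check is the orientation of the path: the path $\pi(i,j)$ versus $\pi(j,i)$ only conjugates both $z$ and $w$, and conjugation leaves $\abs{z-w}$ unchanged, so the bound is unaffected.
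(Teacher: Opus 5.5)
Your proposal is correct and follows the paper's proof essentially step for step. It uses the same splitting $\widehat{X}(\widehat{Y}-Y)+Y(\widehat{X}-X)$ with $\abs{\widehat{X}}=1$, the same inequality $\abs{z/\abs{z}-w/\abs{w}}\le 2\abs{z-w}/\abs{w}$, and the same appeal to \eqref{eq:bargmann-MSE} at orders $2$ and $m_{ij}$. Your handling of $\widehat{B}_{\pi(i,j)}=0$ and of the path orientation are refinements the paper leaves implicit; the orientation remark addresses a real notational mismatch, since the paper's \eqref{eq:gram-ideal2} uses $\varphi_{ji}$ while \eqref{eq:gram2} uses $\varphi_{i,j}$.
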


\begin{proof}
For notational simplicity let $X = e^{i\varphi_{ij}}, Y = \sqrt{B_{ij}}, \widehat{X} = \widehat{e^{i\varphi_{ij}}}, \widehat{Y} = \widehat{\sqrt{B_{ij}}}$, so that $G_{ij} = XY, \widehat{G_{ij} }= \widehat{X}\cdot \widehat{Y}$. Recall that $X = \frac{B_{\pi(i,j)}}{\abs{B_{\pi(i,j)}}}$ and $\widehat{X} =  \frac{\widehat{B}_{\pi(i,j)}}{\abs{\widehat{B}_{\pi(i,j)}}}$. We first write:
\begin{equation}
\widehat{G_{ij}} - G_{ij} =  \widehat{X}\cdot\widehat{Y}- XY  = \widehat{X}(\widehat{Y} - Y) + Y(\widehat{X} - X)
\end{equation}
so that:
\begin{equation}\label{eq:Gij-MSE}
    (\widehat{G_{ij}} - G_{ij})^2 \leq 2\abs{\widehat{X}}^2(\widehat{Y} - Y)^2 + 2Y^2\abs{\widehat{X} - X}^2 = 
    2(\widehat{Y} - Y)^2 + 2Y^2\abs{\widehat{X} - X}^2
\end{equation}
For any complex numbers $z, w$ with $w \neq 0$ the following inequality is easily verified:
\begin{equation}
    \abs{\frac{z}{\abs{z}} - \frac{w}{\abs{w}}} \leq \frac{2\abs{z-w}}{\abs{w}}
\end{equation}
This immediately yields:
\begin{equation}
    \abs{\widehat{X} - X} \leq \frac{2\abs{ \widehat{B}_{\pi(i,j)} - B_{\pi(i,j) }}}{\abs{B_{\pi(i,j)}}}
\end{equation}
Plugging back to \eqref{eq:Gij-MSE} we obtain:
\begin{equation}
    \E (\widehat{G_{ij}} - G_{ij})^2 \leq
    2\E(\widehat{B_{ij}} - B_{ij})^2 +  \frac{8B_{ij}}{\abs{B_{\pi(i,j)}}^2} \cdot \E\abs{ \widehat{B}_{\pi(i,j)} - B_{\pi(i,j) }}^2
\end{equation}
which by \eqref{eq:bargmann-MSE} gives:
\begin{equation}
    \E (\widehat{G_{ij}} - G_{ij})^2 \leq
    \frac{2}{B_{ij}} \left( 16\varepsilon^2 + \frac{k\sigma^2}{N} \right) +  \frac{8B_{ij}}{\abs{B_{\pi(i,j)}}^2} \cdot \left(4m^2_{ij}\varepsilon^2 + \frac{k\sigma^2}{N}   \right) 
\end{equation}
as desired.
\end{proof}

Consider now the protocol from \Cref{sec:protocol-main}. In order to use the above bounds we assume that the spanning tree $T$ is chosen independently from the estimators of Bargmann invariants. Combining \Cref{prop:inT} and \Cref{prop:notinT} immediately yields the following bound on the total MSE:
\begin{align}
    & \E\norm{G - \widehat{G}}_{2}^2 = \sum_{i,j}\E (\widehat{G_{ij}} - G_{ij})^2 \leq\\
    & \sum_{(i,j)\in T}\left(  \frac{1}{B_{ij}}\cdot \left(16\varepsilon^2 + \frac{k\sigma^2}{N}\right) \right) + \sum_{(i,j)\notin T} \left( \frac{2}{B_{ij}} \left( 16\varepsilon^2 + \frac{k\sigma^2}{N} \right) +  \frac{8B_{ij}}{\abs{B_{\pi(i,j)}}^2} \cdot \left(4m^2_{ij}\varepsilon^2 + \frac{k\sigma^2}{N}   \right)  \right)
\end{align}
This already gives us the formula from \Cref{th:main-MSE-simple}:
\begin{equation}
    \E\norm{G - \widehat{G}}_{2}^2 \leq \frac{A}{N}+B \varepsilon^2
\end{equation}
with:
\begin{align}
    & A = k\sigma^2\left( \sum_{(i,j)\in T}\frac{1}{B_{ij}} + 2\sum_{(i,j)\notin T}\left(\frac{1}{B_{ij}} + \frac{ 4 B_{ij}  }{\abs{B_{\pi(i,j)}}^2} \right)\right) \label{eq:B-bound}  \\
    & B=  16 \left( \sum_{(i,j)\in T}\frac{1}{B_{ij}} + 2\sum_{(i,j)\notin T}\left(\frac{1}{B_{ij}} + \frac{ B_{ij} m^2_{ij} }{\abs{B_{\pi(i,j)}}^2} \right)\right) \label{eq:A-bound}
\end{align}

If some of the Bargmann invariants in the denominators of \eqref{eq:A-bound}, \eqref{eq:B-bound} are very small, the bounds become useless. This can be mitigated by introducing a nonzero threshold $\eta >0$ and removing edges for which $B_{ij} < \eta$, which amounts to setting the desired Gram matrix entry $G_{ij}$ to zero. In this way we obtain:
\begin{equation}
    \E\norm{G - \widehat{G}}_{2}^2 \leq \frac{A_{\eta}}{N} + B_{\eta} \varepsilon^2  + n_{\eta}\cdot \eta
\end{equation}
where $A_{\eta}, B_{\eta}$ are versions of \eqref{eq:A-bound}, \eqref{eq:B-bound} with summation only over $(i,j)$ such that $B_{ij} \geq \eta$ and $n_{\eta}$ is the number of removed edges. With additional assumptions a judicious choice of $\eta \sim \frac{1}{poly(n)}$ gives an upper bound on the error that does not depend on the magnitudes of Bargmann invariants. For example, if we choose $\eta \sim \frac{\delta^2}{n^3}$ and the resulting graph admits a spanning tree which is a star, the total number of samples needed to get accuracy $\delta$ scales as $O(\frac{n^{11}}{\delta^6})$.

The above expressions are valid for any fixed spanning tree $T$ chosen independently of the estimators, but do not readily give an effective recipe for finding a $T$ which minimizes the total error since \eqref{eq:A-bound}, \eqref{eq:B-bound} are complicated functions of $T$ and corresponding Bargmann invariants $B_{\pi(i,j)}$. We propose a heuristic choice of $T$ that tries to maximize the worst denominator $B_{\pi(i,j)}$ occurring in the error bounds and can be found by a polynomial time algorithm. We describe this procedure in the following section.

\subsection{Finding the optimal spanning tree}\label{sec:tree}

Let $\Gamma$ be a connected graph on $n$ vertices with edge weights $B_{ij}$ for $(i,j) \in \Gamma$. For a spanning tree $T$ of $\Gamma$ define:
\begin{equation}
     M(T) := \min_{\pi \in T}\prod_{u_l \in \pi}B_{u_{l},u_{l+1}}
\end{equation}
i.e. the minimum product of edge weights along a path in the tree $T$. By \eqref{eq:b-notin-T} we see that:
\begin{equation}\label{eq:bargmann-lowerbound}
    \abs{B_{\pi(i,j)}} \geq \sqrt{B_{ij}} \cdot \sqrt{M(T)}
\end{equation} 
We will now show how to efficiently find $T$ which maximizes $M(T)$. By \eqref{eq:bargmann-lowerbound} this in turn provides a desired upper bound on the error terms in \eqref{eq:A-bound}, \eqref{eq:B-bound} that depend on $\frac{1}{ \abs{B_{\pi(i,j)}}^2}$.

Consider an arbitrary weighted graph with edge weights $w_{ij} \in [0,1]$. For a path $\pi =(v_1,\dots,v_k)$ connecting two vertices we define its weight as:
\begin{equation}
    w(\pi) := \prod_{i=1}^{k-1}w(u_i,u_{i+1})
\end{equation}

We want to find the spanning tree $T$ that maximizes $\min_{\pi \in T}w(\pi)$, where the maximum is taken over all paths in $T$. Note that for any path $\pi$:
\begin{equation}
        w(\pi) = \prod_{i=1}^{k-1}w(u_i,u_{i+1}) = \exp\left(\sum_{i=1}^{k-1}\log w(u_i, u_{i+1})\right) =
        \exp\left(-\sum_{i=1}^{k-1}\gamma(u_i, u_{i+1})\right)
\end{equation}
where we define $\gamma(u,v) = -\log w(u,v)$. Note that since $w(u,v) \in [0,1]$, the weights $\gamma$ are nonnegative and obviously for any spanning tree $T$:
\begin{equation}
    \min_{\pi \in T}w(\pi) = \min_{\pi \in T} \exp\left(-\sum_{i=1}^{k-1}\gamma(u_i, u_{i+1})\right) = 
    \exp\left(-\max_{\pi \in T}\sum_{i=1}^{k-1}\gamma(u_i, u_{i+1})\right)
\end{equation}
Thus, we equivalently need to find a spanning tree $T$ that minimizes $\max_{\pi \in T}\gamma(\pi)$, where:
\begin{equation}
    \gamma(\pi) := \sum_{i=1}^{k-1}\gamma(u_i, u_{i+1}).
\end{equation}
This is equivalent to solving the Minimum Diameter Spanning Tree (MDST) problem for a weighted graph, which can be easily solved in time $O(mn + n^2 \log n)$ \cite{mdst}.

By plugging \eqref{eq:bargmann-lowerbound} into \eqref{eq:A-bound}, \eqref{eq:B-bound} we obtain the following bounds:
\begin{align}
    &  A\leq  k\sigma^2\left( \sum_{(i,j)\in T}\frac{1}{B_{ij}} + 2\sum_{(i,j)\notin T}\left(\frac{1}{B_{ij}} + \frac{ 4   }{M(T)} \right)\right) \label{eq:B-bound-mt}\\
    & B \leq 16 \left( \sum_{(i,j)\in T}\frac{1}{B_{ij}} + 2\sum_{(i,j)\notin T}\left(\frac{1}{B_{ij}} + \frac{  m^2_{ij} }{M(T)} \right)\right) \label{eq:A-bound-mt} 
\end{align}

More refined choices of $T$, resulting in a smaller error bound, are possible. However, we expect that in most cases of practical interest the choice of a sufficiently good spanning tree will not be the bottleneck of the protocol (e.g. if all Bargmann invariants are close to $1$, we can simply choose $T$ to be a star rooted at arbitrary vertex), so we have not optimized the choice of the tree further.
\section{Condition number proof}\label{sec: Conditioning number}

In this section we prove that the inversion procedure from \Cref{sec:relational-info}, used to estimate Bargmann invariants $\tr(\rho_k \rho_l)$ of order two from correlators $\langle n_i n_j \rangle$ by inverting a linear system, is well conditioned for generic interferometers once $m \gg n^2$ (the diluted regime). Our proof technique falls short of proving the same assuming only $m \gg n$ (the saturated regime), but we present strong numerical evidence that the system is in fact well conditioned as soon as $m \gg n$. We also provide numerical evidence for well conditioning of linear system required to solve for Bargmann invariants of order three (see discussion in \Cref{sec:relational-info}).

The proof strategy reduces the condition number to bounding, with high probability the singular values of the linear system matrix. This is achieved using basic random matrix methods together with computing Haar measure integrals via Weingarten calculus and then using concentration inequalities over the unitary group. 

\subsection{Setup}

Recall the linear system described by \eqref{eq:Axb1}, \eqref{eq:Axb2}:
\begin{equation} \label{eq:linear-system}
    A_2(U)x_2 = b_2
\end{equation}
where $x_2 = (\tr(\rho_k \rho_l))_{k \neq l}$ is the vector of Bargmann invariants of order two, $b_2 = (b_{ij})_{i \neq j}$ is the right hand side vector of values corresponding to shifted correlators (cf. \eqref{eq:Axb1}) and $A(U)$ is the matrix depending on the interferometer:
\begin{equation}\label{eq:A(U)}
    A_2(U)_{ij;kl} = U_{ik}U_{jl}\overline{U_{il}} \overline{U_{jk}}  
\end{equation}
Since both $x_2$ and $b_2$ are symmetric with respect to changing indices $k \leftrightarrow l, i \leftrightarrow j$, we can retain only their parts indexed, respectively, by unordered pairs $e=\{k,l\}$ and $\alpha=\{i,j\}$, denoted by $x$ and $b$. The linear system \eqref{eq:linear-system} is then equivalent to:
\begin{equation}
    A(U) x = b
\end{equation}
where $A$ is an $\binom{m}{2}\times \binom{n}{2}$ matrix with entries:
\begin{equation}\label{eq:A-symmetrized}
    A(U)_{e,\alpha} := (A_2)_{ij,kl} + (\overline{A_2})_{ij,kl} = 2 \mathrm{Re}\left( U_{ik}U_{jl}\overline{U_{il}} \overline{U_{jk}} \right) 
\end{equation}
We will write $A$ instead of $A(U)$, where implicitly $U$ is taken to be a Haar random unitary matrix.

It is standard in numerical linear algebra that the relative error of the solution $x$ with respect to the error in the input $b$ depends on the condition number of the matrix $A$, equal to $\kappa(A) := \norm{A} \cdot\norm{A^{-1}}$, where $A^{-1}$ is the Moore-Penrose pseudoinverse. The system is said to be well conditioned if $\kappa(A)$ is sufficiently small -- in our setting this will mean that $\kappa(A)=O(1)$ (i.e. does not depend on the system size).

We will show that with high probability over the choice of $U$ the norms $\norm{A}=O(m^{-1}), \norm{A^{-1}}=O(m)$, which will imply that $\kappa(A)$ is of order $O(1)$. To this end, it suffices to bound the largest and smallest singular values $\sigma_{max}, \sigma_{min}$ of the matrix $A$, which are equal to the square roots of largest and smallest eigenvalues $\lambda_{max}, \lambda_{min}$ of $A^{T}A$. The bound $ \beta \leq \lambda_{min}, \lambda_{max} \leq \alpha$ is equivalent to:
\begin{equation}
    \beta I \leq A^{T}A \leq \alpha I
\end{equation}
in the positive semidefinite order.

Our main result is the following theorem:

\begin{theorem}\label{th:main-condition}
    Assume $m \geq C\cdot n^2 (\log n)^{16}$ for a sufficiently large constant $C$. Then with probability at least $1-\frac{1}{n}$ over the choice of $U$ the matrix $A$ satisfies:
    \begin{equation}
        \frac{1}{4}\cdot\frac{1}{m^2} I \leq A^{T}A \leq \frac{2}{m^2}I 
    \end{equation}
In particular, the condition number $\kappa(A)$ is with high probability bounded by $\sqrt{8} < 3$. 
\end{theorem}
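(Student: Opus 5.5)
Write $A^{T}A=\sum_{\alpha}a_\alpha a_\alpha^{T}$, where $a_\alpha\in\mathbb{R}^{\binom{n}{2}}$ is the row of $A$ indexed by the output pair $\alpha=\{i,j\}$, with entries $(a_\alpha)_e=2\mathrm{Re}(U_{ik}U_{jl}\overline{U_{il}}\overline{U_{jk}})$ for $e=\{k,l\}$. The plan has two parts: compute the mean $\E[A^{T}A]$ exactly with Weingarten calculus, then show that $A^{T}A$ concentrates around it in operator norm.

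\textbf{Step 1: the mean.} For each pair $e,e'$ of input pairs, $\E[A^{T}A]_{e,e'}=\sum_\alpha \E[(a_\alpha)_e(a_\alpha)_{e'}]$ is a degree-$(4,4)$ Haar moment. Expand $2\mathrm{Re}(z)=z+\bar z$ and apply Weingarten calculus on $U(m)$ at order $4$. The leading contributions $\sim 1/m^4$ come from Wick-type pairings. Summing over the $\binom{m}{2}$ output pairs then gives
\begin{equation*}
    \E[A^{T}A]_{e,e}=\frac{c_0}{m^2}\bigl(1+O(1/m)\bigr)\qquad\text{with } c_0\in[1/2,1].
\end{equation*}
I expect the leading term to be $1/m^2$: $\E|U_{ik}U_{jl}U_{il}U_{jk}|^2\approx 2/m^4$ times the $m^2/2$ output pairs. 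Off-diagonal entries should be $O(1/m^3)$ when $e\cap e'\neq\emptyset$ and $O(1/m^4)$ when $e,e'$ are disjoint, because the phases of the factors average out. The off-diagonal block then has operator norm at most
\begin{equation*}
    n\cdot O(m^{-3})+n^2\cdot O(m^{-4})\ll m^{-2},
\end{equation*}
using the row-sum (Gershgorin) bound and $m\gg n^2$. Hence $\E[A^{T}A]$ is within a small multiple of $m^{-2}$ of $c_0 m^{-2}I$, comfortably inside the target window $[\tfrac14 m^{-2},2m^{-2}]$.

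\textbf{Step 2: concentration.} The rows $a_\alpha$ are not independent, so I would not use a matrix Chernoff bound for independent summands. Instead I would use concentration of measure on $U(m)$ (Gromov--Milman, log-Sobolev constant $\sim 1/m$) applied to the quadratic forms $f_v(U)=\norm{A(U)v}^2$ for unit vectors $v$. The function $f_v$ is a degree-$8$ polynomial in the entries of $U$, hence not globally Lipschitz with a good constant. I therefore plan a truncation: on the high-probability event that all entries satisfy $|U_{ij}|\le C\sqrt{\log m/m}$, the gradient of $A(U)v$ is controlled, which makes $\sqrt{f_v}$ Lipschitz with constant
\begin{equation*}
    L\lesssim m^{-1}\,\mathrm{polylog}(m)\,\mathrm{poly}(\sqrt{n/m}).
\end{equation*}
The polylogarithmic losses from this truncation, compounded over the degree, are where the $(\log n)^{16}$ in the hypothesis comes from. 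This yields tails $\exp(-c\,m\,t^2/L^2)$ for deviations of size $t\sim m^{-1}$ of $\norm{Av}$ around its mean.

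\textbf{Step 3: net argument.} Take an $\epsilon$-net of the unit sphere in $\mathbb{R}^{\binom{n}{2}}$, of size $e^{O(n^2)}$, and union-bound. This requires the exponent to beat $n^2$, which is precisely what $m\ge Cn^2(\log n)^{16}$ provides. A standard net-to-sphere comparison then transfers the bounds to all unit $v$ and gives $\tfrac14 m^{-2}I\le A^{T}A\le 2m^{-2}I$ with failure probability at most $1/n$. The condition number bound follows as $\kappa\le\sqrt{(2/m^2)/(1/(4m^2))}=\sqrt8$.

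\textbf{Main obstacle.} I expect Step 2 to be the hardest part: obtaining a Lipschitz constant for $v\mapsto\norm{A(U)v}$ sharp enough that the concentration exponent beats the $e^{O(n^2)}$ net size. I would first try the entry-truncation combined with a Lipschitz extension (Kirszbraun/McShane) of the restricted function. If that is too lossy, the fallback is to bound $\E\,\mathrm{tr}(A^{T}A-\E A^{T}A)^{2p}$ via higher Weingarten moments and apply Markov's inequality, at the cost of extra logarithmic factors.
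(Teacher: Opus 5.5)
Your plan is viable, and it takes a genuinely different route from the paper. The paper never works with the quadratic forms $\|A(U)v\|^2$. Instead it:
\begin{enumerate}
\item computes the mean of every entry of $A^{T}A$ (order-$4$ Weingarten);
\item computes the exact entrywise variances (order-$8$ Weingarten), which are at most $4/m^5$ on the diagonal and $5/m^6$ off it;
\item proves a tail $\exp(-\tfrac{k}{e}t^{1/4k})$ for degree-$(k,k)$ polynomials on $U(m)$, from the log-Sobolev constant and a Casimir eigenvalue bound;
\item union-bounds over all $\binom{n}{2}^2$ entries;
\item passes to the operator norm through the entrywise criterion of Lemma~\ref{lm:main-lemma}.
\end{enumerate}
In that route, $m\gg n^2$ enters because off-diagonal fluctuations of order $m^{-3}$ must stay below $1/(8n^2m^2)$. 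The $(\log n)^{16}$ is the price of the $t^{1/16}$ tail at $k=4$.

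You need only the mean. You replace the polynomial tail by Gaussian concentration of Lipschitz functions after truncation, and you get uniformity in $v$ from a net. In your route, $m\gg n^2$ comes from the net cardinality $e^{O(n^2)}$ against an exponent of order $mt^2/L^2$ with $t\asymp 1/m$. Your approach dispenses with the order-$8$ Weingarten sums and the hypercontractivity lemma, and once completed it gives a better power of the logarithm. The paper's approach avoids truncation and extension arguments and comes with explicit constants.

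What you still have to supply is the Lipschitz estimate, and your guessed form needs correcting. Let $V$ be the first $n$ columns of $U$, let $\tilde v$ be the symmetric zero-diagonal $n\times n$ matrix with $\tilde v_{kl}=v_{\{k,l\}}$, and let $D_i=\mathrm{diag}(\overline{V_{i1}},\dots,\overline{V_{in}})$. Then $(Av)_{\{i,j\}}=\sum_{k,l}\tilde v_{kl}V_{ik}V_{jl}\overline{V_{il}}\,\overline{V_{jk}}$. Its derivative in a direction $H$ is a sum of four terms of the form $\sum_k H_{ik}W^{(i)}_{jk}$ with $W^{(i)}=\overline{V}\circ(VD_i\tilde v)$ (and analogues with $i\leftrightarrow j$ and complex conjugation). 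Each term has norm at most $\max_i\|W^{(i)}\|_{\mathrm{op}}\|H\|_F$, and
\begin{equation*}
\|W^{(i)}\|_{\mathrm{op}}\le\|W^{(i)}\|_{F}\le\max_{j,k}|V_{jk}|\cdot\|V\|_{\mathrm{op}}\cdot\max_{l}|V_{il}|\cdot\|\tilde v\|_{F}.
\end{equation*}
Consider the convex set $K=\{V\in\mathbb{C}^{m\times n}:\max_{j,k}|V_{jk}|\le c,\ \|V\|_{\mathrm{op}}\le1\}$ with $c=C\sqrt{\log m/m}$, which contains $V$ with high probability. On $K$, the map $V\mapsto\|A(V)v\|$ is $O(\log m/m)$-Lipschitz for every unit $v$.

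Work on this ambient convex set, not on the truncated subset of $U(m)$. That subset is not geodesically convex, so a gradient bound there does not by itself give a Lipschitz bound. Then compose with the projection onto $K$, or use a McShane extension. The variance of the resulting function is $O(L^2/m)\ll m^{-2}$, so its mean matches $\sqrt{v^{T}\mathbb{E}[A^{T}A]v}$ up to $o(1/m)$.

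The factor $\mathrm{poly}(\sqrt{n/m})$ you anticipate cannot be a genuine gain. For $v=e_{\{k,l\}}$ one has $\|A(U)v\|=\sum_i|U_{ik}|^2|U_{il}|^2+O(m^{-2})$ for typical $U$, and this fluctuates on the scale $m^{-3/2}$. The Poincar\'e inequality on $U(m)$ then forces $L\gtrsim1/m$.

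The factor is not needed either. With $L=O(\log m/m)$ and $t=c_1/m$, the tail is $\exp(-c\,m/\log^2 m)$, which beats $e^{O(n^2)}$ once $m\gtrsim n^2\log^2 m$. So the $(\log n)^{16}$ does not come from your truncation, and the completed argument proves the theorem under a weaker hypothesis.
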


\begin{figure}[thbp!]
    \centering
    \begin{minipage}[b]{0.48\textwidth}
        \centering
        \includegraphics[width=\textwidth]{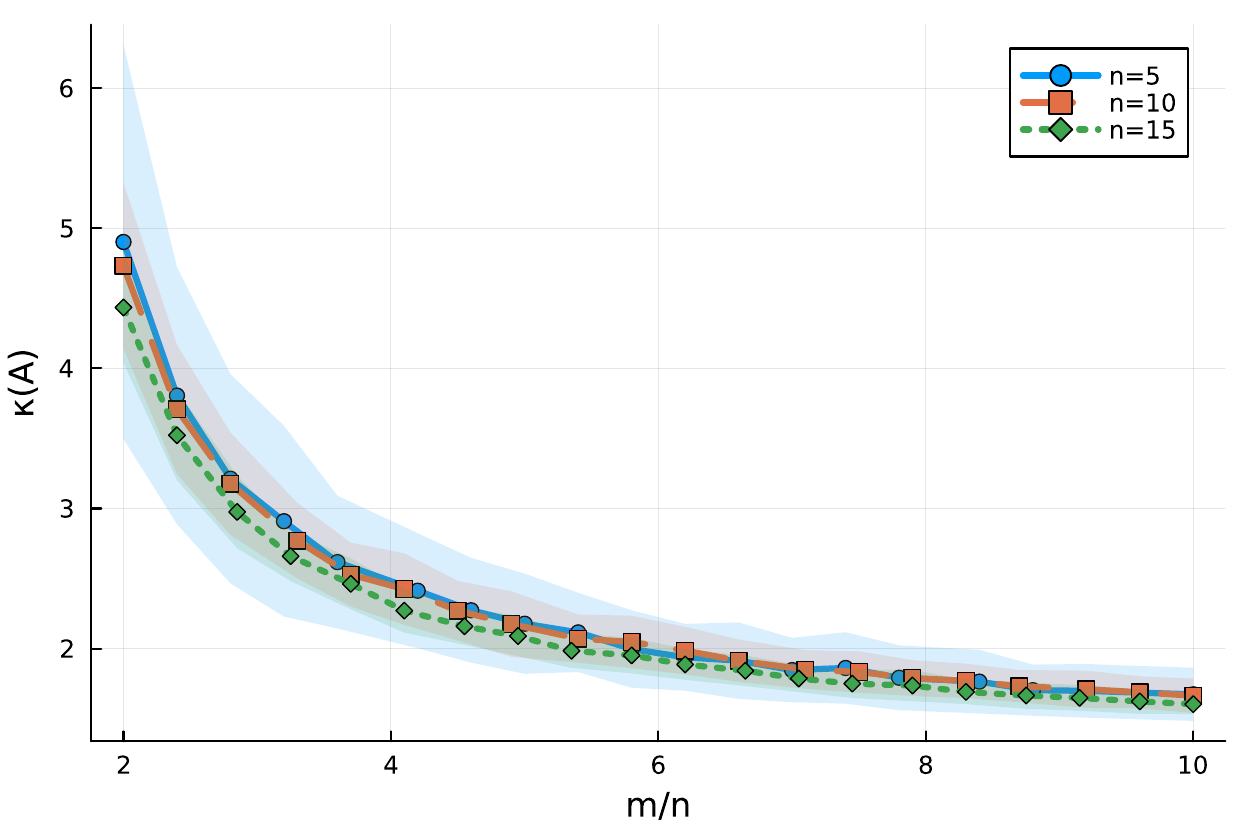}
        \caption*{(a) Condition number of $A_{2}$}
        \label{fig:figure1}
    \end{minipage}
    \hfill
    \begin{minipage}[b]{0.48\textwidth}
        \centering
        \includegraphics[width=\textwidth]{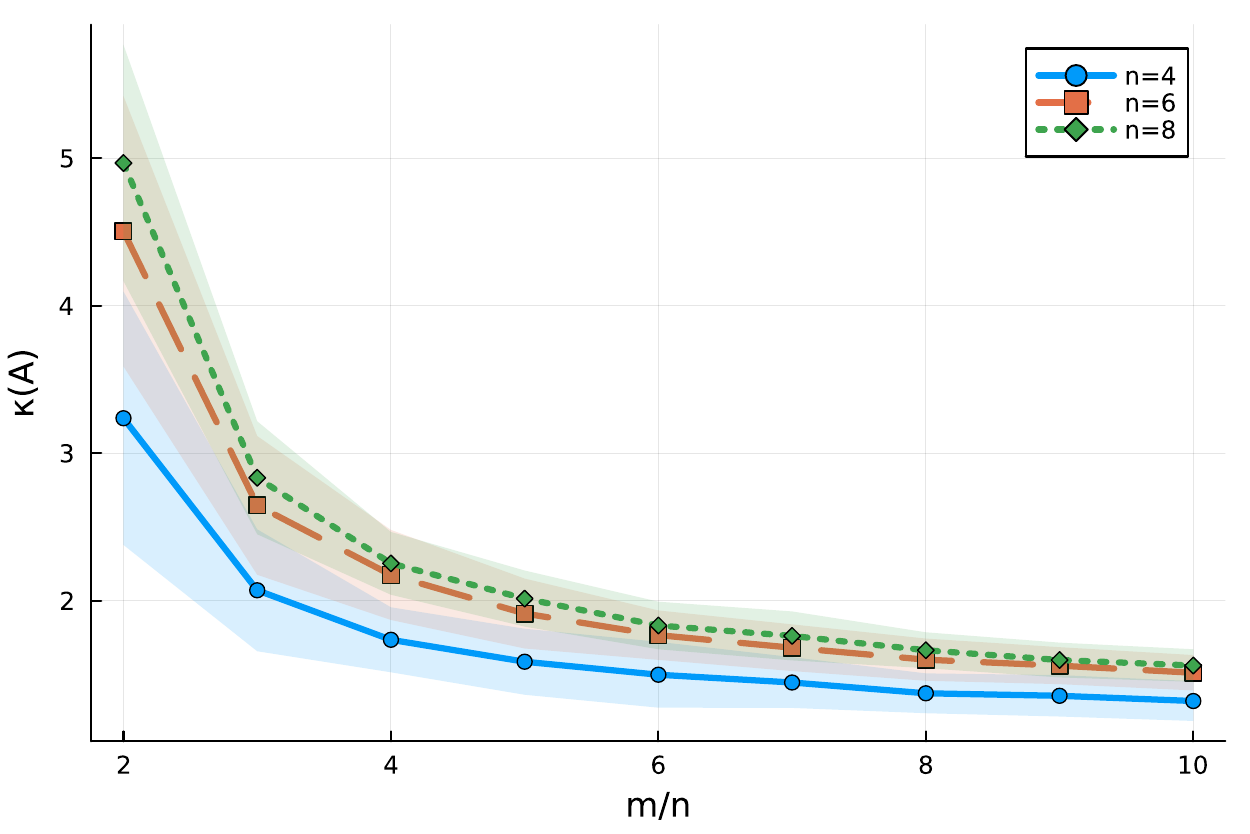}
        \caption*{(b) Condition number of $A_{3}$}
        \label{fig:figure2}
    \end{minipage}
    \caption{\textbf{Numerical conditioning of linear systems for second- and third-order Bargmann invariants.} 
    (a) Average condition number $\kappa(A_2)$ of the linear system $A_2 x_2 = b_2$ for second-order Bargmann invariants $|G_{ij}|^2$, plotted against the mode-to-photon ratio $m/n$ for $n \in \{5, 10, 15\}$. 
    (b) Average condition number $\kappa(A_3)$ of the linear system $A_3 x_3 = b_3$ for third-order Bargmann invariants $G_{ij}G_{jk}G_{ki}$, plotted against $m/n$ for $n \in \{4, 6, 8\}$. 
    Shaded regions denote one standard deviation across realizations of Haar-random unitaries $U \in U(m)$. In both cases, the condition number remains low ($\kappa \lesssim 5$) already in the regime $m = 2n$ and rapidly plateaus to $\kappa \approx 1.5$--$2.0$, confirming that correlator inversion remains numerically stable well into the dense mode regime.}
    \label{fig:conditioning_numbers}
\end{figure}

While Theorem~\ref{th:main-condition} rigorously guarantees that the linear system $A_2 x_2 = b_2$ is well-conditioned in the dilute regime ($m \gg n^2$), the analytical bound $m \gg n^2 (\log n)^{16}$ is overly conservative. In practical photonic architectures operating near the dense limit, well-conditioned inversion is expected to persist across the entire linear-scaling regime $m \ge (1+\delta)n$ for any $\delta > 0$, as suggested by Proposition \ref{prop:expectation}. 

To test this behavior, we numerically analyze the condition numbers $\kappa(A_2)$ and $\kappa(A_3)$ for reconstructing second- and third-order Bargmann invariants, respectively. For fixed photon numbers $n$ and mode counts $m$, we sample passive linear interferometers $U$ uniformly from the Haar measure on $U(m)$ to construct the measurement matrices $A_2$ and $A_3$. As shown in Fig.~\ref{fig:conditioning_numbers}, both $\kappa(A_2)$ and $\kappa(A_3)$ drop steeply as the ratio $m/n$ increases, flattening to a baseline of $\kappa \approx 1.5$--$2.0$ for $m/n \ge 5$. Down to $m/n = 2$---a lower bound chosen arbitrarily for numerical display rather than a physical transition---the condition numbers remain consistently small ($\kappa \lesssim 5$) with tight concentration across random unitaries. Crucially, the profiles for different values of $n$ collapse onto a single universal curve when plotted against $m/n$, exhibiting scale-invariance with system size. These results confirm that the quadratic requirement $m \gg n^2$ is an artifact of the proof technique, and generic passive interferometers enable stable, full Gram-matrix recovery whenever $m$ scales linearly with $n$.

\subsection{Expectation and variance}

Let us decompose $A^{T}A$ as:
\begin{equation}
    A^{T}A = \E A^{T}A + (A^{T}A - \E A^{T}A)
\end{equation}
where the expectation is taken with respect to the Haar random interferometer $U$. We will first compute the expectation $\E A^{T}A$ and bound its eigenvalues. Next, we will prove a high probability bound for the operator norm $\norm{A^{T}A - \E A^{T}A}_{\infty}$ of the error matrix. These two will overall give us a high probability bounds on $\lambda_{max}$ and $\lambda_{min}$ of $A^{T}A$. This strategy is rather crude since it uses just bounding the magnitude of each matrix entry to produce bounds on eigenvalues. We expect that a more sophisticated approach could lead to relaxing the condition in \Cref{th:main-condition} to $m\gg n$, but would face more mathematical difficulties.

We first write the matrix entries explicitly. Writing:
\begin{equation}
    p(i,j,k,l) := U_{ik}U_{jl}\overline{U_{il}} \overline{U_{jk}}  
\end{equation}
by \eqref{eq:A-symmetrized} we have for $\alpha=\{k,l\},\alpha'=\{k',l'\}$:
\begin{equation}\label{eq:AA(U)}
    (A^{T}A)_{\alpha,\alpha'} = \sum_e A_{e,\alpha}A_{e,\alpha'} =
    \sum_{i<j}  (p(i,j,k,l) + p(i,j,l,k))(p(i,j,k',l') + p(i,j,l',k'))
\end{equation}

We will need to compute the expectation and variance of each entry with respect to the choice of the random interferometer $U$, which are supplied by the following lemmas. Since each entry of \eqref{eq:AA(U)} is a polynomial in the entries of $U$, such averages can be computed by the use of Weingarten calculus. We describe this method together with proofs of the lemmas in \Cref{sec:weingarten}.

\begin{lemma}\label{lm:expectation-weingarten}
We have:    
\begin{align}
 \E(A^{T}A)_{\alpha,\alpha'} =\frac{1}{D_m}\cdot
    \begin{cases}
        m^2 + m + 10, & \text{$\alpha=\alpha'$} \\
        -(m-5), & \text{$\vert \alpha \cap \alpha' \vert=1$} \\
        4, & \text{$\alpha \cap \alpha' =\emptyset$}
    \end{cases}
\end{align}
where $D_m = m(m+1)(m+2)(m+3)$
\end{lemma}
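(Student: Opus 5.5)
Since every entry of $A^{T}A$ is a polynomial of degree $(4,4)$ in $(U,\overline U)$, I would compute its expectation directly with the degree-four Weingarten formula
\[
\E\Big[\prod_{a=1}^{4}U_{r_a c_a}\overline{U_{r'_a c'_a}}\Big]=\sum_{\sigma,\tau\in S_4}\prod_{a}\delta_{r_a r'_{\sigma(a)}}\delta_{c_a c'_{\tau(a)}}\,\mathrm{Wg}(\sigma\tau^{-1},m),
\]
as described in \Cref{sec:weingarten}.

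The first step is to reduce the number of cases using Haar invariance. Left multiplication by permutation matrices permutes the rows of $U$ without changing the measure. Hence every summand in \eqref{eq:AA(U)} indexed by a pair $i<j$ has the same expectation, and
\[
\E(A^{T}A)_{\alpha,\alpha'}=\binom{m}{2}\,\E\big[(p(1,2,k,l)+p(1,2,l,k))(p(1,2,k',l')+p(1,2,l',k'))\big].
\]
Right multiplication by permutation matrices shows that this quantity depends only on the overlap type of $\alpha$ and $\alpha'$. So I may take $\alpha'=\{1,2\}$, $\{1,3\}$ or $\{3,4\}$, which gives exactly the three cases of the statement. I would also use $p(i,j,l,k)=\overline{p(i,j,k,l)}$, so the bracket equals $4\,\mathrm{Re}\,p\,\mathrm{Re}\,p'=2\,\mathrm{Re}(p\,p')+2\,\mathrm{Re}(p\,\overline{p'})$. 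This leaves two monomial expectations per case, namely $\E[p\,p']$ and $\E[p\,\overline{p'}]$, each a product of four $U$'s and four $\overline U$'s.

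The second step is the Weingarten sum for each monomial. For $p\,p'$ the row multisets are $(1,2,1,2)$ for $U$ and $(1,2,1,2)$ for $\overline U$. Since the rows $1\neq 2$ are distinct, only the $2!\cdot 2!=4$ row-permutations $\sigma$ that send the two "1" slots to "1" slots (and likewise for "2") survive. The column permutations $\tau$ are restricted by the column labels $k,l,k',l'$.
\begin{itemize}
\item When $\alpha=\alpha'$ all labels coincide in pairs, and many $\tau$ survive.
\item When $|\alpha\cap\alpha'|=1$ fewer survive.
\item When $\alpha\cap\alpha'=\emptyset$ the columns are all distinct, and $\tau$ is essentially forced up to a few choices.
\end{itemize}
For each surviving pair $(\sigma,\tau)$ I would record the cycle type of $\sigma\tau^{-1}$ and substitute the standard $S_4$ Weingarten values. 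These are rational functions with common denominator $m^2(m^2-1)(m^2-4)(m^2-9)$, one value for each of the cycle types $1^4, 21^2, 2^2, 31, 4$. Summing them gives each monomial expectation as a rational function of $m$.

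The final step is to multiply by $\binom{m}{2}=m(m-1)/2$ and simplify. The factors $m(m-1)(m-2)(m-3)$ should cancel, leaving the denominator $D_m=m(m+1)(m+2)(m+3)$ and the numerators $m^2+m+10$, $-(m-5)$ and $4$. As sanity checks on the bookkeeping I would use two things. First, the diagonal entry should behave like $1/m^2$ for large $m$, consistent with \Cref{th:main-condition}. Second, the diagonal entry can be compared with an independent computation of $\E|p|^2$ via the known fourth moments of a Haar column pair.

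I expect the main obstacle to be the purely combinatorial enumeration of the surviving $(\sigma,\tau)$ pairs and their cycle types, especially in the diagonal case where the column labels repeat. Errors there propagate directly into the numerators. To control them I would tabulate the pairs systematically, or verify the three numbers symbolically for small $m$ by a direct Weingarten computation.
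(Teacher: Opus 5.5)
Your proposal is correct and follows essentially the paper's route: reduce to a single row pair by Haar invariance, expand $A_{e,\alpha}A_{e,\alpha'}$ into degree-$(4,4)$ monomials, evaluate them with the $S_4$ Weingarten function, and multiply by $\binom{m}{2}$. Carrying out your enumeration gives the per-pair value $\frac{2}{m^2(m-1)(m+1)(m+2)(m+3)}$ times $m^2+m+10$, $-(m-5)$ and $4$ in the three cases, with the anticipated factor $(m-2)(m-3)$ cancelling from the numerators, which matches the paper's intermediate formula.
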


\begin{lemma}\label{lm:variance-weingarten}
For $m\geq 8$ we have:
\begin{align}
 \mathrm{Var}(A^TA)_{\alpha,\alpha'} \leq
    \begin{cases}
        \frac{4}{m^5}, & \text{$\alpha=\alpha'$} \\
        \frac{5}{m^6}, & \text{otherwise}
    \end{cases}
\end{align}
\end{lemma}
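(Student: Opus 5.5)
The plan is to bound the variance of each entry through its second moment, $\mathrm{Var}(A^TA)_{\alpha,\alpha'} \le \E[(A^TA)_{\alpha,\alpha'}^2] - (\E(A^TA)_{\alpha,\alpha'})^2$. Both terms are Haar integrals of polynomials in $U,\overline{U}$, which we evaluate with Weingarten calculus. First expand each entry using \eqref{eq:AA(U)} as a sum over unordered output pairs $\{i,j\}$ of products $p(i,j,k,l)p(i,j,k',l')$ and their index-swapped variants. Then $(A^TA)_{\alpha,\alpha'}^2$ becomes a double sum over output pairs $\{i,j\},\{i',j'\}$ of degree-$(8,8)$ monomials in $U,\overline{U}$.

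Next, split this double sum by the overlap pattern of the output pairs: $\{i,j\}\cap\{i',j'\}=\emptyset$, one shared index, or $\{i,j\}=\{i',j'\}$. There are $O(m^4)$, $O(m^3)$ and $O(m^2)$ terms of each type respectively. By unitary invariance (permutations of rows and columns), each type's expectation depends only on the coincidence pattern of the column indices $k,l,k',l'$. That pattern is fixed by the three cases $\alpha=\alpha'$, $|\alpha\cap\alpha'|=1$ and $\alpha\cap\alpha'=\emptyset$, so the same case split as in \Cref{lm:expectation-weingarten} applies.

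For each pattern, evaluate $\E$ of the degree-$8$ monomial as $\sum_{\sigma,\tau\in S_8}\delta(\cdot)\,\mathrm{Wg}(\sigma\tau^{-1},m)$, using the asymptotic $\mathrm{Wg}(\pi,m)=m^{-8-|\pi|}(\mathrm{Moeb}(\pi)+O(m^{-2}))$. The key structural point is the disjoint-output block. For $\{i,j\}\cap\{i',j'\}=\emptyset$ the two factors are asymptotically independent: their covariance starts only at relative order $m^{-1}$ beyond the product of first moments, because the leading term $\E[X]\E[Y]$ cancels exactly against $(\E A^TA)^2$. So this block contributes $m^4\cdot O(m^{-8}\cdot m^{-2})$, i.e. $O(m^{-6})$ for off-diagonal entries. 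For diagonal entries the single monomial moments are larger, of order $m^{-4}$ rather than $m^{-5}$ when $k,l$ coincide pairwise, which yields the $m^{-5}$ scale. The shared-index and identical-pair blocks contribute $m^3\cdot O(m^{-9})$ and $m^2\cdot O(m^{-8})$ respectively, with an extra $m^{-1}$ saving off the diagonal because $E$ of a monomial with unmatched column indices vanishes at leading order.

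The main obstacle is the disjoint-output block. It requires an exact cancellation of the $m^{-4}$ (off-diagonal) or $m^{-3}$ (diagonal) leading order between the second moment and the squared mean, and the subleading Weingarten corrections must be tracked precisely to get the explicit constants $4$ and $5$. To handle it, I would compute the covariance exactly for the degree-$(8,8)$ Weingarten sum. This is best done symbolically or by computer over $S_8$, reducing by symmetry to the few coincidence patterns. I would then derive rational-function expressions in $m$ and check that the stated bounds hold for all $m\ge 8$ by clearing denominators and verifying positivity of the resulting polynomials.
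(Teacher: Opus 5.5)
Your plan is essentially the paper's proof. Write $(A^TA)_{\alpha,\alpha'}=\sum_e X_e$ with $X_e=A_{e,\alpha}A_{e,\alpha'}$, and classify ordered pairs $(e,f)$ by $r=|e\cap f|$ and $s=|\alpha\cap\alpha'|$. Then compute $M_{rs}=\E X_eX_f$ exactly with the degree-$(8,8)$ Weingarten formula (the exact character form of $\mathrm{Wg}_m$ needs $m\ge 8$, which is where that hypothesis enters). Sum the pieces with multiplicities, subtract the squared mean, and check the resulting rational inequalities by clearing denominators. If you carry out that exact computation for \emph{all three} values of $r$, the argument is complete; the first ``inequality'' you write is actually an equality.

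Do not lean on the power counting in your ``key structural point'' paragraph, because it puts the leading terms in the wrong blocks. At Gaussian leading order ($\sqrt m\,U_{ab}$ replaced by i.i.d.\ complex Gaussians), consider first the diagonal. Two output pairs sharing a row have covariance $\approx 16/m^8-(2/m^4)^2=12/m^8$. With $\approx m^3$ such ordered pairs, the $r=1$ block is $\approx +12/m^5$, not $O(m^{-6})$. The disjoint block is a \emph{negative} $\approx -8/m^5$ term, its cancellation occurring at order $m^{-4}$ rather than $m^{-3}$, and the constant $4$ is their difference. Off the diagonal, the $r=2$ block is $\binom{m}{2}\E[A_{e,\alpha}^2A_{e,\alpha'}^2]$. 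This is the mean of a nonnegative quantity with no phase cancellation, so there is no extra $m^{-1}$ saving: it is $\approx 8/m^6$ for $s=1$ and $\approx 2/m^6$ for $s=0$, the latter being the entire leading term. For $s=1$ the $r=1$ block contributes $\approx -4/m^6$ and the disjoint block only $\approx +1/m^6$, giving the constant $5$. So the disjoint-output block is not the main obstacle. Treating the $r\in\{1,2\}$ blocks as lower order would give the wrong constant off the diagonal and the wrong value of the $m^{-5}$ coefficient (even its sign) on the diagonal. The exact evaluation of every block, as in the paper, is what actually carries the proof.
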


\begin{proposition}\label{prop:expectation}
    Assume that $m \geq (1+\delta)n$ for some constant $\delta >  0$. Then the expected matrix $\E A^{T}A$ satisfies:
    \begin{equation}
        \frac{\delta}{1+\delta}\cdot\frac{1}{m^2} I \leq \E A^{T}A \leq \frac{1}{m^2} I
    \end{equation}
\end{proposition}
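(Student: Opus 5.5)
By Lemma~\ref{lm:expectation-weingarten}, the matrix $M:=D_m\,\E A^{T}A$, indexed by unordered pairs $\alpha\in\binom{[n]}{2}$, lies in the Bose--Mesner algebra of the Johnson scheme $J(n,2)$. Explicitly,
\[
M = a\,I + b\,\mathcal{A}_1 + c\,\mathcal{A}_2,
\]
with $a=m^2+m+10$, $b=-(m-5)$ and $c=4$. Here $\mathcal{A}_1$ is the adjacency matrix of the triangular graph (pairs sharing one element) and $\mathcal{A}_2$ is the adjacency matrix of its complement, the Kneser graph (disjoint pairs). Since $I$, $\mathcal{A}_1$, $\mathcal{A}_2$ commute and are simultaneously diagonalizable, I will read off the three eigenvalues of $M$ from the known spectrum of $J(n,2)$.

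\textbf{Spectrum of $M$.} The eigenvalues of $\mathcal{A}_1$ are $2(n-2)$, $n-4$ and $-2$, on eigenspaces of dimensions $1$, $n-1$ and $n(n-3)/2$. On the same eigenspaces $\mathcal{A}_2$ takes the values $\binom{n-2}{2}$, $-(n-3)$ and $1$, using $\mathcal{A}_2=J-I-\mathcal{A}_1$. This gives
\begin{align*}
\lambda_0 &= a + 2b(n-2) + 4\binom{n-2}{2},\\
\lambda_1 &= a + b(n-4) - 4(n-3),\\
\lambda_2 &= a - 2b + 4 .
\end{align*}
Inserting $a,b$ yields explicit polynomials:
\begin{align*}
\lambda_2 &= m^2+3m+4,\\
\lambda_1 &= m^2+m+10-(m-5)(n-4)-4(n-3),\\
\lambda_0 &= m^2+m+10-2(m-5)(n-2)+2(n-2)(n-3).
\end{align*}
When $n\le 3$ some of these eigenspaces are empty, and I will treat those cases separately and trivially.

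\textbf{Upper bound.} Since $b<0$ for $m>5$, the largest eigenvalue is $\lambda_2$. I need $\lambda_2\le D_m/m^2=(m+1)(m+2)(m+3)/m$, which holds because $(m+1)(m+2)(m+3)\ge m(m^2+3m+4)$.

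\textbf{Lower bound.} Both $\lambda_1$ and $\lambda_0$ are decreasing in $n$ in the relevant range: $\lambda_1$ is linear in $n$ with slope $-(m-1)$, and $\lambda_0$ is quadratic in $n$ with derivative $-2(m-5)+2(2n-5)<0$ when $n<m$. I will therefore substitute the extreme value $n=m/(1+\delta)$. At that point,
\[
\lambda_1 \approx m^2-mn = m^2\,\frac{\delta}{1+\delta}\quad\text{and}\quad \lambda_0 \approx m^2-2mn+2n^2 \ge \lambda_1 .
\]
The remaining task is to show that the lower-order corrections are favorable, i.e.
\[
\lambda_1 - \frac{\delta}{1+\delta}\,\frac{D_m}{m^2}\ge 0 .
\]
Writing $n\le m/(1+\delta)$, the difference equals $m^2+m+10-(m-1)n+8 - \frac{\delta}{1+\delta}\bigl(m^2+6m+11+6/m\bigr)$ up to exact constants. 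Using $(m-1)n\le (m-1)m/(1+\delta)$ reduces this to $\frac{1}{1+\delta}\bigl(m^2+\ldots\bigr)$ minus small terms, which I expect to be positive after routine algebra.

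\textbf{Main obstacle.} The hard part is the bookkeeping in the lower-order terms. In particular $\lambda_0$ is quadratic in $n$, so I must confirm that it stays above $\lambda_1$, or above the target, over the whole range $2\le n\le m/(1+\delta)$, including the small-$m$ edge cases. I would first prove $\lambda_0-\lambda_1 = -(m-5)n + 2n^2+\ldots$. If this difference turns negative for mid-range $n$, I would instead bound $\lambda_0$ directly against $\frac{\delta}{1+\delta}D_m/m^2$ using the convexity of $\lambda_0$ in $n$, checking the two endpoints $n=2$ and $n=m/(1+\delta)$.
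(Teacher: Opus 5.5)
Your Johnson-scheme diagonalization is correct and follows the paper's route. Your $\lambda_0,\lambda_1,\lambda_2$ simplify to $m^2+5m-2mn+2n^2+2$, $m^2+5m-(m-1)n+2$ and $m^2+3m+4$, which are the paper's three numerators. The upper bound goes through, although $b<0$ alone does not make $\lambda_2$ the largest: $c\,\mathcal{A}_2$ contributes $+4\binom{n-2}{2}$ on the all-ones vector. Check instead $\lambda_2-\lambda_0=2(n-1)(m-n-1)\ge 0$ and $\lambda_2-\lambda_1=(m-1)(n-2)\ge 0$.

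\textbf{The gap is the lower bound.} It cannot be closed by ``routine algebra'', because the inequality is false for $\delta>1$. Two of your intermediate claims fail:
\begin{itemize}
\item Your derivative $-2(m-5)+2(2n-5)=4n-2m$ is negative only for $n<m/2$, not for $n<m$.
\item $\lambda_0-\lambda_1=n(2n-m-1)$ is negative as soon as $m\ge 2n$.
\end{itemize}
Hence for $\delta>1$ the all-ones direction is the bottleneck. At $n=m/(1+\delta)$ one gets $\lambda_0=\frac{1+\delta^2}{(1+\delta)^2}m^2+O(m)$. The target $\frac{\delta}{1+\delta}D_m/m^2$ equals $\frac{\delta+\delta^2}{(1+\delta)^2}m^2+O(m)$, which is larger at leading order by $\frac{\delta-1}{(1+\delta)^2}m^2$. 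Concretely, $n=10$, $m=40$, $\delta=3$ gives $\lambda_0=1202$ against $\frac34 D_{40}/40^2\approx 1388$. This matches the paper's own Remark, where $\lambda_{\max}/\lambda_{\min}\to 1+\frac{2\delta}{1+\delta^2}>1+\frac1\delta$ for $\delta>1$.

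Your $\lambda_1$ estimate is also off. After substituting $n=m/(1+\delta)$, the $m^2$ terms cancel exactly; they do not leave $\frac{1}{1+\delta}m^2$. The remainder is $(7p-1)m+(11p-9)-6(1-p)/m$ with $p=\frac{1}{1+\delta}$, which is negative for large $m$ once $\delta>6$.

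\textbf{What is actually true.} The paper's proof establishes the case $0<\delta\le 1$. It writes $\lambda_0=2(n-\frac m2)^2+\frac12 m^2+5m+2$, minimizes at $n=m/2$, and explicitly uses ``since $\delta\le 1$''. In that regime your plan closes (note $m\ge 3$ since $m>n\ge 2$):
\begin{itemize}
\item $\lambda_0\ge\frac12m^2+5m+2\ge\frac12 D_m/m^2$.
\item $\lambda_2=m^2+3m+4\ge\frac12 D_m/m^2$.
\item The $\lambda_1$ remainder above is increasing in $p$, hence at least $\frac52 m-\frac72-\frac3m>0$ when $p\ge\frac12$.
\end{itemize}
For $\delta>1$ the correct conclusion is only $\E A^{T}A\succeq\frac{1}{2m^2}I$. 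This follows from the $\delta=1$ case, because $m\ge(1+\delta)n$ implies $m\ge 2n$. So the lower constant should read $\min\{\frac{\delta}{1+\delta},\frac12\}$. That weaker form is all Lemma~\ref{lm:main-lemma} needs in the proof of Theorem~\ref{th:main-condition}.
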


\begin{proof}
Let $J_1$ be the matrix given by $(J_1)_{\alpha,\alpha'} := \id_{\{\vert \alpha \cap \alpha' \vert=1\}}$ and likewise $J_2$ given by $(J_2)_{\alpha,\alpha'} := \id_{\{\alpha \cap \alpha' =\emptyset\}}$. Note that $J_2 = \id \id^T - I - J_1$, where $\id$ denotes the all-ones vector. By \Cref{lm:expectation-weingarten} we then have:
\begin{equation}
    \E A^TA = \frac{1}{D_m}\left( (m^2+m+10)I - (m-5)J_1 + 4 J_2 \right) =
    \frac{1}{D_m}\left( 4 \id \id^T + (m^2+m+6)I - (m-1)J_1 \right)
\end{equation}
where $D_m = m(m+1)(m+2)(m+3)$. The all-ones matrix $\id \id^T$ has $\id$ as the eigenvector with eigenvalue $\binom{n}{2}$ and is zero elsewhere. The matrix $J_1$ is the adjacency matrix of the Johnson graph $J(n,2)$. Its eigenvalues are $(-2, n-4, 2n-4)$, and the vector $\id$ has eigenvalue $2n-4$. Thus the matrix can be diagonalized exactly, with eigenvalues:
\begin{align}
    &\lambda_1 = \frac{m^2 + (5-2n)m + 2n^2 + 2}{D_m} \\ 
    &\lambda_2 = \frac{m^2 + (5-n)m + n + 2}{D_m} \\ 
    &\lambda_3 = \frac{m^2 + 3m+4}{D_m} 
\end{align}
Using the assumption $n \leq \frac{m}{1+\delta}$:
\begin{itemize}
    \item For $\lambda_3$, we have $\lambda_3 = \frac{m^2+3m+4}{m(m+1)(m+2)(m+3)} \leq \frac{1}{m^2}$ and $\lambda_3 \geq \frac{1}{m^2+3m+2} \geq \frac{\delta}{1+\delta}\frac{1}{m^2}$.
    \item For $\lambda_2$, the numerator is decreasing in $n$, so substituting $n \leq \frac{m}{1+\delta}$ gives:
    \begin{equation}
        m^2 + (5-n)m + n + 2 \geq m^2\left(1 - \frac{1}{1+\delta}\right) + m\left(5 + \frac{1}{1+\delta}\right) + 2 = \frac{\delta}{1+\delta}m^2 + \left(5 + \frac{1}{1+\delta}\right)m + 2
    \end{equation}
    which yields $\lambda_2 \geq \frac{\delta}{1+\delta}\frac{1}{m^2}$, with the maximum attained at $n=2$ satisfying $\lambda_2 \leq \lambda_3 \leq \frac{1}{m^2}$.
    \item For $\lambda_1$, the numerator $2(n - m/2)^2 + \frac{1}{2}m^2 + 5m + 2$ is minimized at $n = m/2 \leq \frac{m}{1+\delta}$ (since $\delta \leq 1$), giving a lower bound of $\frac{1}{2m^2} \geq \frac{\delta}{1+\delta}\frac{1}{m^2}$, and an upper bound bounded by $\frac{1}{m^2}$.
\end{itemize}
Hence, the smallest eigenvalue is $\geq \frac{\delta}{1+\delta}\frac{1}{m^2}$ and the largest is $\leq \frac{1}{m^2}$, proving the result.
\end{proof}
\begin{remark}
    Notice that from the previous proof we can also deduce that the ratio between the largest and smallest eigenvalue of $A^{T}A$ in the large $m$ limits behaves as
    \begin{equation}
        \left.\frac{\lambda_{\max}}{\lambda_{\min}}\right|_{m\gg 1}\geq  \begin{cases}
            1+\frac{1}{\delta} & \text{for} \ \delta\in (0,1] \\
            1+\frac{2\delta}{1+\delta^{2}} & \text{for} \ \delta>1
        \end{cases}
    \end{equation}
    which clearly diverges in the limit $\delta\to 0^{+}$ and converges to $1$ in the limit $\delta\to +\infty$.
\end{remark}
We will now need to bound the error term $A^{T}A - \E A^{T}A$. The following lemma says that if the error term is sufficiently small entrywise, then \Cref{th:main-condition} holds.


\begin{lemma}\label{lm:main-lemma}
Let $M = A^{T}A - \E A^{T}A $. Suppose that for each $\alpha$ we have $\vert M_{\alpha,\alpha} \vert\leq \frac{1}{8}\cdot \frac{1}{m^2}$ and for each $\alpha \neq \alpha'$ we have $\vert M_{\alpha,\alpha'} \vert \leq \frac{1}{8n^2}\cdot\frac{1}{m^2}$. Then: 
\begin{equation}
        \frac{1}{4}\cdot \frac{1}{m^2} I\leq A^{T}A\leq \frac{2}{m^2}I 
    \end{equation}
\end{lemma}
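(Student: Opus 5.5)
We split $A^{T}A = \E A^{T}A + M$ and control $M$ in operator norm using only its entrywise bounds. The two-sided eigenvalue bound then follows from Weyl's inequality combined with the spectrum of $\E A^{T}A$. Since $m \geq C n^2(\log n)^{16} \geq 2n$, Proposition~\ref{prop:expectation} applies with $\delta = 1$, giving
\begin{equation}
\frac{1}{2}\cdot\frac{1}{m^2} I \leq \E A^{T}A \leq \frac{1}{m^2} I .
\end{equation}
The explicit eigenvalues computed in its proof give this directly: $\lambda_3 \geq 1/(m^2+3m+2)$ and $\lambda_1 \geq 1/(2m^2)$ up to lower-order corrections. If needed, one can track the exact lower bound slightly above $\frac{1}{2m^2}$ rather than appealing to $\delta=1$ through the general statement.

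Next we bound $\norm{M}_\infty$. Write $M = M_D + M_O$, with $M_D$ the diagonal part and $M_O$ the off-diagonal part. The diagonal hypothesis gives directly $\norm{M_D}_\infty \leq \frac{1}{8m^2}$. For $M_O$ we use that the operator norm of a Hermitian (here real symmetric) matrix is at most its maximal absolute row sum, via the Schur test or Gershgorin. Each row of $M_O$ has $\binom{n}{2}-1 < n^2/2$ entries, each bounded by $\frac{1}{8n^2 m^2}$, so
\begin{equation}
\norm{M_O}_\infty \leq \frac{n^2}{2}\cdot\frac{1}{8n^2m^2} = \frac{1}{16 m^2}.
\end{equation}
Hence
\begin{equation}
\norm{M}_\infty \leq \frac{1}{8m^2}+\frac{1}{16m^2} = \frac{3}{16m^2}.
\end{equation}

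Finally, Weyl's inequality gives
\begin{equation}
\left(\frac{1}{2}-\frac{3}{16}\right)\frac{1}{m^2} I \leq A^{T}A \leq \left(1+\frac{3}{16}\right)\frac{1}{m^2} I .
\end{equation}
This yields $\frac{5}{16m^2} \geq \frac{1}{4m^2}$ on the lower side and $\frac{19}{16 m^2}\leq \frac{2}{m^2}$ on the upper side, as claimed.

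The only delicate point is the lower bound on $\E A^{T}A$. It must be checked that, for $m \geq 2n$ (and $m$ large), the smallest of $\lambda_1,\lambda_2,\lambda_3$ exceeds roughly $\frac{1}{2m^2}$ up to $O(m^{-3})$ corrections. This is needed because the margin $\frac{1}{2}-\frac{3}{16}-\frac14 = \frac{1}{16}$ must absorb those corrections. This is routine given the closed forms, and in fact for $m \gg n^2$ all three eigenvalues are $\frac{1}{m^2}(1-O(n/m))$, so there is ample slack.
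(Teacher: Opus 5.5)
Your proof is correct and is essentially the paper's argument: split $M$ into diagonal and off-diagonal parts, bound $\norm{M}_\infty$ from the entrywise hypotheses, and combine this with Proposition~\ref{prop:expectation} at $\delta=1$ via Weyl's inequality; your row-sum (Schur/Gershgorin) bound gives the slightly sharper $\norm{M}_\infty\le \frac{3}{16m^2}$ in place of the paper's $\frac{1}{4m^2}$. The ``delicate point'' you flag at the end is not actually delicate, because the closed-form eigenvalues give $\lambda_{\min}(\E A^{T}A)\ge\frac{1}{2m^2}$ exactly whenever $m\ge 2n$ and $m\ge 4$, so there are no lower-order corrections for your margin to absorb.
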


\begin{proof}
    We make use of the easily verified fact: if an $l\times l$ matrix $O$ satisfies $\abs{O_{ij}} \leq K$ for each $i,j$, then $\norm{O}_{\infty} \leq Kl$. Writing $M = D + O$, where $D$ is the diagonal part and $O$ the off-diagonal part, we have:
    \begin{equation}
        \norm{M}_{\infty} \leq \norm{D}_{\infty} + \norm{O}_{\infty} \leq \max_{\alpha} \vert M_{\alpha,\alpha} \vert + \binom{n}{2} \max_{\alpha \neq \alpha'}\vert M_{\alpha,\alpha'} \vert \leq \frac{1}{8}\frac{1}{m^2} + n^2 \cdot \frac{1}{8n^2}\frac{1}{m^2} < \frac{1}{4m^2}
    \end{equation}
    The claim then follows by invoking  \Cref{prop:expectation} and noting that $\lambda_{min}(A+B) \geq \lambda_{min}(A) - \norm{B}_{\infty}, \lambda_{max}(A+B) \leq \lambda_{max}(A) + \norm{B}_{\infty}$.
\end{proof}

To prove \Cref{th:main-condition} we are thus left with ensuring that conditions of \Cref{lm:main-lemma} are satisfied with high probability -- that is, each entry of the error matrix $M$ is small. This is done in the next section using concentration of measure for polynomials.

\subsection{Concentration for unitary polynomials}

To show that the conditions of \Cref{lm:main-lemma} hold with high probability requires proving a concentration bound on the entries of $A^{T}A$. In principle one can obtain such a bound by computing sufficiently high moments -- however, the size of corresopnding Weingarten calculus computations quickly become prohibitive. Instead, we will prove concentration with tails scaling like $\exp(-c t^{1/16})$ using the connection between log-Sobolev constants, hypercontractivity and bounds on eigenvalues of the Laplace-Beltrami operator on the unitary group. The ingredients used are standard and well known, but we have not seen them applied specifically to obtain a concentration bound for polynomials on the unitary group. The claimed concentration bound will have the same form as the one for Gaussian polynomials, see e.g. \cite{abmb}, Corollary 5.49.

We start with the observation that the unitary group $U(m)$ satisfies the log-Sobolev inequality for every locally Lipschitz function $f$:
\begin{equation}
    \mathrm{Ent}(f^2) \leq 2C_{LS} \int \E\abs{\nabla f}^2
\end{equation}
where $\mathrm{Ent}(g) = \E g \log (g) - (\E g) (\log\E g)$, with log-Sobolev constant $C_{LS} = \frac{6}{m}$ \cite{meckes}.

As is well known, log-Sobolev constant implies hypercontractivity. More precisely, let $\Delta$ be the Laplace-Beltrami operator on $U(m)$ and let $P_t = e^{t\Delta}$ be the corresponding heat semigroup. Fix $q > p > 1$. Then for any $t \geq \frac{C_{LS}}{2}\log\frac{q-1}{p-1}$ and $f \in L^p$ we have the hypercontractive bound:
\begin{equation}\label{eq:hypercontractivity}
    \norm{P_t f}_{q} \leq \norm{f}_{p}
\end{equation}

The bound \eqref{eq:hypercontractivity} can be used to derive concentration bounds for polynomials if one has bounds on the eigenvalues of $\Delta$ restricted to the space of polynomials of degree $(k,k)$. We have the following bound on eigenvalues:

\begin{lemma}\label{lm:eigenvalues}
    Let $\lambda$ be the label of any irrep of $U(m)$ that appears in the decomposition of the representation $U^{\otimes k} \otimes \overline{U^{\otimes k}}$, i.e. the space of polynomials of degree $(k,k)$. Assume $k \leq m$. Let $f$ be any matrix element of corresponding to $\lambda$. Then:
    \begin{equation}
        \Delta f = - k_{\lambda}f
    \end{equation}
    where $\abs{k_{\lambda}} \leq 4km$.
\end{lemma}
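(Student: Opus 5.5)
Matrix coefficients of an irrep $\lambda$ are eigenfunctions of the Laplace--Beltrami operator with eigenvalue minus the Casimir $C_2(\lambda)$. The plan is therefore to identify which highest weights $\lambda$ occur in $U^{\otimes k}\otimes\overline{U^{\otimes k}}$, write down the Casimir explicitly, and bound it. With the bi-invariant metric normalized by $\langle X,Y\rangle=\mathrm{Tr}(X^\dagger Y)$ on $\mathfrak{u}(m)$, the standard formula is $\Delta f=-C_2(\lambda)f$. Writing $\lambda=(\lambda_1\ge\dots\ge\lambda_m)$ as a highest weight of $U(m)$, with $\rho_i=\frac{m+1}{2}-i$, we have
\begin{equation*}
C_2(\lambda)=\sum_{i=1}^m \lambda_i^2+\sum_{i=1}^m \lambda_i(m+1-2i)=\sum_i\lambda_i(\lambda_i+m+1-2i).
\end{equation*}
I will use this as the definition of $k_\lambda$, noting that a different normalization of the metric only rescales it by a constant. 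That constant must agree with the one used for $C_{LS}=6/m$, so I will check the normalization against \cite{meckes}.

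First I determine the weights occurring. By Schur--Weyl duality, $U^{\otimes k}$ decomposes into irreps with weights $\mu\vdash k$, $\ell(\mu)\le m$, and $\overline{U^{\otimes k}}$ into their duals with weights $(-\nu_m,\dots,-\nu_1)$, $\nu\vdash k$. By the Littlewood--Richardson rule for mixed tensors (or by the Koike/King description), every constituent of $V_\mu\otimes V_\nu^*$ has a highest weight of the form $\lambda=(\alpha_1,\dots,\alpha_p,0,\dots,0,-\beta_q,\dots,-\beta_1)$, where $\alpha\vdash j$, $\beta\vdash j$ for some $j\le k$, and $p+q\le m$. The hypothesis $k\le m$ makes this description valid. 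The only fact I actually need is weaker: $\lambda_1+\dots+\lambda_m=0$ and $\sum_{\lambda_i>0}\lambda_i\le k$. This follows because the weights of the tensor product are sums of weights of the factors, so the positive part is bounded by $k$.

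Next I bound the Casimir, splitting it into the positive and negative parts. For the positive entries $\lambda_i=\alpha_i$ with $i\le p$, we have $\alpha_i\le k$ and $m+1-2i\le m$, so the contribution is at most $\sum_i\alpha_i(k+m)\le k(k+m)\le 2km$. For the negative entries at positions $i=m+1-r$ with $\lambda_i=-\beta_r$, we get $\lambda_i(m+1-2i)=-\beta_r(2r-1-m)=\beta_r(m+1-2r)$. Adding $\beta_r^2$ gives a total contribution at most $j(k+m)\le 2km$. Summing the two gives $0\le C_2(\lambda)\le 4km$. Nonnegativity holds because $-\Delta$ is a nonnegative operator, and this also shows $C_2(\lambda)\ge0$ directly for dominant weights, since $\langle\lambda,\lambda+2\rho\rangle\ge0$. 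This yields $|k_\lambda|\le 4km$.

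I expect the main obstacle to be the bookkeeping rather than the bound itself. One point is justifying that the eigenvalue on every matrix element of $V_\lambda$ equals the Casimir in this normalization. This requires that $\Delta=\sum_a X_a^2$ for an orthonormal basis acting by left-invariant vector fields, and that this sum is a central element acting as the scalar $-C_2(\lambda)$ on $V_\lambda$. The other point is making sure the constant matches the metric underlying $C_{LS}$, since the two must use the same normalization. If Meckes' normalization differs by a factor of two, the bound becomes $2km$ or $8km$. In that case I would absorb the factor into the hypercontractivity time $t$ used in the subsequent concentration argument, which does not change the resulting polynomial-tail conclusion.
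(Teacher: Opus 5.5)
Your proposal is correct and follows the same route as the paper. You identify the Laplacian eigenvalue with the $U(m)$ Casimir $k_\lambda=\sum_i(\lambda_i^2+(m+1-2i)\lambda_i)$, use that every occurring highest weight has $\sum_i\lambda_i=0$ and positive part at most $k$, and close with an elementary estimate using $k\le m$. Splitting into positive and negative entries and invoking nonnegativity of the Casimir is only a cosmetic variant of the paper's direct bound $|k_\lambda|\le 2r^2+2(m-1)r$ with $r\le k$, and your sum-of-weights argument justifies the weight constraint that the paper simply asserts.
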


\begin{proof}
    By standard analysis on Lie groups (see e.g. \cite{analysis}, Section 12.1) the Laplacian $\Delta$ is equivalent to the Casimir operator, which acts on the irrep space $\lambda$ as $k_{\lambda}\cdot Id$, where:
    \begin{equation}
        k_{\lambda} = \sum_{i=1}^{m} (\lambda_i^2 + (m - 2i + 1)\lambda_i)
    \end{equation}
    The irreps $\lambda$ that appear in the decomposition of the representation $U^{\otimes k} \otimes \overline{U^{\otimes k}}$ correspond to weight vectors $\lambda=(\lambda_1,\dots,\lambda_m)$ such that $\sum_i \lambda_i =0$ and $\sum_i\abs{\lambda_i} \leq 2k$. Let:
    \begin{equation}
        r := \sum_{\lambda_i>0}\lambda_i = \frac{1}{2}\sum_i \vert \lambda_i \vert
    \end{equation}
    By assumption $r \leq k$. Clearly $\sum_i \lambda_i^2 \leq 2r^2$. Moreover, also $\sum_i \vert m+1-2i \vert\cdot\vert\lambda_i \vert\leq 2r (m-1)$ because $\vert\lambda_i\vert \leq 2r$. Hence by $k\leq m$ we obtain:
    \begin{equation}
        \abs{k_{\lambda}} \leq 2r^2 + 2(m-1)r \leq 2k^2 + 2mk \leq 4km
    \end{equation}
\end{proof}

\begin{lemma}\label{lm:hypercontractivity-poly}
    Let $Q$ be a polynomial of degree $(k,k)$. Assume $m \geq k$. Then for every $q \geq 2$:
    \begin{equation}
        \norm{Q}_{q} \leq (q-1)^{4k} \norm{Q}_{2}
    \end{equation}
\end{lemma}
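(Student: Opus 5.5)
The route is to turn the log-Sobolev/hypercontractivity machinery into a reverse heat-flow estimate on the finite-dimensional space of degree-$(k,k)$ polynomials, using \Cref{lm:eigenvalues} to control how much $P_t^{-1}$ can amplify the $L^2$ norm. The stated exponent $4k$ requires one refinement: run the argument on $SU(m)$ rather than $U(m)$. Plugging $C_{LS}=6/m$ naively would only give $(q-1)^{12k}$. The refinement is legitimate because a degree-$(k,k)$ polynomial is invariant under the centre: $Q(e^{i\theta}V)=e^{ik\theta}e^{-ik\theta}Q(V)=Q(V)$. Haar measure on $U(m)$ pushes forward through $(z,V)\mapsto zV$ from $U(1)\times SU(m)$. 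Therefore $\norm{Q}_{L^p(U(m))}=\norm{Q|_{SU(m)}}_{L^p(SU(m))}$ for every $p$, and it suffices to prove the bound on $SU(m)$.

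\textbf{Step 1 (log-Sobolev on $SU(m)$).} I would use that $SU(m)$ with the Hilbert--Schmidt metric has Ricci curvature $m/2$. By Bakry--\'Emery it then satisfies the log-Sobolev inequality with $C_{LS}=2/m$ (the same source \cite{meckes} gives this; the worse constant $6/m$ for $U(m)$ comes only from the flat $U(1)$ direction). The hypercontractive bound \eqref{eq:hypercontractivity} therefore holds on $SU(m)$ with $p=2$ for
\begin{equation}
t=\frac{C_{LS}}{2}\log(q-1)=\frac{1}{m}\log(q-1).
\end{equation}

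\textbf{Step 2 (spectral decomposition).} By Peter--Weyl, $Q=\sum_\lambda Q_\lambda$, where each $Q_\lambda$ lies in the span of matrix elements of an irrep $\lambda$ occurring in $U^{\otimes k}\otimes\overline{U^{\otimes k}}$. These components are mutually orthogonal in $L^2$. The Laplacian on $SU(m)$ acts on $Q_\lambda$ by the Casimir of $\mathfrak{su}(m)$. For weights with $\sum_i\lambda_i=0$, this agrees with the $U(m)$ Casimir in \Cref{lm:eigenvalues}, since the central $\mathfrak{u}(1)$ contribution is proportional to $(\sum_i\lambda_i)^2=0$. Hence $\Delta Q_\lambda=-k_\lambda Q_\lambda$ with $0\le k_\lambda\le 4km$, using $k\le m$.

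\textbf{Step 3 (reverse heat flow).} Define $\widetilde Q:=\sum_\lambda e^{tk_\lambda}Q_\lambda$, so that $P_t\widetilde Q=Q$. Hypercontractivity and orthogonality then give
\begin{equation}
\norm{Q}_q=\norm{P_t\widetilde Q}_q\le\norm{\widetilde Q}_2=\Big(\sum_\lambda e^{2tk_\lambda}\norm{Q_\lambda}_2^2\Big)^{1/2}\le e^{4kmt}\norm{Q}_2 .
\end{equation}
With $t=\frac1m\log(q-1)$ this equals $(q-1)^{4k}\norm{Q}_2$, and finally we transfer back to $U(m)$ by the central invariance noted above.

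The main obstacle is Step 1 together with the normalization bookkeeping. The LSI constant, the hypercontractivity time and the Casimir eigenvalue of \Cref{lm:eigenvalues} must all refer to the same (Hilbert--Schmidt) metric and the same Laplacian. I would check this consistency on the test function $V_{11}\overline{V_{22}}$, whose eigenvalue should come out as $2m$ in both conventions. I would also check that the Bakry--\'Emery constant for $SU(m)$ is indeed $2/m$ in that normalization; this is what yields exactly the exponent $4k$ rather than $12k$.
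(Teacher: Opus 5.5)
Your proposal is correct and follows essentially the same route as the paper. Both pass to $SU(m)$ by central invariance to use $C_{LS}=2/m$, take $t=\frac{1}{m}\log(q-1)$, write $Q=P_tP_{-t}Q$, apply hypercontractivity, and bound $\norm{P_{-t}Q}_2\le e^{4kmt}\norm{Q}_2$ via the Casimir bound of \Cref{lm:eigenvalues}. Your extra checks make explicit what the paper leaves implicit: that the $SU(m)$ and $U(m)$ Casimirs agree on weights with $\sum_i\lambda_i=0$, and that $L^p$ norms transfer through $U(1)\times SU(m)\to U(m)$.
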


\begin{proof}
Let $\Lambda_k$ denote the set of irreps $\lambda$ of $U(m)$ that appear in the decomposition of the representation $U^{\otimes k} \otimes \overline{U^{\otimes k}}$, i.e. the space of polynomials of degree $(k,k)$. Let us decompose $Q$ into orthonormal eigenfunctions $f_{\lambda,i}$, where $\lambda$ labels irreps and $i$ is the index within the irrep:
\begin{equation}
    Q = \sum_{\lambda,i}c_{\lambda,i}f_{\lambda_i}
\end{equation}
where $\sum_{\lambda,i}\abs{c_{\lambda_i}}^2=\norm{Q}_{2}^{2}$. 

First, note that each $Q$ is invariant with respect to $Q(U) \rightarrow Q(e^{i\theta}U)$ because the degree is balanced. Thus each $Q$ determines a function on $SU(m)$ and we can use the better log-Sobolev constant for $SU(m)$, which is equal to $\frac{2}{m}$ \cite{meckes}. Consequently, choose $t = \frac{1}{m}\log(q-1)$. Since $Q = P_t P_{-t}Q$, by the hypercontractive bound \eqref{eq:hypercontractivity} we get:
\begin{equation}
    \norm{Q}_{q} \leq \norm{P_{-t}Q}_2
\end{equation}
By \Cref{lm:eigenvalues}, each $f_{\lambda,i}$ is an eigenfunction of $P_{-t}$ with eigenvalue $e^{k_{\lambda}t}$, where $k_{\lambda} \leq 4km$. Since $t = \frac{1}{m}\log(q-1)$, we have $e^{mt} = q-1$. Therefore:
\begin{equation}
    \norm{P_{-t}Q}_2^2 = \sum_{\lambda,i}e^{2k_{\lambda}t}\abs{c_{\lambda,i}}^2 \leq e^{8kmt}\norm{Q}_{2}^{2} = (q-1)^{8k}\norm{Q}_{2}^{2}
\end{equation}
\end{proof}

We are now ready to prove the main concentration bound:
\begin{theorem}\label{th:polynomial}
    Let $Q$ be a polynomial on $U(m)$ of degree $(k,k)$. Assume $m\geq k$. Then for any $t > (2e)^{4k}$:
    \begin{equation}\label{eq:concentration-bound}
        \mathbb{P}\left(\vert Q -\E Q\vert \geq t \sqrt{\mathrm{Var}(Q)}\right) \leq \exp\left(- \frac{k}{e}\cdot t^{1/4k}\right)
    \end{equation}
\end{theorem}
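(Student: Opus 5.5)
The plan is to apply Markov's inequality to a high moment of the centered polynomial, using the hypercontractive moment bound of \Cref{lm:hypercontractivity-poly} with a $q$ optimized in terms of $t$. Set $\tilde Q := Q - \E Q$. The constant function is the trivial-irrep component of $Q$, and it is a polynomial of degree $(0,0)\subseteq(k,k)$. So $\tilde Q$ is still a polynomial of degree $(k,k)$: in the eigenfunction expansion used in the proof of \Cref{lm:hypercontractivity-poly}, it is $Q$ with the trivial-irrep coefficient removed. In addition $\norm{\tilde Q}_2 = \sqrt{\mathrm{Var}(Q)}$. The lemma therefore gives, for every $q\geq 2$,
\begin{equation}
    \norm{\tilde Q}_q \leq (q-1)^{4k}\sqrt{\mathrm{Var}(Q)} \leq q^{4k}\sqrt{\mathrm{Var}(Q)}.
\end{equation}

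Markov's inequality applied to $\abs{\tilde Q}^q$ then yields
\begin{equation}
    \mathbb{P}\left(\abs{\tilde Q}\geq t\sqrt{\mathrm{Var}(Q)}\right) \leq \frac{\E\abs{\tilde Q}^q}{t^q\,\mathrm{Var}(Q)^{q/2}} \leq \left(\frac{q^{4k}}{t}\right)^{q}.
\end{equation}
Next choose $q$ so that the base equals $e^{-4k}$, namely $q := t^{1/4k}/e$. This choice is admissible, i.e.\ $q\geq 2$, exactly when $t^{1/4k}\geq 2e$, which is the stated hypothesis $t>(2e)^{4k}$. With this choice $q^{4k}/t = e^{-4k}$, and the bound becomes
\begin{equation}
    e^{-4kq} = \exp\left(-\frac{4k}{e}\,t^{1/4k}\right) \leq \exp\left(-\frac{k}{e}\,t^{1/4k}\right),
\end{equation}
which is the claim, in fact with a factor $4$ to spare in the exponent.

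The step needing the most care is checking that \Cref{lm:hypercontractivity-poly} really applies to $\tilde Q$ with the $L^2$ norm equal to the standard deviation. Concretely, subtracting the mean must not leave the class of degree-$(k,k)$ polynomials, which holds because constants lie in that span. The assumption $m\geq k$ is inherited directly from the lemma. Beyond that, the argument is only a choice of parameter, and it is the hypothesis $t>(2e)^{4k}$ that makes that choice legal.
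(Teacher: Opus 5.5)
Your proposal is correct and is essentially the paper's proof: reduce to the centered, unit-variance case, bound $\E\abs{Q-\E Q}^q$ with the hypercontractive estimate of \Cref{lm:hypercontractivity-poly}, apply Markov's inequality, and take $q=t^{1/4k}/e$, which satisfies $q\geq 2$ precisely because $t>(2e)^{4k}$. Two of your additions are also correct: you justify explicitly that $Q-\E Q$ stays in the degree-$(k,k)$ span, which the paper's ``by rescaling'' leaves implicit, and you observe that the argument actually yields the stronger exponent $\frac{4k}{e}t^{1/4k}$.
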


\begin{proof}
By rescaling we can assume $\E Q = 0, \mathrm{Var}(Q)=1$. By Markov's inequality and \Cref{lm:hypercontractivity-poly} for any $q\geq 2$:
\begin{equation}
    \mathbb{P}\left( \abs{Q} \geq t \right) \leq \frac{\E\abs{Q}^q}{t^q} \leq \left(\frac{(q-1)^{4k}}{t}\right)^q \leq \left(\frac{q^{4k}}{t}\right)^q
\end{equation}
One easily verifies that the choice $q = \frac{1}{e}\cdot t^{1/4k}$ yields the result, provided that $t > (2e)^{4k}$ so that $q > 2$.
\end{proof}

\begin{remark}
    The above argument gives the non-optimal tail exponent $\frac{1}{4k}$ for $m \geq k$. With some additional work this can be improved to $\frac{1}{2k}$. The optimal exponent $\frac{1}{k}$ (with different, non-explicit constants) can be obtained using recent results about hypercontractivity for Lie groups \cite{productmixing}. We retain the current proof due to its relatively elementary nature and explicit constants.
\end{remark}

We can now combine all the ingredients to prove \Cref{th:main-condition}.

\begin{proof}[Proof of \Cref{th:main-condition}]
We will prove that with probability at least $1-\frac{1}{n}$ the conditions of \Cref{lm:main-lemma} are satisfied. It suffices to prove that: 1) with probability $\geq 1 - \frac{1}{2n}$ for all diagonal terms we have $\vert M_{\alpha,\alpha} \vert\leq \frac{1}{8}\cdot \frac{1}{m^2}$; 2) with probability $\geq 1 - \frac{1}{2n}$ for each $\alpha \neq \alpha'$ we have $\vert M_{\alpha,\alpha'} \vert \leq \frac{1}{8n^2}\cdot\frac{1}{m^2}$. We first handle the diagonal terms. By union bounding over the $\binom{n}{2}$ possible choices of $\alpha$ it suffices to prove that:
\begin{equation}
\mathbb{P}\left( \vert M_{\alpha,\alpha} \vert\geq \frac{1}{8}\cdot\frac{1}{m^2}\right) \leq \frac{1}{n^3}.    
\end{equation}
We invoke \Cref{th:polynomial} applied to the polynomial $Q = (A^TA)_{\alpha,\alpha}$. Recall that by \Cref{lm:variance-weingarten} the variance of each diagonal term is bounded by $\frac{4}{m^{5}}$. We need a value of $t$ in \Cref{th:polynomial} such that the probability in \eqref{eq:concentration-bound} is at most $\frac{1}{n^3}$ and at the same time $t \sqrt{\mathrm{Var}((A^TA)_{\alpha,\alpha})} < \frac{1}{8}\cdot \frac{1}{m^2}$. After a short computation, we see that such a $t$ exists if:
\begin{equation}
    \frac{1}{16}\sqrt{m} > (3 \log n)^{16},
\end{equation}
which is satisfied by the assumption that $m \geq C n^2 (\log n)^8$ for a sufficiently large constant $C$. The off-diagonal case is handled in the same way, only now we union bound over $\binom{n}{2}^2 - \binom{n}{2}$ off-diagonal entries and thus require the relevant probability for each entry to be $< \frac{1}{n^5}$, and we require that $t \sqrt{\mathrm{Var}((A^TA)_{\alpha,\alpha'})} < \frac{1}{8n^2}\cdot \frac{1}{m^2}$. Again by \Cref{lm:variance-weingarten} this leads to:
\begin{equation}
  \frac{1}{8\sqrt{5}}\cdot\frac{m}{n^2} > (6 \log n)^{16},  
\end{equation}
which is again satisfied if $m \geq C n^2 (\log n)^{16}$.
\end{proof}

\subsection{Weingarten calculus}\label{sec:weingarten}

We recall the basic unitary Weingarten formula. For \(d\geq 1\) and multi-indices
\(\mathbf{i}=(i_1,\ldots,i_d)\), \(\mathbf{j}=(j_1,\ldots,j_d)\), let
\(U_{\mathbf{i},\mathbf{j}}:=\prod_{a=1}^d U_{i_a j_a}\). For
\(\sigma\in S_d\) put
\[
\delta_\sigma(\mathbf{i},\mathbf{i}')
:=
\prod_{a=1}^d \delta_{i_a,i'_{\sigma(a)}}.
\]
Then, for a Haar random unitary matrix \(U\in U(m)\) \cite{Collins:2006jgn}:
\begin{equation}
\mathbb{E}\!\left[
U_{\mathbf{i},\mathbf{j}}
\overline{U_{\mathbf{i}',\mathbf{j}'}}
\right]
=
\sum_{\sigma,\tau\in S_d}
\delta_\sigma(\mathbf{i},\mathbf{i}')
\delta_\tau(\mathbf{j},\mathbf{j}')
\operatorname{Wg}_m(\sigma^{-1}\tau),
\label{eq:unitary-weingarten}
\end{equation}
where \(\operatorname{Wg}_m\) is the unitary Weingarten function. For \(m\geq d\) it can be written as \cite{Collins:2006jgn}:
\begin{equation}
\operatorname{Wg}_m(\pi)
=
\frac{1}{d!}
\sum_{\lambda\vdash d}
\frac{f^\lambda\chi^\lambda(\pi)}
{\displaystyle\prod_{(a,b)\in\lambda}(m+b-a)}.
\label{eq:weingarten-character}
\end{equation}
Here \(\chi^\lambda\) is the irreducible character of \(S_d\) corresponding to the partition
\(\lambda\) and \(f^\lambda=\chi^\lambda(\mathrm{id})\). In particular,
\(\operatorname{Wg}_m(\pi)\) depends only on the cycle type of \(\pi\).

\begin{proof}[Proof of \Cref{lm:expectation-weingarten}]
Fix \(e=\{i,j\}\). Expanding \(A_{e,\alpha}A_{e,\alpha'}\) using \eqref{eq:A-symmetrized} gives four monomials of degree \((4,4)\).
Applying \eqref{eq:unitary-weingarten} and \eqref{eq:weingarten-character}, we can compute numerically how many times each cycle type occurs and the corresponding Weingarten function.

In the end we obtain:
\begin{equation}
\mathbb{E}\!\left[A_{e,\alpha}A_{e,\alpha'}\right]
=
\frac{2}{m^2(m-1)(m+1)(m+2)(m+3)}
\begin{cases}
m^2+m+10, & \alpha=\alpha',\\
-(m-5), & |\alpha\cap\alpha'|=1,\\
4, &  \alpha\cap\alpha'=\varnothing.
\end{cases}
\label{eq:single-row-pair-expectation}
\end{equation}
The right hand side does not depend on \(e\), and there are \(\binom{m}{2}\) possible
choices of \(e\). Therefore:
\[
\mathbb{E}(A^T A)_{\alpha,\alpha'}
=
\sum_e\mathbb{E}\!\left[A_{e,\alpha}A_{e,\alpha'}\right]
=
\frac{1}{m(m+1)(m+2)(m+3)}
\begin{cases}
m^2+m+10, & \alpha=\alpha',\\
-(m-5), & |\alpha\cap\alpha'|=1,\\
4, & \alpha\cap\alpha'=\varnothing,
\end{cases}
\]
which is exactly what is required in \Cref{lm:expectation-weingarten}.
\end{proof}

\begin{proof}[Proof of \Cref{lm:variance-weingarten}]
Fix \(\alpha,\alpha'\) and put
\[
X_e:=A_{e,\alpha}A_{e,\alpha'},
\qquad
(A^T A)_{\alpha,\alpha'}=\sum_e X_e.
\]
By invariance of Haar measure under permutations of rows and columns, the quantity
\(\mathbb{E}X_eX_f\) depends only on
\[
r:=|e\cap f|,
\qquad
s:=|\alpha\cap\alpha'|.
\]
We denote this quantity by \(M_{rs}\). Expanding \(X_eX_f\) gives sixteen monomials of 
degree \((8,8)\). We again apply \eqref{eq:unitary-weingarten} and \eqref{eq:weingarten-character}  with \(d=8\), computing numerically the relevant cycle types and  Weingarten functions. Let
\[
Q_m
:=
m^2(m+1)^2(m+2)^2(m+3)(m+4)(m+5)(m+6)(m+7).
\]
Then:
\begin{equation}
\begin{array}{c|ccc}
Q_m M_{rs} & s=2 & s=1 & s=0\\
\hline
r=2
&
\dfrac{96P_{22}}{m-1}
&
\dfrac{16P_{21}}{m-1}
&
\dfrac{4P_{20}}{m-1}
\\[2mm]
r=1
&
\dfrac{16P_{21}}{m-1}
&
-\dfrac{4P_{11}}{(m-1)^2}
&
\dfrac{8P_{10}}{(m-2)(m-1)^2}
\\[2mm]
r=0
&
\dfrac{4P_{20}}{m-1}
&
\dfrac{8P_{10}}{(m-2)(m-1)^2}
&
\dfrac{8P_{00}}{(m-3)(m-2)^2(m-1)^2}
\end{array}
\label{eq:weingarten-order-eight-table}
\end{equation}
where:
\begin{align}
P_{22}
&=
m^4+10m^3+113m^2+440m+876,
&
P_{21}
&=
m^4+12m^3+107m^2+252m+540,
\nonumber\\
P_{20}
&=
m^4+14m^3+99m^2+230m+696,
&
P_{11}
&=
m^4+239m^2+324m+444,
\nonumber\\
P_{10}
&=
m^4+2m^3+175m^2+182m+200,
&
P_{00}
&=
19m^4-166m^3+857m^2+994m+648.
\label{eq:weingarten-order-eight-polynomials}
\end{align}
We now sum over the possible relative positions of \(e\) and \(f\). The numbers of ordered pairs \((e,f)\) satisfying \(|e\cap f|=k\) are $N_k = \binom{m}{2}\binom{m-k}{k}$. Let \(\mu_s:=\mathbb{E}X_e\), whose value is given by
\eqref{eq:single-row-pair-expectation}. It follows that:
\begin{equation}
\operatorname{Var}(A^T A)_{\alpha,\alpha'}
=
N_2M_{2s}+N_1M_{1s}+N_0M_{0s}
-
\binom{m}{2}^{\!2}\mu_s^2.
\label{eq:variance-from-weingarten}
\end{equation}
Substituting \eqref{eq:weingarten-order-eight-table} and simplifying, define
\[
H_m
:=
(m+1)^2(m+2)^2(m+3)^2(m+4)(m+5)(m+6)(m+7).
\]
We obtain:
\begin{align}
\operatorname{Var}(A^T A)_{\alpha,\alpha}
=
\begin{cases}
\frac{2S_2(m)}{m^2H_m}, & \alpha=\alpha'\\
\frac{S_1(m)}{m^2(m-1)H_m}, & \vert \alpha \cap \alpha' \vert=1\\
\frac{2S_0(m)}{m^2(m-2)(m-1)H_m}, & \alpha \cap \alpha' = \emptyset
\end{cases}
\label{eq:exact-entrywise-variances}
\end{align}
where:
\begin{align}
S_2(m)
&=
2m^7+37m^6+439m^5+2319m^4+8627m^3
+19716m^2+3116m-42000,
\nonumber\\
S_1(m)
&=
5m^7+101m^6+452m^5+2102m^4+6161m^3
-3299m^2-19354m+21000,
\nonumber\\
S_0(m)
&=
m^8+18m^7+115m^6+561m^5+906m^4-1283m^3
+2298m^2+11272m-13440.
\label{eq:variance-polynomials}
\end{align}
For \(m\geq 8\), a direct comparison of the numerators and denominators gives:
\[
V_2\leq \frac{4}{m^5},
\qquad
V_1\leq \frac{5}{m^6},
\qquad
V_0\leq \frac{5}{m^6}.
\]
Indeed, after clearing the positive denominators, the numerator of
\(4m^{-5}-V_2\) has positive coefficients as a polynomial in \(m\), while the
numerators of \(5m^{-6}-V_1\) and \(5m^{-6}-V_0\) have positive coefficients after
writing \(m=8+t\). This proves \Cref{lm:variance-weingarten}.
\end{proof}

\section{Analytical expression for the correlators}\label{sec: Correlator derivation}

To derive the particle numbers correlators we make use of the same approach described in \cite{anguitaExperimentalValidationBoson2025,robbio2026complementaritybosonicfermionicmanybody}, by introducing the moment generating function defined as
\begin{equation}
    \mathcal{M}(\boldsymbol{y})=\perm\left(V^{\dagger}e^{Y}V\odot G\right)
\end{equation}
where $Y=\operatorname{diag}(y_{1},...,y_{m})$ and $V\in \mathbb{C}^{m \times n}$ made by the first $n$ column of $U$, such that $V^{\dagger}e^{Y}V\in \mathbb{C}^{n\times n}$. The moment generating function satisfies the property
\begin{equation}
    \frac{\partial^{|\boldsymbol{k}|}}{\partial y_{1}^{k_{1}}\dots \partial y_{m}^{k_{m}}} \mathcal{M}(\boldsymbol{y}){\bigg |}_{\boldsymbol{y}=0}=\langle \hat{n}_{1}^{k_{1}}\dots\hat{n}_{m}^{k_{m}}\rangle
\end{equation}
where $\boldsymbol{k}=(k_{1},...,k_{m})\in \mathbb{N}^{m}$ and $|\boldsymbol{k}|=\sum_{i}k_{i}$. To make calculation simpler we rewrite the above as
\begin{align}
    \mathcal{M}(\boldsymbol{y})&=\perm\left(I-I+V^{\dagger}e^{Y}V\odot G\right)\\
    &=\perm\left(I-V^{\dagger}IV\odot G+V^{\dagger}e^{Y}V\odot G\right)\\
    &=\perm\left(I+V^{\dagger}(e^{Y}-I)V\odot G\right) \quad .
\end{align}
This rewriting allows us to use two different properties. First we recall the following result ~\cite{minc1984permanents,grinberg2022principal} for any matrix $\Gamma\in \mathbb{C}^{m\times m}$:
\begin{equation}\label{eq: perm expansion}
    \perm(I+\Gamma)=\sum_{\boldsymbol{j}\in \{0,1\}^{m}}\perm(\Gamma^{(\boldsymbol{j},\boldsymbol{j})}) \quad .
\end{equation}
Then we make use of the fact that the function $f(y)=e^{y}-1$ satisfies
\begin{equation}
    \frac{d^{k}}{dy^{k}}(e^{y}-1)^{l}=l!S(k,l),
\end{equation}
where the $S(k,l)$ are the so-called Stirling numbers of the second kind~\cite{OEISA008277}, which satisfy $S(k,l)=0$ if and only if $l>k$. This means that if we are interested in the $k$-th derivatives of the functions $\perm(I+\Gamma)$ in the calculation of $k$-th moments, we can limit ourselves to terms of order up to $k$ in the corresponding series. So for the following we define
\begin{equation}
    \Gamma_{ij}=G_{ij} \sum_{k=1}^{m} U_{ki}^* U_{kj} (e^{y_{k}}-1)
\end{equation}
and then proceed with the direct calculations. We obtain
\begin{align}\label{eq:correlators2-appendix}
    \langle \hat{n}_{i}\hat{n}_{j}\rangle &= \delta_{ij} \sum_{k=1}^n |U_{ik}|^2 + \sum_{k \neq l} \left( |U_{ik}|^2 |U_{jl}|^2 + |G_{kl}|^2 U_{ik}^* U_{il} U_{jl}^* U_{jk} \right) \quad .
\end{align}
The third order is more complicated to derive, but follows the same approach. The final result is the following:
\begin{equation}
\begin{aligned}
\langle \hat{n}_{a}\hat{n}_{b}\hat{n}_{c}\rangle
&=\sum_{r,s,t \ \textit{distinct}}^{n}
\Big[
|U_{r a}|^2\,|U_{s b}|^2\,|U_{t c}|^2+\,|U_{r a}|^2\;U_{s b}\,\overline{U_{t b}}\;U_{t c}\,\overline{U_{s c}}\;G_{st}G_{ts} \\
&+\,U_{r a}\,\overline{U_{s a}}\;U_{s b}\,\overline{U_{r b}}\;|U_{t c}|^2\;G_{rs}G_{sr} 
 +\,U_{r a}\,\overline{U_{t a}}\;|U_{s b}|^2\;U_{t c}\,\overline{U_{r c}}\;G_{rt}G_{tr} \\
 &+\,U_{r a}\,\overline{U_{s a}}\;U_{s b}\,\overline{U_{t b}}\;U_{t c}\,\overline{U_{r c}}\;G_{sr}G_{rt}G_{ts}+\,U_{r a}\,\overline{U_{t a}}\;U_{s b}\,\overline{U_{r b}}\;U_{t c}\,\overline{U_{s c}}\;G_{st}G_{tr}G_{rs}\Big].
\end{aligned}
\end{equation}

\begin{equation}
\begin{aligned}
\langle \hat{n}_{a}^{2}\hat{n}_{b}\rangle&=\sum_{r\neq s}^{n}
\Big[
|U_{r a}|^2|U_{s b}|^2
+
U_{r a}\overline{U_{s a}}\overline{U_{r b}}U_{s b}G_{rs}G_{sr}
\Big] \\
&\quad +\sum_{r,s,t \ \textit{distinct}}^{n}
\Big[
|U_{r a}|^2\,|U_{s a}|^2\,|U_{t b}|^2 
+|U_{r a}|^2\;U_{s a}\,\overline{U_{t a}}\;U_{t b}\,\overline{U_{s b}}\;G_{st}G_{ts} \\
&\qquad +\,|U_{r a}|^2\,|U_{s a}|^2\,|U_{t b}|^2\;G_{rs}G_{sr}
+\,|U_{s a}|^2\;U_{r a}\,\overline{U_{t a}}\;U_{t b}\,\overline{U_{r b}}\;G_{sr}G_{rt}G_{ts} \\
&\qquad +\,|U_{r a}|^2\;U_{s a}\,\overline{U_{t a}}\;U_{t b}\,\overline{U_{s b}}\;G_{tr}G_{rs}G_{st}
+\,|U_{s a}|^2\;U_{r a}\,\overline{U_{t a}}\;U_{t b}\,\overline{U_{r b}}\;G_{rt}G_{tr}
\Big].
\end{aligned}
\end{equation}

\begin{equation}
\begin{aligned}
\langle \hat{n}_{a}^{3}\rangle&=
\sum_{r,s,t \ \textit{distinct}}^{n}
|U_{r a}|^2\,|U_{s a}|^2\,|U_{t a}|^2 \Big[ 1+G_{st}G_{ts} +
G_{rs}G_{sr} + G_{rt}G_{tr} +
G_{ts}G_{sr}G_{rt}+ 
G_{rt}G_{ts}G_{sr}
\Big] \\
&\quad +\,3\sum_{r,s \ \textit{distinct}}^{n}
|U_{r a}|^2\,|U_{s a}|^2
\;\Big[1+G_{rs}G_{sr} 
\Big]
+\,\sum_{r=1}^{n}|U_{r a}|^2.
\end{aligned}
\end{equation}

Notice that, as discussed in the main text, the correlator of order $k$ is expressible in terms of to Bargmann invariants of order $k$ or lower. The above relation can be generalized to a product of mixed states just by substituting $G_{ij}G_{ji}$ with $\Tr{\rho_{i}\rho_{j}}$ and similarly for the third order.
\section{Certification of Indistinguishability}\label{Sec: Certification}

The quantification of indistinguishability is a task that has sparked significant interest in the context of photonic quantum computing since the majority of protocols assume indistinguishable particles, and thus understanding how indistinguishable a set of photons is, allows one to quantify the deviation from ideal behavior. In recent years many different measures of it have been proposed, including in Refs.~\cite{sanzExponentialImprovementBenchmarking2026,Brod_GI,schadow2026certification}, but here we focus on the one discussed in \cite{englbrechtIndistinguishabilityIdenticalBosons2024,OptimalInterferometricCertification}. In these references, it was shown that given a state $\Omega_{ext}(\rho)$, the fidelity to the closest indistinguishable state can be written as 
\begin{equation}
    F_{ind}(\rho)=\Tr{\Pi_{\mathrm{sym}}\Omegaext(\rho)} 
\end{equation}
and similarly if $\Omegaext(\psi_{ind})$ is the state of indistinguishable particles closest to $\Omega_{ext}(\rho)$, then $\frac{1}{2}\lVert \Omegaext(\psi_{ind}) - \Omegaext(\rho)\rVert_{1}=1-F_{ind}(\rho)$. So the task of quantifying indistinguishability reduces to measuring $F_{ind}$, for which \cite{OptimalInterferometricCertification} offers a randomized protocol which can estimate it up to additive error. For the case of pure states such a quantity reduces to 
\begin{equation}
    F_{ind}(\psi)=\frac{\operatorname{perm}(G)}{n!}.
\end{equation}
Thus is reasonable to ask how much the estimation error on $G$ affects such a measure.
\begin{proposition}
    Let $\lVert G-G'\rVert_{2}\leq \epsilon$ then we have that 
    \begin{equation*}
        \left|\frac{\operatorname{perm}(G)}{n!}-\frac{\operatorname{perm}(G')}{n!}\right|\leq \frac{1}{2}\sqrt{n}\epsilon
    \end{equation*}
\end{proposition}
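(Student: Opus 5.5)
The plan is to express $\operatorname{perm}(G)/n!$ as an expectation value on the external state and then apply Proposition~\ref{prop:tvd-continuity}. From \eqref{eq:omegaext-gram}, the fidelity with the symmetric subspace is $F_{ind}(\psi)=\Tr{\Pi_{\mathrm{sym}}\Omegaext(\psi)}$. In the permutation basis $\ket{\sigma}$ spanning the external space, the symmetric projector is
\begin{equation*}
\Pi_{\mathrm{sym}}=\frac{1}{n!}\sum_{\sigma,\tau}\ket{\sigma}\bra{\tau}.
\end{equation*}
Substituting this into \eqref{eq:omegaext-gram} gives
\begin{equation*}
\Tr{\Pi_{\mathrm{sym}}\Omegaext(\psi)}=\frac{1}{(n!)^2}\sum_{\sigma,\tau}\prod_i G_{\sigma(i),\tau(i)} .
\end{equation*}
Reindexing via $\pi=\tau\sigma^{-1}$ turns each term into $\prod_j G_{j,\pi(j)}$, so the double sum equals $n!\operatorname{perm}(G)$. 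Hence $F_{ind}(\psi)=\operatorname{perm}(G)/n!$, consistent with the formula quoted just before the statement. The same identity holds for any representative of the equivalence class $\equivclass(\psi)$, since the permanent is invariant under $G\mapsto D^\dagger G D$.

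Next I would apply the operational bound: since $0\le\Pi_{\mathrm{sym}}\le I$, for any two states
\begin{equation*}
|\Tr{\Pi(\Omega-\Omega')}|\le\tfrac12\|\Omega-\Omega'\|_1 .
\end{equation*}
This is the two-outcome POVM $\{\Pi_{\mathrm{sym}},I-\Pi_{\mathrm{sym}}\}$ case of Corollary~\ref{cor: total variation distance}. Taking $\Omega=\Omegaext(\psi)$ and $\Omega'=\Omegaext(\psi')$ with $G$ and $G'$ the Gram matrices of $\psi$ and $\psi'$, Proposition~\ref{prop:tvd-continuity} gives
\begin{equation*}
\tfrac12\|\Omegaext(\psi)-\Omegaext(\psi')\|_1\le\tfrac{\sqrt n}{2}\,d_2(\equivclass(\psi),\equivclass(\psi'))\le\tfrac{\sqrt n}{2}\|G-G'\|_2\le\tfrac{\sqrt n}{2}\epsilon .
\end{equation*}
This is exactly the claimed bound.

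The only point needing care is that $G'$ must itself be a valid Gram matrix, i.e.\ positive semidefinite with unit diagonal, so that $\psi'$ exists and Proposition~\ref{prop:tvd-continuity} applies. In the tomography setting this holds because the protocol projects $\widehat G$ onto correlation matrices. I would state this assumption explicitly, since for arbitrary matrices $G'$ the bound is not claimed. The main obstacle is really just the permanent identity, which is a one-line reindexing of the double sum.
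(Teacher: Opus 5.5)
Your proof is correct and follows the same route as the paper. You identify $\operatorname{perm}(G)/n!$ with $F_{ind}(\psi)=\operatorname{Tr}[\Pi_{\mathrm{sym}}\Omegaext(\psi)]$, bound the difference of fidelities by half the trace distance of the external states, and then invoke \Cref{prop:tvd-continuity}. Your version is also more careful in three places: you verify the permanent identity directly, you make explicit that $G'$ must be a genuine correlation matrix, and you use the valid inequality $|\operatorname{Tr}[\Pi_{\mathrm{sym}}(\Omega-\Omega')]|\le\frac12\|\Omega-\Omega'\|_1$ where the paper writes that step as an equality, which does not hold in general.
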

\begin{proof}
    Starting by the relation between Fidelity and trace distance discussed above, we have that
    \begin{equation}
        \left|\frac{\operatorname{perm}(G)}{n!}-\frac{\operatorname{perm}(G')}{n!}\right|=\left|F_{ind}(\psi)-F_{ind}(\psi')\right|=\frac{1}{2}\lVert\Omega_{ext}(\psi)-\Omega_{ext}(\psi')\rVert_{1}\leq \frac{\sqrt{n}}{2}\lVert G-G'\rVert_{2}
    \end{equation}
\end{proof}
So reconstructing the Gram matrix, is sufficient to obtain (up to additive error $\sqrt{n}\epsilon$) an estimate of how close $\Omegaext(\psi)$ is to a state of indistinguishable particles. It is important to clarify that although permanents are involved in the calculation of $F_{ind}(\psi)$, it does not have an exponential complexity in terms of computation up to additive error. In particular one can rewrite it as 
\begin{equation}
    \frac{\operatorname{perm}(G)}{n!}=\frac{1}{n!}\sum_{\sigma \in S_{n}}\prod_{i=1}^{n}G_{i,\sigma_{i}}=\mathbb{E}\left[\prod_{i=1}^{n}G_{i,\sigma_{i}}\right]_{\sigma \in S_{n}}
\end{equation}
which can be understood as an expectation value, for which we can apply Hoeffding's inequality. This result in the fact that $F_{ind}(\psi)$ can be computed up to error $\epsilon_{1}$ and success probability $1-\delta$ in $\mathcal{O}(n\epsilon_{1}^{-2}\ln\delta^{-1})$, where the $n$ factor comes from the computation of $\prod_{i=1}^{n}G_{i,\sigma_{i}}$. By choosing $\epsilon_{1}\ll \sqrt{n}\epsilon$, we can guarantee that 
\begin{equation}
    \left|\frac{\operatorname{perm}(G')}{n!}-F_{ind}(\psi)\right|\leq \frac{1}{2}\sqrt{n}\epsilon
\end{equation}
which allows us to certify how indistinguishable the state is.

\section{Reducing the size of the system}\label{sec: size reduction}

The inversion of the full third-order linear system introduced in Sec.~\ref{sec:relational-info} involves all third-order Bargmann invariants,
\begin{equation}
    A_3 x_3=b_3,
\end{equation}
where the entries of $x_3$ are the real and imaginary parts of $G_{rs}G_{st}G_{tr} \ ,\ 1\leq r<s<t\leq n$. While the full set of third-order invariants is sufficient for the reconstruction, the Gram-matrix reconstruction procedure of Sec.~\ref{sec:protocol-main} only requires the Bargmann invariants associated with paths in a chosen spanning tree. This motivates restricting the inversion to the corresponding subset of parameters.

Let $T$ be the optimal spanning tree of the graph $\Gamma$ introduced in Sec.~\ref{sec:tree}. We partition the vector of third-order invariants as

\begin{equation}
    x_3=
\begin{pmatrix}
x_T\\
x_R
\end{pmatrix},
\end{equation}
where $x_T$ contains the third-order invariants that we retain for the tree-based reconstruction and $x_R$ contains all remaining third-order invariants. Correspondingly, we write
\begin{equation}
    A_3=\begin{pmatrix}A_T & A_R\end{pmatrix},
\end{equation}
so that the measurement equations take the form

\begin{equation}
    A_Tx_T+A_Rx_R=b_3.
\end{equation}
Here $A_T$ and $A_R$ are formed simply by selecting the columns of $A_3$ corresponding to $x_T$ and $x_R$, respectively. The quantities in $x_R$ are nuisance parameters for the purpose of reconstructing the tree-associated invariants. To eliminate their contribution, we use the orthogonal projector onto the complement of the column space of $A_R$,

\begin{equation}
    P_R^\perp=I-A_R A_R^{+},
\end{equation}
where $A_R^{+}$ denotes the Moore--Penrose pseudoinverse. By construction, we have $P_R^\perp A_R=0$. Applying this projector to the measurement equation therefore gives
\begin{equation}
    P_R^\perp A_Tx_T=P_R^\perp b_3.
\end{equation}

Thus the unwanted third-order invariants are removed without requiring their explicit reconstruction. Defining
\begin{equation}
    A_T^{\mathrm{eff}}:=P_R^\perp A_T, \qquad b_T^{\mathrm{eff}}:=P_R^\perp b_3,
\end{equation}

we obtain the reduced linear system $A_T^{\mathrm{eff}}x_T=b_T^{\mathrm{eff}}$. The tree-associated Bargmann invariants can then be estimated as
$\widehat{x}_T=\left(A_T^{\mathrm{eff}}\right)^{+}b_T^{\mathrm{eff}}$.

The conditioning of the reduced inversion is governed by

\begin{equation}
\kappa\!\left(A_T^{\mathrm{eff}}\right)=\frac{\sigma_{\max}\!\left(A_T^{\mathrm{eff}}\right)}
{\sigma_{\min}\!\left(A_T^{\mathrm{eff}}\right)},
\end{equation}
provided that $A_T^{\mathrm{eff}}$ has full column rank. This quantity should be compared directly with the condition number of the full matrix $A_3$. In particular, projecting does not necessarily worsen the conditioning: if the poorly conditioned directions of the full system are associated predominantly with the discarded invariants, the effective matrix $A_T^{\mathrm{eff}}$ can be better conditioned than $A_3$.

\end{document}